\colorlet{myPurple}{blue!40!red}
\colorlet{myPurplee}{blue!10!red}
\colorlet{myCyan}{cyan!80!gray}
\colorlet{myRed}{red!90!black}
\theoremstyle{plain}
\newtheorem{thm}{Theorem}
\newtheorem{lem}{Lemma}
\newtheorem{refexp}{Reference Experiment}
\newtheorem{prop}{Proposition}
\newtheorem{defi}{Definition}
\newcommand{\figtikzscale}{.85}
\newcommand{\X}{\mathsf{X}}
\newcommand{\Z}{\mathsf{Z}}
\begin{document}

\title{Self-Testing Graph States Permitting Bounded Classical Communication}
\author{Uta Isabella Meyer}
\author{Ivan Šupić}
\author{Frédéric Grosshans}
\author{Damian Markham}
\affiliation{Sorbonne Université, CNRS, LIP6, F-75005 Paris, France}

\begin{abstract}
Self-testing identifies quantum states and correlations that exhibit nonlocality, distinguishing them, up to local transformations, from other quantum states.
Due to their strong nonlocality, it is known that all graph states can be self-tested in the standard setting - where parties are not allowed to communicate.
Recently it has been shown that graph states display nonlocal correlations even when bounded classical communication on the underlying graph is permitted, a feature that has found applications in proving a circuit-depth separation between classical and quantum computing.
In this work, we develop self-testing in the framework of bounded classical communication, and we show that certain graph states can be robustly self-tested even allowing for communication. In particular, we provide an explicit self-test for the circular graph state and the honeycomb cluster state - the latter known to be a universal resource for measurement based quantum computation.
Since communication generally obstructs self-testing of graph states, we further provide a procedure to robustly self-test any graph state from larger ones that exhibit nonlocal correlations in the communication scenario.
\end{abstract}

\maketitle

\section{Introduction}

Bell nonlocality describes a phenomenon in which entangled systems display correlations that elude classical explanations~\cite{bell,bellreview}.
Such correlations are exposed in the so-called Bell scenario where two or more non-communicating parties receive some classical inputs and return classical outputs.
A description of the outputs' correlations through local-hidden-variable models (LHV) leads to the formulation of a Bell inequality
whose violation implies that the involved parties share some non-classical resources.
Thus, the Bell scenario is considered a prime stage for witnessing quantumness.

\paragraph*{Self-testing \& Communication.}
Using correlations to certify states and operations, commonly referred to as `device-independence', has been a productive direction over the past two decades~\cite{colbeck,colbeckkent,randomnessreview,pironio2010random,Acin2006,Acin2007}.
Any Bell-inequality violation witnesses entanglement and LHV models cannot reproduce all Bell-nonlocal correlations, but some are uniquely tied to a particular state (up to well-defined transformations).
When correlations are unattainable by LHV models and by any other quantum state, the state is said to be self-tested~\cite{mayers98,mayers2003self,Supic2020selftestingof}.
Observing such correlations yields a device-independent certificate of that state.
Beyond entanglement witnessing, many quantum-information protocols are device-independent within the Bell scenario~\cite{VV,arnon2018practical,gallego2010device,coladangelo2019verifier,reichardt2013classical,ribeiro2018fully}.

All device-independent protocols rely on violations of Bell inequalities, which necessitates spatial separation and prohibits communication among the participating parties.
This requirement can be alleviated by ensuring that the parties are shielded, preventing the transmission of information regarding the local Bell input to any other party.
Nevertheless, fulfilling this condition poses a considerable challenge, particularly in scenarios involving the creation of multipartite states on integrated photonics platforms~\cite{vallée2023corrected,fyrillas2023certified}.

An intriguing question arises regarding the possibility of self-testing when the absence of communication among parties cannot be guaranteed.
To date, this scenario has not received any attention, with a notable exception being~\cite{murta2023self}, showing that if $N$ parties are split into $k$ clusters - each possibly harboring dishonest parties - one can still self-test $k$-partite GHZ states.
This outcome aligns with the intuition that within a Bell scenario involving $N$ parties, any form of communication effectively diminishes the certifiable resource size.
Indeed, as elucidated in~\cite{murta2023self}, only a quantum resource shared among non-communicating clusters is susceptible to self-testing.
Our study challenges this intuition by demonstrating that certain quantum states can be fully self-tested even in scenarios where some degree of communication is permitted among subsets of parties, regardless of whether they are disjoint.

From a broader perspective, such a result underscores the significance of interaction, one of the most important concepts in computational complexity theory.
In essence, the Bell scenario resembles a form of multipartite interactive proof, wherein a verifier engages with multiple, ideally non-communicating provers~\cite{cleve2004consequences}.
While provers may conspire by engaging in communication among themselves, the verifier can also thwart dishonest behavior by sending inputs not used in the certification to foil cheating.
This notion of influential yet misguided interaction is essential to our work. 

\paragraph*{Graph States.}
Our main subjects of interest are graph states, which represent a multifaceted resource in quantum information processing.
Being multipartite states with regular entanglement structure and compact form, they proved to be useful in a wide range of quantum technologies, from measurement-based~\cite{1WQC} and fusion-based~\cite{FBQC} quantum computing to topological error correction~\cite{gottesman1997stabilizer,RAUSSENDORF20062242}, quantum metrology~\cite{Nathan}, and cryptography~\cite{damiansharessecrets,christandl2005quantum}.
The experimental realization of graph states has allowed for the demonstration of controlled genuine multipartite entanglement~\cite{lu2007experimental,schwartz2016deterministic,istrati2020sequential}, and the implementation of various quantum information protocols~\cite{bell2014experimental,bell2014experimentalsecret}.
Graph states are strong resources for Bell nonlocality as every entangled graph state admits local measurements that violate a Bell inequality~\cite{Guhne,Toth}.

Using graph states, the first example exploring nonlocality against classical communication was by Barrett et al.~\cite{barrett2007modeling}.
In particular, based on extensions of the triangle graph state (locally equivalent to the GHZ state), they presented correlations based on Pauli operators in a graph state that no LHV model, even assisted by a round of bounded classical communication, can reproduce.
In~\cite{Meyer2023}, this was extended to one based on any graph state by introducing the so-called inflated graph states.

\paragraph*{Applications.}
The work by Barrett et al.~\cite{barrett2007modeling} had application in the design of distributed computation protocols~\cite{gall2018quantum}, randomness extraction~\cite{coudron2018trading}, and on establishing an unconditional proof of a computational quantum advantage~\cite{bravyi2020quantum,bravyi2018quantum}.
Moreover, a bounded communication range is not merely a theoretical device.
Field trials of measurement-device-independent (MDI) repeaters and simple quantum switches impose microsecond-level latency, so a classical signal cannot traverse more than one or two fiber spans; present MDI-QKD links ($100$–$400$ km), MDI conference-key exchange, and photon-heralded entanglement all permit feed-forward only to the next node~\cite{Lo2012,Tang2015,Boaron2018,Wang2023}.
The very same fixed-range limitation underlies constant-depth classical circuits: after $d$ layers of $K$-fan-in gates, any output depends on inputs within neighbors, so information effectively propagates only a constant distance.
Quantum–classical separation proofs such as~\cite{barrett2007modeling,bravyi2018quantum} therefore model the classical adversary as a single round of nearest-neighbor communication.
Together the communication-assisted-LHV framework matches current hardware constraints and mirrors the standard classical benchmark in shallow-circuit complexity.

\paragraph*{Results.}

We extend self-testing to settings with communication.
In particular, we show that families of graph states can be self-tested from Pauli-measurement correlations that no LHV model can explain even using bounded classical communication.
Using techniques akin to non-communicating scenarios, we give two concrete experiments: one on large circular (ring) graph states and one on the honeycomb cluster state, a universal resource for quantum computation~\cite{van2006universal}.
For general graph states, however, the potential for deception by communicating devices renders the self-testing of such states impossible.
To address this challenge, we introduce a novel approach that expands any target into an inflated graph state; observing correlations beyond any communication-assisted LHV model then guarantees that subsequent local measurements on the added vertices yield a post-measurement state locally isometric to the original.

Although the presentation of the results might imply that graph inflation is an essential tool, their significance lies more in the context where a graph - or even a subgraph - can be regarded as inflated versions of smaller ones.
Under bounded communication, full self-testing of a large graph is often out of reach, but our results show that one can still self-test an appropriate \emph{reduced} graph state.
Graph states are especially well-suited to this, since they admit local transformations and can be cropped into smaller resource states for specific tasks.
Although our current results are tailored to specialized measurements on a narrow subset of inflated graph states, we expect that combining our procedures with local complementations and related graph operations will extend self-testing to larger vertex sets.
This enables intermediate self-testing regimes certifying more than the standard core but not necessarily the entire graph, thus bridging highly structured theoretical protocols with practical, scalable implementations. 

Returning to the standard no-communication setting, we also give a more robust self-test using Pauli measurements for any graph state containing a path of at least three connected vertices.
Combined with McKague~\cite{mckague2011self}, this verifies any graph state with such a path using local Pauli devices, with improved robustness over prior results.

The structure of the article is as follows:
We commence with an introduction to self-testing and graph states in Section~\ref{sec:selftestinggraphstates}.
In Section~\ref{sec:conceptselfcommunication}, we discuss the general characteristics of self-testing graph states with communication patterns determined by an underlying graph.
This is followed by technical definitions in Section~\ref{sec:selfinflated}, and the introduction of three types of self-testing correlations in Section~\ref{sec:refexps}.
Our first result is presented in Section~\ref{sec:maintheorem}, where we demonstrate the self-testing of graph states from inflated graph states, allowing for bounded communication.
In Section~\ref{sec:full}, we highlight the ability to entirely self-test the graph state of certain symmetric graphs, even in the presence of bounded communication.
Finally, we offer concluding discussions in Section~\ref{sec:conc}.

\section{Self-testing and Graph States}
\label{sec:selftestinggraphstates}

We begin with the standard Bell scenario where $n$ \textit{non-communicating} parties share a quantum state $\ket{\Psi}$.
Each party $i$ receives a classical input $k_i \in \{0,1,\cdots,m-1\}$ and performs a measurement described by an observable $\mathcal{A}_{k_i}$ with two outcomes.
The expectation values $\langle {\Psi} \vert \mathcal{M}_{k} \vert {\Psi} \rangle$ can be estimated, where $\mathcal{M}_{k} = \bigotimes_{i=1}^n\mathcal{A}_{k_i}$ with ${k} \in \{(0,1, \cdots, m-1)^n\}$.
A tuple $(\ket{\Psi},{\mathcal{M}_{k}})$ is termed a \emph{physical experiment} (PE).
The PE remains fully uncharacterized, with all information deriving solely from the observed measurement statistics.
The goal of \emph{self-testing} is to demonstrate that, under certain circumstances, the PE can be deemed equivalent to another perfectly characterized setup known as the \emph{reference experiment} (RE).
Given the inherent noise in real measurement devices, the self-test must be resilient to minor perturbations.
Informally, a self-testing procedure exhibits robustness if it can ascertain that the PE closely resembles the RE whenever it nearly reproduces the self-testing correlations.

When establishing a relationship between a PE and a RE, the initial step is to assess their comparability. They are deemed \textit{compatible} if they entail the same number of local devices and measurements.
A crucial subsequent condition is that the PE must simulate the RE.
For our purposes, we slightly generalize the standard definition~\cite{mckague2011self}, which matches only full-measurement expectations, to consider marginal correlations as well.
This aligns with the intuition that equivalent experiments should agree on all substatistics.
\begin{defi}[Submeasurements]\label{def:submeasurement}
Given a measurement $\mathcal{M}_{k} = \bigotimes_{i=1}^n\mathcal{A}_{k_i}$ with ${k} \in \{(0,1, \cdots, m-1)^n\}$, a valid submeasurement $\mathcal{C}_{k}$ of $\mathcal{M}_k$ is defined as a tensor product $\mathcal{C}_{k} = (\mathcal{M}_k)_I = \bigotimes_{i \in I}\mathcal{A}_{k_i}$ where $I \subseteq \{1,\dots ,n \}$. 
\end{defi}
\begin{defi}[$\epsilon$-Simulation] \label{def:simrob}
Given a physical experiment $( \vert \Psi \rangle\,, \{ \mathcal{M}_{k} \})$ and a compatible reference experiment $( \vert \psi \rangle\,, \{ M_{k} \})$, the physical experiment $\epsilon$-simulates the reference experiment if, for all ${k}$, 
\begin{equation} \vert \langle \psi \vert C_{k} \vert \psi \rangle - \langle \Psi \vert \mathcal{C}_{k} \vert \Psi \rangle \vert \leq \epsilon \,, \label{eq:robdef0}\end{equation}
for all submeasurements $C_k(\mathcal{C}_{k})$ of $M_{k}(\mathcal{M}_{k})$.
\end{defi}
Features not fixed by Bell experiment, such as global rotations and extra degrees of freedom, are captured by local isometries.
\begin{defi}[$\delta$-Equivalence] \label{def:equirob} 
Let $( \vert \Psi \rangle\,, \{ \mathcal{M}_k \})$ be a physical experiment, where $\ket{\Psi}$ is a pure state on the Hilbert space $\bigotimes_{i=1}^n\mathcal{H}_i$, and $\mathcal{M}_k = \bigotimes_{i=1}^n \mathcal{A}_{k_i}$, where $\mathcal{A}_{k_i}$ are Hermitian and unitary operators on $\mathcal{H}_i$.
Given a compatible reference experiment $( \vert \psi \rangle\,, \{ M_k \})$, where $\ket{\psi}$ is a pure state on $\bigotimes_{i=1}^n\mathcal{H}'_i$ and $M_k = \bigotimes_{i=1}^n A_{k_i}$, which simulates the physical experiment.
We say that the reference experiment is $\delta$-equivalent to the physical experiment if there exist a valid quantum state $\vert \mathit{junk} \rangle$ on Hilbert space $\bigotimes_{i=1}^n\mathcal{H}''_i$ and local isometries $\{\Phi_i:\mathcal{H}_i\rightarrow \mathcal{H}^\prime_i\otimes\mathcal{H}^{\prime \prime}_i \}_{i=1}^n$ constituting $\Phi = \bigotimes_{i=1}^n\Phi_i$, such that
\begin{align} \Vert \Phi \left( \vert \Psi \rangle \right) &-\vert \psi \rangle \otimes \vert \mathit{junk}\rangle \Vert_1 \leq \delta \,,\label{ali:robdef1}\\ \Vert \Phi \left( \mathcal{A}_{k_i} \vert \Psi \rangle \right) &- A_{k_i} \vert \psi \rangle \otimes \vert \mathit{junk}\rangle \Vert_1 \leq \delta \,,\,\,\forall {k_i} \,.\label{ali:robdef2}\end{align}
\end{defi}
\begin{defi}[Robust Self-testing] \label{def:selfrob}
We say that the set of correlations $\{\langle \psi \vert C_{k} \vert \psi \rangle\}_{k}$ robustly self-test the reference experiment $( \vert \psi \rangle\,, \{ C_{k} \})$ if, for every physical experiment that $\epsilon$-simulates the reference experiment, there exists a function $\delta (\epsilon)$ continuous at $0$ such that the reference experiment is $\delta$-equivalent to the physical experiment.
\end{defi}

In the following, we focus on self-testing graph states with observables corresponding to Pauli or Clifford measurements.
A graph state $\ket{G}$ is uniquely defined by a mathematical undirected graph $G=(V,E)$, a set of vertices $V$ and edges $E \subseteq V \times V$.
Every vertex represents a qubit and the graph state is defined by \begin{equation} \vert G \rangle = \prod_{(u,v) \in E} \text{C}Z_{(u,v)} \vert + \rangle^{\otimes V}\,,\end{equation} with the Pauli $X$ eigenstate $\vert + \rangle = ( \vert 0 \rangle + \vert 1 \rangle)/\sqrt{2}$ and the entangling gate $\text{C}Z = (\vert 0 \rangle \langle 0 \vert \otimes \mathds{1} + \vert 1 \rangle \langle 1 \vert \otimes Z)$ and the Pauli $Z$ operator.
The graph state $\ket{G}$ is the simultaneous $(+1)$-eigenstate of an Abelian subgroup of the Pauli group, generated by $g_u = X_u \bigotimes_{v \in N(u)} Z_v$ for every vertex $u$ and its neighbors $N(u) = \{v : (u,v) \in E\}$.
Any stabilizer element can be written as $S = \prod_{u\in U} g_u$ for some subset $U\subseteq V$.

The stabilizers constrain any Pauli measurement's local outcomes.
For any non-trivially connected graph, the resulting correlations are incompatible with any local-hidden-variable model~\cite{Guhne,scarani2005nonlocality}.
McKague established robust self-tests for all connected graph states with bounds $\delta(\epsilon)=\mathcal{O}(\sqrt{\epsilon})$ or $\mathcal{O}(\epsilon^{1/4})$~\cite{mckague2011self,McKague2016}, implicitly using the proposition below.
\begin{prop}\label{prop:selftestgraphstates}
For self-testing graph states, it suffices that the PE:
\begin{enumerate}
\item[(i)] ($\epsilon$-)certify a qubit locally: two binary local observables ($\epsilon$-)anticommute on the state.
\item[(ii)] These observables ($\epsilon$-)reproduce the graph state's stabilizers: at least a generating set (e.g.\@ the generator elements).
\end{enumerate}
\end{prop}
Condition (i) is enforced by the nonlocal correlations and the simulation condition (Def.\,\ref{def:simrob}), which together imply local anticommutation on the PE's state.
For condition (ii), the RE tests the generator elements $g_u$, and the same observables show to reproduce their correlations.
With these properties, one constructs a local isometry that implements a SWAP gate between the PE's state and an auxiliary qubit for every vertex~\cite{mayers98} - its circuit description is shown in Fig.\,\ref{fig:swapcirc}.
The resulting state on the auxiliary qubit is proven to be $(\delta$-)equivalent to the RE graph state.
In Appendix~\ref{app:line}, we provide an self-test for a three-vertex line using Greenberger-Horn-Shimony-Zeilinger (GHSZ)-correlations~\cite{greenberger1990bell}.
This new self-test has robustness of $\delta(\epsilon) \sim \mathcal{O}(\sqrt{\epsilon})$, with a slight improvement in scaling for some cases compared to~\cite{mckague2011self,McKague2016}.
To self-test all graph states, the remaining case of a pair of vertices and those where the graph consists only of triangles, are already covered in~\cite{mckague2011self}.
We use these ideas in the following to show that some graph states can be self-tested even if some classical communication is permitted, which lifts the devices' strict locality.

\begin{figure}
\centering
\begin{quantikz}
\lstick{$\ket{ 0}$}& \gate{H} & \ctrl{1} & \gate{H} & \ctrl{1} & \qw \\
\lstick{$\ket{\psi}$}& \qw & \gate{Z} & \qw & \gate{X} &\qw\\
\end{quantikz}
\caption{Circuit implementing a SWAP gate between $\vert \psi \rangle$ and $\vert 0 \rangle$ using the Hadamard gate $H$ and Pauli operators $X,Z$.}
\label{fig:swapcirc}
\end{figure}

\section{Self-Testing Graph States Against Bounded Classical Communication}

This section states our main results; Subsection~\ref{sec:conceptselfcommunication} adapts self-testing to the setting with a single round of classical communication.
Subsection~\ref{sec:selfinflated} introduces inflated graphs and their inflated generator elements, and Subsection~\ref{sec:example} gives an explicit example of these in terms of a nonlocal paradox.
Subsection~\ref{sec:refexps} presents three reference experiments (REs) covering all graph states and useful for robustness.
Subsection~\ref{sec:maintheorem} gives a general self-test via inflation and deflation.
Finally, Subsection~\ref{sec:full} shows that certain symmetric graph states can be self-tested entirely even with bounded communication.

\subsection{Self-testing against communication}
\label{sec:conceptselfcommunication}

We extend the concepts in Section~\ref{sec:selftestinggraphstates} to one round of communication between devices.
Integrating communication into the self-testing framework primarily entails precisely defining the physical experiment (PE), particularly determining the observables associated with the measurements performed.
In the usual setting, observables are indexed only by local inputs, but with communication, they might also depend on inputs received from neighbors within a communication range.

Following~\cite{barrett2007modeling,Meyer2023}, we let the allowed communication be dictated by the graph state; each device broadcasts its inputs to vertices within graph distance $d$.
The extension of LHV models to this communication setting is denoted as $d$-LHV$^\ast$ models.
Our tests contradict any $d$-LHV$^\ast$ model in the sense of~\cite{Meyer2023}, matching earlier non-classicality frameworks~\cite{barrett2007modeling}.

For the PE, let $G=(V,E)$ and suppose parties at vertices share $\ket{\Psi}$.
Under bounded communication the input to the binary observable $\mathcal{A}_{k^{(d)}_u}$ is the tuple $k^{(d)}_u=(k_u,(k_v)_{v\in N_d(u)})$.
These constitute a global measurement $\mathcal{M}_k=\bigotimes_{u\in V}\mathcal{A}_{k^{(d)}_u}$ with $k\in\{0,1,\dots,m-1\}^{|V|}$, including submeasurements $\mathcal{C}_k=(\mathcal{M}_k)_{U_k}$ for $U_k\subseteq V$.
Compared to the no-communication case, the local setting alphabet is larger.
When comparing a PE to an RE, we can, without loss of generality, index the RE’s measurements by the same input tuples $k$, even if the RE observables themselves are independent of communicated inputs.
Within this framework, we use Definitions~\ref{def:simrob} and~\ref{def:equirob} to analyze self-testing of a target graph state $\ket{G}$. We give two methods:

In the first approach, applicable to graphs exhibiting suitable symmetry, we specify an RE consisting of the graph state $\vert G\rangle$ and a set of local measurements.
We use these correlations to self-test the reference state, even when the devices are allowed to communicate up to distance $d$ on $G$.
Nonetheless, this method is contingent upon certain symmetrical properties, and its applicability to all graph states cannot hold.
This cannot hold for all graphs (e.g., a complete graph reveals all inputs to all parties, enabling classical spoofing).

The second approach is universally applicable to all graph states.
This method inflates the graph to $G'$, defines communication with respect to $G'$, and chooses measurements that both prepare $\vert G \rangle$ from $\vert G' \rangle$ by measuring the added chain vertices (deflation), see Fig.~\ref{fig:graphmap}, and furthermore provides correlations that self-test the resulting state
In this way, if a PE faithfully simulates this RE, the equivalence between the post-measurement (deflated) physical state and the target graph state can be established, even in scenarios permitting bounded communication among the parties involved.
In this sense, the protocol both prepares the target graph state, and self-tests it at the same time.
Since the first approach uses ideas and techniques from the second approach, we start by presenting the second, more general approach.

\subsection{Inflated graph states}
\label{sec:selfinflated}

\begin{figure}
\centering
\def \abovetext{\small Def.\:\ref{def:inflatedgraph}}
\def \belowtext{\small$\{X_{v} \}_{v \in V\setminus V^{\prime}}$}
\def \bbelowtext{\small \hspace{-0.25cm}with outcome $(x)$}

\begin{tikzpicture}[scale=0.8]

\begin{scope}[yshift=-0.5cm,xshift=-2.5cm,scale=0.7]

\node[draw=none] (0) at (-3.5,0) {\large$G^{\prime}$};
\node[draw=none] (0) at (3.5,0) {\large$G$};

\draw[->] (-1.5,-0.3) node[below=0.4cm,anchor=west]{\belowtext}node[below=0.8cm,anchor=west]{\bbelowtext} -- (2,-0.3);
\draw[<-] (-1.5,0.3) -- node[above=0.1cm]{\abovetext} (2,0.3);

\end{scope}
\begin{scope}[xshift=0cm,rotate=90,every node/.style={circle,draw}]

\node[line width=0.4mm,label={[label distance=0.05cm]270: $2$}] (1) at (1.5,6) {$\phantom{x}$};
\node (121) at (2.25,6) {};
\node (122) at (2.75,6) {};
\node (123) at (3.25,6) {};
\node (124) at (3.75,6) {};
\node[line width=0.4mm,label={[label distance=0.05cm]90: $3$}] (2) at (4.5,6) {$\phantom{x}$};

\node (241) at (4.5,5.25) {};
\node (242) at (4.5,4.75) {};
\node (243) at (4.5,4.25) {};
\node (244) at (4.5,3.75) {};

\node (131) at (1.5,5.25) {};
\node (132) at (1.5,4.75) {};
\node (133) at (1.5,4.25) {};
\node (134) at (1.5,3.75) {};

\node[line width=0.4mm,label={[label distance=0.05cm]270: $1$}] (3) at (1.5,3) {$\phantom{x}$};

\node (141) at (2.25,5.25) {};
\node (142) at (2.75,4.75) {};
\node (143) at (3.25,4.25) {};
\node (144) at (3.75,3.75) {};

\node[line width=0.4mm,label={[label distance=0.05cm]90: $4$}] (4) at (4.5,3) {$\phantom{x}$};

\draw[-] (1) -- (121); 
\draw[-] (121) -- (122);
\draw[-] (122) -- (123);
\draw[-] (123) -- (124);
\draw[-] (124) -- (2);

\draw[-] (2) -- (241); 
\draw[-] (241) -- (242);
\draw[-] (242) -- (243);
\draw[-] (243) -- (244);
\draw[-] (244) -- (4);

\draw[-] (1) -- (131);
\draw[-] (131) -- (132);
\draw[-] (132) -- (133);
\draw[-] (133) -- (134);
\draw[-] (134) -- (3);

\draw[-] (1) -- (141);
\draw[-] (141) -- (142);
\draw[-] (142) -- (143);
\draw[-] (143) -- (144);
\draw[-] (144) -- (4);

\end{scope}

\begin{scope}[xshift=3cm,yshift=0.75cm,rotate=90,scale=0.7,every node/.style={circle,draw}]

\node[line width=0.4mm,label={[label distance=0.05cm]270: $2$}] (1) at (1.5,6) {$\phantom{x}$};
\node[line width=0.4mm,label={[label distance=0.05cm]90: $3$}] (2) at (4.5,6) {$\phantom{x}$};
\node[line width=0.4mm,label={[label distance=0.05cm]270: $1$}] (3) at (1.5,3) {$\phantom{x}$};
\node[line width=0.4mm,label={[label distance=0.05cm]90: $4$}] (4) at (4.5,3) {$\phantom{x}$};

\draw[-] (1) -- (2);
\draw[-] (2) -- (4); 
\draw[-] (1) -- (3);
\draw[-] (1) -- (4);

\end{scope}

\end{tikzpicture}
\caption{A graph $G$ with four vertices (rhs) is mapped to an exemplary \emph{inflated} graph $G'$ for $d=2$ (lhs) by means of Def.~\ref{def:inflatedgraph}, i.e., by adding $2d$ `chain' vertices along the edges.
The inverse mapping `deflation' measures the chain vertices projectively (up to local corrections) to recover $\ket{G}$ from $\ket{G'}$.
In Theorem~\ref{theo:main}, the RE uses $\ket{G'}$ and local measurements to both prepare and self-test $\ket{G}$.}
\label{fig:graphmap}
\end{figure}

Following~\cite{Meyer2023}, inflating $G=(V,E)$ inserts $2d$ chain vertices along each edge; original vertices are `power' vertices.
\begin{defi}[Inflated graph]\label{def:inflatedgraph}
Given $G = (V,E)$, its $d$-inflated graph $G^{\prime}=(V^{\prime}, E^{\prime})$ replaces each edge with a connected chain of $2d$ vertices bridging the original edge.
The new vertices are $V^{\prime} = V \cup W$ with $W = \{ r_{(u,v)} ;\,r=1,\cdots, 2d, (u,v)\in E \}$, and the new edges are $(u,1_{(u,v)}),(1_{(u,v)},2_{(u,v)}),\cdots, ((2d)_{(u,v)},v) \in E^{\prime} $.
\end{defi}

Deflation, i.e., measuring chain vertices of $\vert G' \rangle$ in Pauli $X$ or $Y$, yields $\vert G \rangle$ up to local Pauli corrections by standard graph-state measurement rules~\cite{hein2004multiparty}.
For every $d$-inflated graph state, there exist Pauli measurements that generate correlations inconsistent with any $d$-LHV$^\ast$ models~\cite{Meyer2023}.
For a graph with at least three connected vertices, this inconsistency is demonstrated by means of a GHSZ-like paradox~\cite{greenberger1990bell}, and for two vertices via an inequality akin to the Clauser-Horne-Shimony-Holt (CHSH) inequality \cite{chsh1969}. 

We denote the generator elements of the inflated graph states as $\{ g^{\prime}_{u} \}_{u \in V^{\prime}}$ and introduce the \emph{inflated generator elements} that mimics a generator of $\vert G \rangle$ on the power vertices of $\vert G' \rangle$, see Fig.~\ref{fig:graphex}.
\begin{figure}
\centering
\begin{tikzpicture}[scale=0.9]
\tikzstyle{power}=[draw=black, very thick, shape=circle, minimum size=16pt]
\tikzstyle{chain}=[draw=black, shape=circle, minimum size=9pt,inner sep=1.5pt]
\begin{scope}[xshift=0cm,rotate=90,every node/.style={circle,draw}]

\node[line width=0.3mm,label={[label distance=0.05cm]270: $2$}] (1) at (1.5,6) {\footnotesize$Z$};
\node [style=chain] (121) at (2.25,6) {};
\node [style=chain] (122) at (2.75,6) {};
\node [style=chain] (123) at (3.25,6) {};
\node [style=chain] (124) at (3.75,6) {};
\node[line width=0.3mm, label={[label distance=0.05cm]90: $3$}] (2) at (4.5,6) {\footnotesize$Z$};

\node [style=chain] (241) at (4.5,5.3) {\tiny $X$};
\node [style=chain] (242) at (4.5,4.75) {};
\node [style=chain] (243) at (4.5,4.2) {\tiny $X$};
\node [style=chain] (244) at (4.5,3.65) {};

\node [style=chain] (131) at (1.5,5.25) {};
\node [style=chain] (132) at (1.5,4.75) {};
\node [style=chain] (133) at (1.5,4.25) {};
\node [style=chain] (134) at (1.5,3.75) {};

\node[line width=0.3mm,label={[label distance=0.05cm]270: $1$}] (3) at (1.5,3) {\footnotesize$\phantom{X}$};

\node [style=chain] (141) at (2.25,5.25) {\tiny $X$};
\node [style=chain] (142) at (2.75,4.75) {};
\node [style=chain] (143) at (3.25,4.25) {\tiny $X$};
\node [style=chain] (144) at (3.75,3.75) {};

\node[line width=0.3mm,label={[label distance=0.05cm]90: $4$}] (4) at (4.5,3) {\footnotesize$X$};

\draw[-] (1) -- (121); 
\draw[-] (121) -- (122);
\draw[-] (122) -- (123);
\draw[-] (123) -- (124);
\draw[-] (124) -- (2);

\draw[-] (2) -- (241); 
\draw[-] (241) -- (242);
\draw[-] (242) -- (243);
\draw[-] (243) -- (244);
\draw[-] (244) -- (4);

\draw[-] (1) -- (131);
\draw[-] (131) -- (132);
\draw[-] (132) -- (133);
\draw[-] (133) -- (134);
\draw[-] (134) -- (3);

\draw[-] (1) -- (141);
\draw[-] (141) -- (142);
\draw[-] (142) -- (143);
\draw[-] (143) -- (144);
\draw[-] (144) -- (4);

\end{scope}

\begin{scope}[xshift=3cm,yshift=0.75cm,rotate=90,scale=0.7,every node/.style={circle,draw}]

\node[line width=0.3mm,label={[label distance=0.05cm]270: $2$}] (1) at (1.5,6) {\footnotesize$Z$};
\node[line width=0.3mm, label={[label distance=0.05cm]90: $3$}] (2) at (4.5,6) {\footnotesize$Z$};
\node[line width=0.3mm,label={[label distance=0.05cm]270: $1$}] (3) at (1.5,3) {\footnotesize$\phantom{X}$};
\node[line width=0.3mm,label={[label distance=0.05cm]90: $4$}] (4) at (4.5,3) {\footnotesize{$X$}};

\draw[-] (1) -- (2);
\draw[-] (2) -- (4); 
\draw[-] (1) -- (3);
\draw[-] (1) -- (4);

\end{scope}

\end{tikzpicture}
\caption{A graph with four vertices (rhs) and its ($d=2$)-inflated graph (lhs) by means of Def.~\ref{def:inflatedgraph}.
Letters indicate non-trivial Pauli operators
while empty nodes indicate Pauli identity in the Pauli product of the inflated generator element~$f_{4}$ from Def.~\ref{def:infvstab} (lhs), mimicking the generator element~$g_{4}$ (rhs) on the corresponding graph state.}
\label{fig:graphex}
\end{figure}
\begin{defi}[Inflated Generator Element]
The inflated generator element $f_{u}$ is a stabilizer element of the inflated graph state with
\begin{equation}
f_{u} = g^{\prime}_{u} \prod_{\substack{(u,v)\in E,\\s=1,\dots,d}} g^{\prime}_{2s_{(u,v)}}\,.\label{eq:infstabgenerator}
\end{equation}
\label{def:infvstab}
\end{defi}

\subsection{Introductory Example}
\label{sec:example}

The triangle graph state, which is locally equivalent to the GHZ state, can be self-tested via Pauli measurements from its stabilizer elements~\cite{mckague2011self}:
\begin{equation}
\begin{aligned}
g_{1} &= X_{1}Z_{2}Z_{3}\,, \\
g_{2} &= Z_{1}X_{2}Z_{3} \,, \\
g_{3} &= Z_{1}Z_{2}X_{3} \,, \\
- g_{1}g_{2}g_{3} &= X_{1}X_{2}X_{3} \,.
\end{aligned}
\end{equation}
These four measurements together with the triangle graph state constitute the reference experiment (RE).
Then, the Bell test is the protocol obtained directly from the RE: a referee samples a global question from the four measurements; the local input to each party is either `measure $X$' or `measure $Z$'.
The output correlations included in the physical experiment (PE) can, if they are those expected from the RE, certify qubits (local anticommutation) and, by testing the generating set ($g_{1},g_{2},g_{3}$), pin the state up to local isometries.

In the presence of bounded communication, the global non-signaling assumption is broken, but the power vertices remain effectively non-signaling relative to one another.
Applying results from~\cite{Meyer2023}, one obtains an inflated-state paradox robust to $d$-range communication with stabilizer elements, without specifying $V'_{i}$:
\begin{equation}
\begin{aligned}
f_{1} &= X_{1}Z_{2}Z_{3} \, X^{\otimes V'_{1}} \,,\\
f_{2} &= Z_{1}X_{2}Z_{3}\, X^{\otimes V'_{2}}\,, \\
f_{3} &= Z_{1}Z_{2}X_{3}\, X^{\otimes V'_{3}}\,, \\
-f_{1}f_{2}f_{3} &= X_{1}X_{2}X_{3}\, X^{\otimes V'_{1}+V'_{2}+V'_{3}} \,.
\end{aligned}
\end{equation}
Note that the Pauli operators on the power vertices $1,2,3$ match those of the triangle.
In the Bell experiment, a referee samples from these four measurement, and gives the inputs $k_u$ (`measure $X$' or `measure $Z$') to the parties.
With communication each party $u$ also receives the inputs broadcast by all vertices in $N_d(u)$, so its local input is the tuple $k_u^{(d)}=(k_u,(k_v)_{v\in N_d(u)})$.
The output correlations are compared to the RE’s ideal values (with tolerance $\epsilon$).

As subtle point, if we literally instructs the chain vertices to `measure Pauli-$X$' or `measure $\mathds{1}$ (do not measure)', the power vertices can infer which global setting is used from communicated inputs.
To remove that classical advantage, we always ask all chain vertices to measure Pauli-$X$ and then discard outcomes for those positions that should be Pauli-$\mathds{1}$ in the corresponding stabilizer.
This is exactly the submeasurement idea in~\cite{barrett2007modeling}.
For some graphs and paradoxes (including the triangle), classical strategies still gain leverage.
We thwart this by adding paired settings that use the submeasurement method on power vertices as needed.

\subsection{Reference experiments}\label{sec:refexps}

\begin{figure*}
\centering
\input{tikz/tikz-big-new}
\caption{
Four panels of inflated graphs that address the ones in REs~\ref{def:refe1} (a), REs~\ref{def:refe2}, for $\vert V_{s} \vert >2$ (b), for $\vert V_{s} \vert = 2$ (d) and RE~\ref{def:refe3} (c).
The bold nodes depict power vertices.
Within a panel, each graph depicts a measurement setting and its submeasurement.
The measurements are those that challenge any $d$-LHV$^\ast$ model.
The letters in black and light-blue font indicate he bases in the measurement setting.
For the submeasurements, replace the letters in light-blue font by Pauli identity.
Graphs with a circular arrow represent all measurements under invariant rotations of the power vertices.\\
For RE~\ref{def:refe1}, panel (a) shows $F_{v_{1}}$ with $f_{v_{1}}$ from Eq.\;\eqref{eq:infstabgenerator} (upper left), while the circular arrow hints at $F_{v_{2}}$ with $f_{v_{2}}$, and $F_{v_{3}}$ with $f_{v_{3}}$, $M_{V_{c}}$ with $C_{V_{c}}$ (upper right) from Eq.\;\eqref{eq:measrefstab12}, and $M_{v_{1}}^{(X)}$ (lower left) and $M_{v_{1}}^{(Z)}$ (lower right) with $C_{v_{1}}$ from Eq.\;\eqref{eq:measrefstab13}, and the circular arrow hints at $M_{v_{2}}^{(X)}$, $M_{v_{2}}^{(Z)}$ with $C_{v_{2}}$, and $M_{v_{3}}^{(X)}$, $M_{v_{3}}^{(Z)}$ with $C_{v_{3}}$.\\
For RE~\ref{def:refe2} ($\vert N(v_{c}) \vert = 3$), panel (b) shows $\Tilde{F}_{v_{c}}$ with $f_{v_{c}}$ from Eq.\;\eqref{eq:infstabgenerator} (left), and $M_{v_{2},v_{3}}$ with $C_{v_{2},v_{3}}$ from Eq.\;\eqref{eq:ref21}, and the circular arrow hints at $M_{v_{1},v_{3}}$ with $C_{v_{1},v_{3}}$, $M_{v_{1},v_{2}}$ with $C_{v_{1},v_{2}}$.\\
For RE~\ref{def:refe3}, panel (c) shows, from top to bottom, $M_{i}$ with $I_{i}$ in Eq.\;\eqref{eq:ref3} for $i = 1,2,3,4$ and $\Tilde{F}_{v_{l}}$ with $f_{v_{l}}$, $\Tilde{F}_{v_{r}}$ with $f_{v_{r}}$.\\
For RE~\ref{def:refe2} ($\vert N(v_{c}) \vert = 2$), panel (d) shows, from top to bottom, $\Tilde{F}_{v_{c}}$ with $f_{v_{c}}$ in Eq.\;\eqref{eq:infstabgenerator}, $M_{v_{1}}$ with $C_{v_{1}}$ in Eq.\;\eqref{eq:ref22-1}, $M_{v_{2}}$ with $C_{v_{2}}$ in Eq.\;\eqref{eq:ref22-2}, $M_{v_{c}}$ with $C_{v_{c}}$ in Eq.\;\eqref{eq:ref22-3}, $M_{2}^{(X)}$ and $M_{2}^{(Y)}$ with $C_{2}$ in Eq.\;\eqref{eq:ref22-4}.
}
\label{fig:refexp}
\end{figure*}

We introduce three reference experiments (REs) that comprehensively address any inflated graph state.
Each RE consists of the inflated graph state itself, measurements to test the inflated generator element, and an additional set of measurements designed challenge any $d$-LHV$^\ast$ model.
For clarity and simplicity, we opt to present the submeasurements alongside the parent measurements.

The proof strategy follows Prop.~\ref{prop:selftestgraphstates}, use paradox-type submeasurements to establish a local anticommutation pair on at least one power vertex, and then, propagate anticommutation across power vertices via inflated stabilizer submeasurements.
Finally, we build a local SWAP-style isometry.
This process holds true for any degree of inflation $d$.

\begin{defi}[Induced subgraph]
An induced subgraph of a graph $G=(V,E)$ is a subset of vertices $U \subset V$ along with the edges $(U \times U) \cap E$ of the original graph if and only if the edge connects two vertices in the subset $U$. 
\end{defi}
The RE~\ref{def:refe1} is applicable when the corresponding graph contains an odd circle (in particular a triangle) as an induced subgraph.
The RE~\ref{def:refe1} handles graphs with an induced star subgraph, i.e., a central vertex whose nearest neighbors do not share an edge.
These two encompass any graph with at least three vertices.
The RE~\ref{def:refe3} covers a neighboring vertex pair (in particular, the 2-vertex graph), and in fact can cover any connected graph.
Still, REs~\ref{def:refe1}–\ref{def:refe2} remain useful for improved robustness (Appendix~\ref{app:robref3}).
Indeed, the different REs can be in principle combined in different ways to optimize robustness for example. Robustness is treated in Appendix~\ref{app:robref3}.

For all three REs, the local measurements on the chain vertices, which deflate the inflated graph state as described in Eq.\;\eqref{eq:deflationprojector1} of Theorem~\ref{theo:main}, correspond to measurements in the Pauli $X$ basis, as depicted in Fig.~\ref{fig:graphmap}.
Some settings in REs~\ref{def:refe2}–\ref{def:refe3} also require Pauli-$Y$ on chains. 
We present the submeasurements $C$ in stabilizer form below, and their tensor-product forms are in the Appendix~\ref{app:expect}.
Recall that $g^{\prime}_{u}$ for $u \in V^{\prime}$ denotes a generator element of $\vert G' \rangle$, $f_{v}$ denotes an inflated generator element from Eq.\;\eqref{eq:infstabgenerator}.

\begin{refexp}[Odd Circular Subgraph]
\label{def:refe1}
Let $G^{\prime}$ denote the inflated graph derived from a connected graph $G =(V,E)$ containing a induced circular subgraph $G_{c} =(V_{c},E_{c})$ with an odd number of vertices, and let $\vert G^{\prime}\rangle$ denote the corresponding inflated graph state $\vert G^{\prime}\rangle$.
The reference experiment consists of the state $\vert G^{\prime}\rangle$ and the Pauli measurements defined in Tab.~\ref{tab:remeas}: $M_{V_{c}}$, and $M_{v}^{(X)}$, $M_{v}^{(Z)}$ for all vertices $v \in V_{c}$, as well as $F_{u}$ for all vertices $u \in V$.
Overall the RE consists of $\vert V \vert + 1 + 2\vert V_{c} \vert$ measurement settings.
\end{refexp}
\begin{refexp}[Star Subgraph]
\label{def:refe2}
Let $G^{\prime}$ denote the inflated graph derived from a connected graph $G = (V,E)$ that contains an induced star subgraph, i.e., a vertex $v_{c}$ with at least two adjacent vertices that do not share an edge among them.
Let $\vert G^{\prime}\rangle$ denote the corresponding inflated graph state.
If $\vert N(v_{c})\vert > 2$, we select a set of three vertices $V_{3} = \{ v_{1},v_{2},v_{3} \} \subseteq N(v_{c})$, else $N(v_{c}) = \{v_{1},v_{2} \}$, i.e., $\vert N(v_{c}) \vert = 2$.
The reference experiment consists of the state $\vert G^{\prime} \rangle$, and the Pauli measurements defined in Tab.~\ref{tab:remeas}:
The $F_{u}$ for all vertices $u \in V$ and $\Tilde{F}_{v_{c}}$.
There as two distinct cases for which the RE furthermore contains the Pauli measurements.
If $\vert N(v_{c})\vert > 2$, they are $M_{v_{i}v_{j}}$ for $(v_{i},v_{j}) \in \{ (v_{1},v_{2}),(v_{1},v_{3}),(v_{2},v_{3}) \}$.
Overall the RE consists of $\vert V \vert + 4$ measurement settings.
If $\vert N(v_{c}) \vert = 2$, the measurements are $M_{v_{c}}$, $M_{v_{1}}$, $M_{v_{2}}$, $M_{2}^{(X)}$, and $M_{2}^{(Y)}$.
Overall the RE consists of $\vert V \vert + 6$ measurement settings.
\end{refexp}
\begin{refexp}[Vertex Pair] \label{def:refe3}
Let $G^{\prime}$ denote the $d$-inflated graph derived from a connected graph $G = (V,E)$ with two neighboring vertices $v_{l}$ and $v_{r}$, and $\vert G^{\prime} \rangle$ denote the corresponding inflated graph state.
The reference experiment consists of the state $\vert G^{\prime} \rangle$, the Pauli measurements $F_{u}$ for all $u \in V$, and $\Tilde{F_{v_l}}$, $\Tilde{F_{v_r}}$, as well as the tensor product operators $M_{i}$ for $i=1,2,3,4$, with the phase gate $ \sqrt{Z}$ applied on vertex $v_m = d_{(v_{l},v_{r})}$.
Overall the RE consists of $\vert V \vert + 6$ measurement settings.
\end{refexp}
For the submeasurements, denote
\[h^{U}_{u} = \prod_{\substack{v \in N(u)\setminus U,\\s=1,\dots,d}} g^{\prime}_{2s_{(u,v)}}\]
for a subset of vertices~$U$ to describe the measurements on vertices adjacent to the induced subgraph used in the REs.
Then, the submeasurements of interest for the corresponding measurements, as defined in Tab.~\ref{tab:remeas}, include, for RE~\ref{def:refe1},
\begin{align}
C_{V_{c}} &= (-1) \prod_{v \in V_{c}} f_{v} \,,\label{eq:measrefstab12}\\
C_{u} &= h^{V_{c}}_{u} \prod_{v \in V_{c}\setminus \{ u \}} f_{v} \,, \label{eq:measrefstab13}
\end{align}
for RE~\ref{def:refe2},
\begin{align}
\label{eq:ref21}C_{v_{i}v_{j}} &= (-1)\, g^{\prime}_{v_{c}} f_{v_{i}} f_{v_{j}} h^{\{ v_{i},v_{j} \}}_{v_{c}}\,, \\[0.12cm]
C_{v_{1}} &= \phantom{-}g^{\prime}_{v_{c}} f_{v_{1}} h^{\{ v_{1} \}}_{v_{c}} \,, \label{eq:ref22-1} \\[0.13cm]
C_{v_{2}} &= \phantom{-}g^{\prime}_{v_{c}} h^{\{ v_{2} \}}_{v_{c}} f_{v_{2}} \,, \label{eq:ref22-2} \\[0.13cm]
C_{v_{c}} &= - \, g^{\prime}_{v_{c}} f_{v_{1}} f_{v_{2}} \,, \label{eq:ref22-3} \\[0.13cm]
C_{2}\, &= \phantom{-}f_{v_{1}} f_{v_{2}} h_{v_{c}} \,, \label{eq:ref22-4}
\end{align}
and, for RE~\ref{def:refe3},
\begin{align}
I_i &= \frac{1}{\sqrt{2}} \left( P_i + \sqrt{Z}^{\phantom{\dagger}}_{v_m} P_i \, \sqrt{Z}^\dagger_{v_m} \right)\,. \label{eq:ref3}
\end{align}
We denote $v_m := d_{(v_{l},v_{r})}$ and define
\begin{align*}
P_{1} &= \left(-g^{\prime}_{v_m}\right)^{d+1} f_{v_{r}}
&P_{3} &= \left(g^{\prime}_{v_m}\right)^{d\phantom{+1}} \Tilde{f}_{v_{l}}^{(\mathrm{e})} \Tilde{f}^{(\mathrm{e})}_{v_{r}} \\
P_{2} &= \left(-g^{\prime}_{v_m}\right)^{d\phantom{+1}} \, f_{v_{l}} \,,
&P_{4} &= \left(g^{\prime}_{v_m}\right)^{d+1} \Tilde{f}_{v_{l}}^{(\mathrm{o})} \Tilde{f}^{(\mathrm{o})}_{v_{r}} \,,
\end{align*}
with
\begin{align*}
\Tilde{f}^{(\mathrm{e})}_{v_a} &= g^{\prime}_{v_a} h^{\{ v_b \}}_{v_a} \hspace{-0.05cm}\prod^{\lfloor d/2 \rfloor}_{s=1} g^{\prime}_{2s_{(v_a,v_b)}} \,,\\
\Tilde{f}^{(\mathrm{o})}_{v_a} &= h^{\{ v_b \}}_{v_a} \hspace{-0.05cm} \prod^{\lfloor d/2 \rfloor}_{s=1} g^{\prime}_{(2s-1)_{(v_a,v_b)}} \,,
\end{align*}
for $a,b \in \{l,r\}$.

For every RE and a minimal induced subgraph, Figs.~\ref{fig:exref1}\,-~\ref{fig:exref4} depict examples of measurement settings and corresponding submeasurements.
In Appendix~\ref{app:expect}, we write down the measurement settings and submeasurements for each respective physical experiment. 

\begin{table}[ht!]
\begin{tabular}{Sr|Sc|Sc|Sr}
$M$ & Pauli-$Z$ & Pauli-$Y$ & \multicolumn{1}{Sl}{$C$}  \\
\cline{1-4} all   $F_{u}$ & $N(u)$ & & $f_{u}$ (Eq.\,\eqref{eq:infstabgenerator}) \\
\cline{1-4} \multicolumn{4}{Sl}{RE\,\ref{def:refe1}} \\
\cline{1-4} all & $N(V_{c})$ & &  \\
$M_{V_c}$ & & & $C_{V_{c}}$~\eqref{eq:measrefstab12} \\
$M_{u}^{(X)}$ & & $N(u) \cap V_{c}$ & $C_{u}$~\eqref{eq:measrefstab13}\\
$M_{u}^{(Z)}$ & $\{ u \}$ & $N(u) \cap V_{c}$ & $C_{u}$~\eqref{eq:measrefstab13} \\
\cline{1-4} \multicolumn{4}{Sl}{RE\,\ref{def:refe2} ($\vert N(v_{c}) \vert > 2$)}   \\
\cline{1-4} all & $N(V_{s})$ & $N^{\prime}(v_c)$ &  \\
$\Tilde{F}_{v_{c}}$ & $N(v_{c})$ & & $f_{v_{c}}$~\eqref{eq:infstabgenerator} \\
$M_{v_{i}v_{j}}$ & & & $C_{v_{i}v_{j}}$~\eqref{eq:ref21} \\
\cline{1-4} \multicolumn{4}{Sl}{RE\,\ref{def:refe2} ($\vert N(v_{c}) \vert = 2$)}  \\
\cline{1-4} all & $N(V_{s})$ & $N^{\prime}(v_c)$ &  \\
$\Tilde{F}_{v_{c}}$ & $N(v_{c})$ & & $f_{v_{c}}$~\eqref{eq:infstabgenerator} \\
$M_{v_c}$ & & & $C_{v_{c}}$~\eqref{eq:ref22-1} \\
$M_{v_1}$ & $\{v_{2}\}$ & & $C_{v_{1}}$~\eqref{eq:ref22-2} \\
$M_{v_2}$ & $\{v_{1}\}$ & & $C_{v_{2}}$~\eqref{eq:ref22-3} \\
$M_{2}^{(X)}$ & & $\{ v_{1},v_{2} \}$ & $C_{2}$~\eqref{eq:ref22-4} \\
$M_{2}^{(Y)}$ & &$\{ v_{c},v_{1},v_{2} \}$& $C_{2}$~\eqref{eq:ref22-4} \\
\cline{1-4} \multicolumn{4}{Sl}{RE\,\ref{def:refe3}} \\
\cline{1-4} all & $N(\{ v_{l},v_{r} \})$ & $N^{\prime}(v_{m})$ & \\
$\Tilde{F}_{v_{l}}$ & $\{ v_{l} \}$ & & $P_{2}$~\eqref{eq:ref3} \\
$\Tilde{F}_{v_{r}}$ & $\{ v_{r} \}$ & & $P_{1}$~\eqref{eq:ref3} \\
$(\ast)$ $M_{1}$ & $\{ v_{l} \}$ & & $I_{1}$~\eqref{eq:ref3} \\
$(\ast)$ $M_{2}$ & $\{ v_{r} \}$ & & $I_{2}$~\eqref{eq:ref3}\\
$(\ast)$ $M_{3}$ & & & $I_{3}$~\eqref{eq:ref3} \\
$(\ast)$ $M_{4}$ & $\{ v_{l},v_{r} \}$ & & $I_{4}$~\eqref{eq:ref3}
\end{tabular}
\caption{We list the measurement settings $M$ of the reference experiments (RE)~\ref{def:refe1}\,-\,\ref{def:refe3} in terms of the local measurement bases together with submeasurements $C$.
By default, all vertices measure in Pauli-$X$ basis.
For all other measurement bases (Pauli-$Y$/$Z$), the table contains the vertices for each measurement setting.
For example, in measurement setting $M_{2}^{(Y)}$, all vertices measure in the Pauli-$X$ basis, except for: the power vertices $\{ v_{1}, v_{2} \}$ (Pauli-$Z$), the two chain vertices $N^{\prime}(v_{c})$ as well as $\{ v_{c},v_{1},v_{2} \}$ (Pauli-$Y$).
($\ast$) The $M_{1}$, $M_{2}$, $M_{3}$, $M_{4}$ in RE~\ref{def:refe3} measure $R_{z}= \left( X+Y \right)/\sqrt{2}$ at vertex $v_{m}$.}
\label{tab:remeas}
\end{table}

To make statements about the ability to self-test, we define the explicit Bell experiment.
For a fixed RE, consider its number of devices, a referee issues finite list of global question that are exactly the list of named measurements (e.g., $F_u$, $\tilde{F_{v_c}}$, $M^{(X)}_v$, $M^{(Z)}_v$, $M_{v_iv_j}$, etc.). 
Before measuring, each party receives the inputs broadcast by all vertices within graph distance $d$ along the RE's graph $G$. 
For each global question, the RE’s Pauli tensor product uniquely determines the local observable at every site given the communicated inputs.
The number of pre-communication inputs for each party is the number of different operators across all the global measurements, e.g. a single Pauli-$X$ on chain vertices, Pauli-$X,Y,Z$ on power vertices.
After a round of classical communication, the local input also includes the inputs from the vertices in communication range - these vary from range, graph and RE.
When a submeasurement is tested, the referee still asks the parent global question, parties measure everywhere, and outcomes on vertices where the submeasurement operator has $\mathds{1}$ are discarded in post-processing. 
The observed correlations from this Bell procedure are then compared to the RE’s ideal values (within $\epsilon$ for robustness).

\subsection{Self-testing graph states via inflation and deflation}\label{sec:maintheorem}

Given a target graph state $\vert G \rangle$, the REs use the inflated graph state $\vert G'\rangle$ to both prepare the target graph state, via deflation (whose measurements are included in the RE) and self-test it, since the resulting correlations contradict $d$-LHV$^\ast$ models and enforce local anticommutation. 
Hence, if a PE simulates the RE on $\vert G' \rangle$, then the deflated PE is equivalent to the deflated RE, which is locally equivalent to $\vert G \rangle$.
We use calligraphic font to denote measurements and observables associated with the PE, as opposed to italic font used for the REs.

\begin{thm}\label{theo:main}
For any $d$-inflated graph $G^{\prime}$ derived from a connected graph $G$, there exists a choice of reference experiment implying the following, even when one-way classical communication is permitted up to distance $d$ along the edges of the $d$-inflated graph:
Given a physical experiment $( \vert \Psi^{\prime} \rangle, \{ \mathcal{M}_k\})$ that $\epsilon$-simulates (Def.~\ref{def:simrob}) the reference experiment $( \vert G^{\prime}\rangle, \{ M_k \} )$, then, the deflated physical experiment $( \vert \Psi^{(x)} \rangle, \{ \mathcal{M}_k \} )$ is $\delta$-equivalent (Def.~\ref{def:equirob}) to the deflated reference experiment $( \vert G^{(x)}\rangle , \{ M_k \} )$ for all outcomes $(x)$. 
Deflation is implemented by fixed projective measurements on the chain vertices,
\begin{align}
\vert G^{(x)} \rangle &= \bigotimes_{v \in V^{\prime} \setminus V} \Pi^{(x)}_{v}(M_{k_{v}}) \vert G^{\prime} \rangle\,,\label{eq:deflationprojector1}\\
\vert \Psi^{(x)} \rangle &= \bigotimes_{v \in V^{\prime} \setminus V} \Pi^{(x)}_{v}(\mathcal{M}_{k_{v}}) \vert \Psi^{\prime} \rangle \,. \label{eq:deflationprojector2}
\end{align}
Here $\Pi^{(x)}(A)$ represents the product of local measurement projectors corresponding to the observables $A$ with outcome $(x)$.
The comparison is conditional on these fixed chain measurements dictated by the reference experiment and holds for every outcome pattern~$\mathbf{x}$.
\end{thm}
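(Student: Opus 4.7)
The plan is to adapt the standard Mayers-Yao self-testing approach (encoded in Proposition~\ref{prop:selftestgraphstates}) to the inflated-plus-deflated setting, with the communication structure handled by the specific geometry of the inflation. Concretely, I would work with the power vertices $V\subseteq V'$ as the ``true'' parties whose observables need to be certified, and treat the chain-vertex measurements as producing a (classically conditioned) deflated state on $V$. The crucial geometric fact I would invoke repeatedly is that in the $d$-inflated graph any two power vertices are separated by at least $2d+1$ edges, so a round of $d$-distance broadcasting cannot relay information from one power vertex to another; hence the effective Bell scenario on the power vertices after deflation is genuinely non-communicating.

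First, I would use the $\epsilon$-simulation hypothesis on the submeasurements that are inconsistent with any $d$-LHV$^\ast$ model (the GHSZ-like operators $C_{V_c}, C_u$ from RE~\ref{def:refe1}; the $C_{v_iv_j}$ or $C_{v_1},C_{v_2},C_{v_c},C_2$ from RE~\ref{def:refe2}; or the $I_i,P_i$ from RE~\ref{def:refe3}) to derive an approximate anticommutation relation of two binary observables on at least one power vertex when acting on $|\Psi'\rangle$, in the sense $\|\{\mathcal{A}_u^X,\mathcal{A}_u^Z\}|\Psi'\rangle\|\le O(\sqrt\epsilon)$. This is the standard Clifford-paradox-to-anticommutation step, and the key point is that the submeasurement geometry is chosen precisely so that the $d$-communication allowed on $G'$ is insufficient to flatten the paradox -- formally because the relevant operators live on power vertices that are out of communication range of each other.

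Next, I would invoke the $\epsilon$-simulation on the inflated generator element submeasurements $f_u$ (Def.~\ref{def:infvstab}), which stabilize $|G'\rangle$ with eigenvalue $+1$. Standard stabilizer calculus then lets me propagate the anticommutation from one power vertex to every connected power vertex: combining $f_u|\Psi'\rangle\approx|\Psi'\rangle$ with the anticommutation on one vertex forces, via telescoping of products of approximate stabilizer identities, an analogous anticommutation on each neighbouring power vertex, and hence across the whole connected graph. Simultaneously, the same $\epsilon$-stabilization of $f_u$ reproduces on the PE the generator-element correlations of the target graph state $|G\rangle$ (up to signs absorbed into the outcome-dependent Pauli corrections produced by deflating the chains via $\Pi^{(x)}$, exactly as in Fig.~\ref{fig:graphmap}).

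With these two ingredients in hand, I would assemble the local isometry $\Phi=\bigotimes_{v\in V}\Phi_v$ as the product of Mayers-Yao SWAP circuits (Fig.~\ref{fig:swapcirc}) built from the approximately anticommuting observables on each power vertex. Applying $\Phi$ to the deflated physical state $|\Psi^{(x)}\rangle$ from Eq.~\eqref{eq:deflationprojector2} and using the certified generator-element correlations gives $\Phi(|\Psi^{(x)}\rangle)\approx|\mathit{junk}\rangle\otimes|G^{(x)}\rangle$, with the action on each $\mathcal{A}_{k_i}$ also matching the RE observables, and the accumulated error being $\delta(\epsilon)=O(\sqrt\epsilon)$ (or $O(\epsilon^{1/4})$ in the worst case) by the standard triangle-inequality bookkeeping over $|V|$ vertices.

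The main obstacle I expect is ensuring that the communication-dependent structure of the PE observables does not spoil the anticommutation arguments. In particular, the chain-vertex observables may depend on the inputs of their power-vertex neighbours within distance $d$, so when I deflate by projecting the chain vertices onto their $X$ or $Y$ eigenbases the resulting effective observable on the power vertices could in principle carry hidden dependence on neighbouring power-vertex inputs. Ruling this out requires carefully using the $2d+1$ separation between power vertices to show that the chain projectors $\Pi^{(x)}_v(\mathcal{M}_{k_v})$ commute with all observables on other power vertices' neighbourhoods and therefore that the post-deflation measurement on each power vertex depends only on that power vertex's own input; this locality recovery is what makes Proposition~\ref{prop:selftestgraphstates} applicable to the deflated PE/RE pair and is the technical crux of the theorem.
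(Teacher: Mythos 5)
Your high-level skeleton is the same as the paper's: use the paradox-generating submeasurements to obtain an anticommutation relation on one power vertex, propagate it to all power vertices via the inflated generator elements (this is Lemma~\ref{lem:ref1234} combined with Lemma~\ref{lem:indu}), use the $f_u$ correlations to reproduce the stabilizing conditions of $\vert G\rangle$ up to the outcome-dependent signs $(x_e)_u$, and then run the product of SWAP isometries with triangle-inequality bookkeeping giving $\delta=O(\sqrt{\epsilon})$. Those steps are all present in the paper and in the right order.

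However, the step you yourself call the technical crux contains a genuine gap. You propose a ``locality recovery'': since adjacent power vertices are $2d+1$ edges apart, you claim the chain projectors $\Pi^{(x)}_{v}(\mathcal{M}_{k_v})$ carry no dependence on neighbouring power-vertex inputs, so that after deflation one has a genuinely non-communicating Bell scenario to which Proposition~\ref{prop:selftestgraphstates} applies directly. This is not true in the communication model of the paper, and it is not how the paper argues. Every chain vertex within distance $d$ of a power vertex $u$ receives $k_u$, so its observable \emph{does} depend on the neighbouring power vertex's input (see the superscripts in Eq.~\eqref{eq:pe-infstabpauli}, where the left and right halves of a chain carry $(X)$ and $(Z)$ labels precisely because they hear different power vertices); likewise the power-vertex observables themselves depend on what their chain neighbours are asked (the paper must distinguish $\mathcal{X}_u$ from $\tilde{\mathcal{X}}_u$ and $\zeta_{v_c}$ from $\tilde{\zeta}_{v_c}$). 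The paper never eliminates this dependence; instead it keeps it explicit through the superscript labelling of all PE observables and designs the reference submeasurements so that, in the telescoping products used to derive the anticommutation relations (Appendix~\ref{app:anti}), every input-dependent chain observable appears an even number of times with \emph{matching} superscripts and squares to the identity --- the overlap identity~\eqref{overlap} and the explicit arrays are exactly this cancellation. Without that pairing argument your derivation of the anticommutation relation, and of the stabilizer reproduction on the deflated state, does not go through, because the operators you are multiplying are not the communication-free operators your reduction implicitly assumes. The $2d+1$ separation is necessary for the construction, but the actual work is the odd/even, left/right bookkeeping of Appendices~\ref{app:expect} and~\ref{app:anti}, not a reduction to the no-communication setting.
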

In Eqs.\,\eqref{eq:deflationprojector1} and \eqref{eq:deflationprojector2}, chain vertices are always measured in the deflation bases defined by the RE (mostly Pauli-$X$, in some cases Pauli-$Y$), which reduces $\vert G' \rangle$ to $\vert G \rangle$ up to local corrections as illustrated in Fig.~\ref{fig:graphmap}.

\paragraph{Theorem~\ref{theo:main} for RE~\ref{def:refe1},~\ref{def:refe2} or~\ref{def:refe3}.}
Consider a PE with state $\vert \Psi \rangle$, measurements $\mathcal{M}_k$, and submeasurements $\mathcal{C}_k$ compatible with a RE~\ref{def:refe1},~\ref{def:refe2} or~\ref{def:refe3}.
In this case, Theorem~\ref{theo:main} is a statement on the PE's post-deflation state $\vert \Psi^{\prime} \rangle$ after measuring the chain vertices in the $\mathcal{X}$ basis with outcome $\mathbf{x} = (x_{v})_{v\in V^{\prime} \setminus V}$,
\begin{equation}\label{postmeasurepower}
\vert \Psi^{\prime} \rangle = \Pi_{\mathcal{X}}^{(\mathbf{x})} \vert \Psi \rangle = \bigotimes_{v \in V^{\prime}\setminus V} \left( \mathds{1}_{v} + (-1)^{x_{v}} \mathcal{X}_{v} \right) \vert \Psi \rangle \,.
\end{equation}

To map $\vert \Psi^{\prime} \rangle$ to $\vert G\rangle$ by a local isometry, we use Prop.~\ref{prop:selftestgraphstates}.
We need \textit{(i)} anticommuting observables $\mathcal{X}_u,\mathcal{Z}_u$ for every power vertex $u$, and \textit{(ii)} stabilizer relations up to known signs $x_e$:
\begin{equation}\label{eq:postmeasuregraphstabcond}
\mathcal{X}_{u} \bigotimes_{v\in N(u)}\mathcal{Z}_{v}\ket{\Psi'} = (-1)^{(x_e)_{u}} \ket{\Psi'} \,.
\end{equation}
\paragraph{Local anticommutation relations.}
\begin{lem} \label{lem:ref1234} If a physical experiment $( \vert \Psi \rangle, \{ \mathcal{M}_k \}_k)$ is compatible with one of the reference experiments~\ref{def:refe1},~\ref{def:refe2},~or~\ref{def:refe3}, and simulates it according to Def.~\ref{def:simrob} for $\epsilon = 0$, then $\{ \mathcal{X}_{u}, \mathcal{Z}_{u} \} \vert \Psi \rangle = 0$ for all power vertices $u$ but vertex $v_{l}$ in RE~\ref{def:refe3} where $ \{ \mathcal{X}_{v_{l}} \mathcal{X}^{(X)}_{v_m}, Z_{v_{l}} \mathcal{X}^{(Z)}_{v_m} \} \vert \Psi \rangle =0$ for $v_m = d_{(v_{l},v_{r})}$.
\end{lem}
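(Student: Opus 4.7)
The plan is to follow the two-step template sketched in Proposition~\ref{prop:selftestgraphstates} and in the discussion preceding the lemma: first seed an anticommutation relation on one power vertex by exploiting the nonlocal paradox encoded in the reference experiment, then propagate this anticommutation to all other power vertices via the inflated generator elements $f_u$.

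\emph{Stage 1 (seeding).} The $\epsilon=0$ simulation condition (Def.~\ref{def:simrob}) together with compatibility implies that every RE submeasurement $C_k$, which by construction is $\pm$ a stabilizer element of $\vert G^\prime\rangle$, lifts to an operator identity $\mathcal{C}_k\vert\Psi\rangle = \pm\vert\Psi\rangle$ on the PE state with the sign inherited from the RE. I then combine two different representations of the \emph{same} stabilizer element so that, after cancellations, the product becomes the noncommutative part of a GHSZ- or CHSH-like paradox of~\cite{Meyer2023}. For RE~\ref{def:refe1}, combining the identities for $f_u$ ($u\in V_c$), $C_u$ and $C_{V_c}$ reproduces the parity obstruction of an odd cycle: no classical assignment can match all signs, so at least one pair $\{\mathcal{X}_u,\mathcal{Z}_u\}$ with $u\in V_c$ must anticommute on $\vert\Psi\rangle$. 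The same assembly works on the star $\{v_c\}\cup V_3$ for RE~\ref{def:refe2} via the operators $C_{v_iv_j}$ (or $C_{v_1},C_{v_2},C_{v_c},C_2$ when $\vert N(v_c)\vert=2$), and on the vertex pair $\{v_l,v_r\}$ for RE~\ref{def:refe3} via the $I_i$, whose $\sqrt{Z}$ twist implements a CHSH-like test and forces the anticommutation to be phrased, at the seed vertex $v_l$, in terms of the combined observables $\mathcal{X}_{v_l}\mathcal{X}^{(X)}_{v_m}$ and $\mathcal{Z}_{v_l}\mathcal{X}^{(Z)}_{v_m}$, exactly as stated.

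\emph{Stage 2 (propagation).} For any other power vertex $u$, the identity $f_u\vert\Psi\rangle=\vert\Psi\rangle$ expresses the PE observable $\mathcal{X}_u$ as a product of the $\mathcal{X}$-observables on the even chain vertices of the edges incident to $u$ together with $\mathcal{Z}$-type observables on the neighboring power vertices. Multiplying the $f_u$- and $f_{u^\prime}$-identities for a pair of adjacent power vertices cancels all observables on the shared chain and leaves an explicit algebraic conjugation that transports an anticommutation on $u^\prime$ to one on $u$. Since $G$ is connected, iterating this propagation reaches every power vertex. The same $f_u$-identities supply the observables on the power vertices outside the seed subgraph, which is why the corresponding Pauli measurements (with identities replaced by $X$) are included in the REs.

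\emph{Main obstacle.} The bottleneck is Stage~1: the manipulations must be done so that, after all simplifications (using $\mathcal{A}^2=\mathds{1}$ for binary observables, and the operator identities coming from stabilizer submeasurements to cancel shared chain factors), the residual relation genuinely has the form $\mathcal{X}_u\mathcal{Z}_u\vert\Psi\rangle = -\mathcal{Z}_u\mathcal{X}_u\vert\Psi\rangle$ rather than a weaker identity that a classical LHV$^\ast$ assignment could absorb into the commuting algebra. The odd parity of the cycle in RE~\ref{def:refe1}, the non-adjacency of the neighbors of $v_c$ in RE~\ref{def:refe2}, and the $\sqrt{Z}$-twist of the $I_i$ in RE~\ref{def:refe3} are precisely the structural ingredients that block any classical resolution and guarantee a clean anticommutation on $\vert\Psi\rangle$.
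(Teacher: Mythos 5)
Your two-stage skeleton (seed an anticommutation at one power vertex, then transport it along the graph using the inflated generator identities) is exactly the structure of the paper's proof: the seeding is done in Appendix~\ref{app:anti} by an explicit product of the operator identities coming from the deterministic submeasurements, and the transport is Lemma~\ref{lem:indu}. For REs~\ref{def:refe1} and~\ref{def:refe2} your sketch is a fair description of that computation, with one caveat: the step ``no classical assignment can match all signs, so at least one pair must anticommute'' is not how the conclusion is (or can safely be) reached. The failure of $d$-LHV$^\ast$ models does not by itself yield an operator identity on $\vert\Psi\rangle$; the paper instead multiplies the identities $(-1)\mathcal{C}_{V_c}\vert\Psi\rangle=\vert\Psi\rangle$, $\zeta_{v_{2k}}\vert\Psi\rangle=\vert\Psi\rangle$, $\mathcal{C}^{(Z)}_{v}\mathcal{C}^{(X)}_{v}\vert\Psi\rangle=\vert\Psi\rangle$ in a specific order and checks column by column that every chain observable appears an even number of times, leaving exactly $(-1)\mathcal{X}_{v_1}\mathcal{Z}_{v_1}\mathcal{X}_{v_1}\mathcal{Z}_{v_1}\vert\Psi\rangle=\vert\Psi\rangle$. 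That bookkeeping (including the role of the two physical implementations $\mathcal{C}^{(X)}_v$ and $\mathcal{C}^{(Z)}_v$ of the same RE submeasurement, whose product is a pure chain operator, Eq.~\eqref{overlap}) is the actual content of the proof, and your proposal defers it entirely. Likewise, in Stage~2 it is the fourfold product $\tilde\zeta_v\,\zeta_u\,\tilde\zeta_v\,\zeta_u$, not the product of the two generator identities, that cancels the shared chain observables, since $\zeta_u$ and $\zeta_v$ act on disjoint (even versus odd) positions of the shared chain.

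The genuine gap is RE~\ref{def:refe3}. Your Stage-1 template rests on every submeasurement having expectation $\pm1$, hence lifting to $\mathcal{C}_k\vert\Psi\rangle=\pm\vert\Psi\rangle$. The operators $I_i=\tfrac{1}{\sqrt{2}}\bigl(P_i+\sqrt{Z}_{v_m}P_i\sqrt{Z}^\dagger_{v_m}\bigr)$ are sums of two stabilizer elements and have expectation $\pm1/\sqrt{2}$, so no such identity is available and the GHZ-product mechanism does not apply. The paper handles this with a sum-of-squares decomposition of $4\mathds{1}+\sqrt{2}\bigl((-1)^d(\mathcal{I}_1-\mathcal{I}_2)-\mathcal{I}_3-\mathcal{I}_4\bigr)$ into squares such as $\bigl[\mathcal{A}_0-(\mathcal{B}_0-(-1)^d\mathcal{B}_1)/\sqrt{2}\bigr]^2$, concluding that each square annihilates $\vert\Psi\rangle$, extracting from this the anticommutation of the combined observables $\tilde{\mathcal{O}}_{v_l}\mathcal{R}^{(O)}_{v_m}$ and $\tilde{\mathcal{Z}}_{v_l}\mathcal{R}^{(Z)}_{v_m}$, and only then converting to the stated $\{\mathcal{X}_{v_l}\mathcal{X}^{(X)}_{v_m},\mathcal{Z}_{v_l}\mathcal{X}^{(Z)}_{v_m}\}\vert\Psi\rangle=0$ via $\mathcal{P}_1\mathcal{P}_2\mathcal{P}_1\mathcal{P}_2\vert\Psi\rangle=\vert\Psi\rangle$. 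Saying the $\sqrt{Z}$ twist ``implements a CHSH-like test'' names the right phenomenon but supplies none of this machinery; without it the anticommutation at $v_l$ and $v_r$, and hence the entire RE~\ref{def:refe3} branch of the lemma, is unproven.
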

The complete proof of Lemma~\ref{lem:ref1234} is provided in Appendix~\ref{app:anti}.
Conceptually, the simulation condition imposes constraints from stabilizer formalism on the PE's observables.
Additionally, certain measurement settings correspond to measurements that lead to a nonlocal paradox for the inflated graph state.
The constraints imposed on the observables by these settings lead to an anticommutation relation for two observables on a single power vertex.
From a single anticommutation relation, Lemma~\ref{lem:indu} in Appendix~\ref{app:anti} extends the anticommutation relation to a neighboring power vertex.
This process, applied iteratively, yields an anticommutation relation for every power vertex of the connected graph, thus fulfilling \textit{(i)} in Prop.~\ref{prop:selftestgraphstates}.

\paragraph{Stabilizer conditions after deflation.}
Aiming for condition \textit{(ii)} in Prop.~\ref{prop:selftestgraphstates}, the simulation for all three reference experiments ensures $\langle \Psi \vert \zeta_{u} \vert \Psi \rangle = 1$ and thus
\begin{equation}
\zeta_{u} \vert \Psi \rangle = \vert \Psi \rangle\,, \label{eq:infstabsimcondition}
\end{equation}
for the PE's observables $\zeta_{u}$ corresponding to inflated stabilizers $f_u$ for all $u \in V$, and the PE's state $\vert \Psi \rangle$.
These conditions ensure the replication of stabilizing conditions for the non-inflated graph states.
After the measurement, the chain vertices' observables can be absorbed in the projectors \[ \Pi^{(x)}_{\mathcal{X}} \mathcal{X}_{v} = (-1)^{x_{v}} \Pi^{(x)}_{\mathcal{X}}\,, \] for $v\in V^{\prime} \setminus V$ and traded for a sign factor.
The collection of these factors turns out to be $(x_e)_{u}$ from Eq.\;\eqref{eq:postmeasuregraphstabcond}.

\paragraph{Isometry.}
We apply the standard local isometry that mimics a SWAP gat circuit from Fig.~\ref{fig:swapcirc} to the state $\vert \Psi'\rangle$ with one auxiliary qubit per power vertex.
The full evaluation of the isometry can be found in Appendix~\ref{app:isocom}.
Here, we present that the isometry applied to the postmeasurement state~\eqref{postmeasurepower} is
\begin{equation}
\vert G^{(\mathbf{x})} \rangle \otimes \vert \mathit{junk} \rangle = Z^{\mathbf{x}_e} \vert G \rangle \otimes \vert \mathit{junk} \rangle\,,
\end{equation}
with the residual state $\vert \mathit{junk} \rangle = \Pi_{\mathcal{X}}^{(\mathbf{x})} \bigotimes_{u \in V} (\mathds{1}_{u} + \mathcal{Z}_{u} )/\sqrt{2} \vert \Psi \rangle$ and $Z^{\mathbf{x}_e} = \bigotimes_{u \in V} Z^{(x_e)_{u}}_{u}$.
For any power vertex $u \in V$,
\begin{equation}
g_{u} \Pi_{X}^{(\mathbf{x})} \vert G^{\prime} \rangle = g_{u} \Pi_{X}^{(\mathbf{x})} f_{u} \vert G^{\prime} \rangle = (-1)^{(x_e)_{u}} \Pi_{X}^{(\mathbf{x})} \vert G^{\prime} \rangle \,,
\end{equation}
so that $\Pi_{X}^{(\mathbf{x})} \vert G^{\prime} \rangle = \rangle G^{(\mathbf{x})}\rangle$.
In RE~\ref{def:refe3} for a two-vertex graph, the two neighbors of $v_m$ measure Pauli-$Y$, and the $v_l$ phase depends also on $x_{v_m}$ when $d$ is odd; for $d=1$ replace $\mathcal{X}_{v_l}$ by $\mathcal{Y}_{v_l}$ in the isometry.

\paragraph{Robustness.}
In Appendix~\ref{app:robref3}, we extend the proof to robust equivalence by trailing the proof for the ideal case and bounding all expressions with the techniques present in~\cite{mckague2011self} and~\cite{SOS2015}.
The bounds on the isometry for all the REs discussed are in Tab.~\ref{tab:deltas}.
They scale with $\sqrt{\epsilon}$ where $\epsilon$, in Def.~\ref{def:selfrob} is the deviation of the measurement results from the ideal ones.
The bounds on the self-testing procedure with inflated graph states allowing bounded classical communication are slightly worse than for the non-inflated graph states, which is mainly due to the higher number of measurements required to derive the anticommutation relations.

\subsection{Self-testing families of graph states directly}
\label{sec:full}

Self-testing graph states directly with bounded, graph-dictated communication is generally impossible.
Especially if the extent of the graph is small compared to the communication range, it is impossible to find constraints on the observables since the number of observables can vastly outnumber all possible measurement outcomes.
We show, however, that it is possible to self-test certain graph states directly, that is, without additionally inflating or deflating them, even with bounded communication.

We introduce two reference experiments (REs): one for the $d$-inflated odd circular graph and one for the honeycomb cluster.
We use the following Pauli tensor products on a graph:
\begin{equation} \label{eq:measursxx}
P^{(\sigma)}_{u} = \sigma_{u} \bigotimes_{v \in N_d(u)} X_{v}\,,
\end{equation}
for a vertex $u$, its $d$-neighborhood, and the Pauli operator $\sigma_u\in\{X,Z\}$, all other operators are identity.
We also use alternating patterns
\begin{align}\label{eq:measurxzx}
\Tilde{P}^{(X)}_{u} &= X_{u} \bigotimes_{\substack{ v \in N_{d}(u) \\ \vert u - v \vert \bmod 2 =0 }} Z_{v} \bigotimes_{\substack{ v \in N_{d}(u) \\ \vert u - v \vert \bmod 2 =1 }} X_{v} \,,\\
\Tilde{P}^{(Z)}_{u} &= Z_{u} \bigotimes_{\substack{ v \in N_{d}(u) \\ \vert u - v \vert \bmod 2 =0 }} X_{v} \bigotimes_{\substack{ v \in N_{d}(u) \\ \vert u - v \vert \bmod 2 =1 }} Z_{v} \,, \label{eq:measurzxz}
\end{align}
with $\vert u-v \vert = dist(u,v)$.
Thus, the $d$-neighborhood $N_d(u)$ carries either all $X$ or an alternating $X/Z$ pattern.
\begin{refexp} \label{def:fullcircle}
Consider an odd circular graph $G = (V,E)$ and with $6d + 3$ vertices.
Considering any vertex as a power vertex, this graph can be considered as a $d$-inflated triangle graph - there are $2d+1$ unique ones.
The reference experiment consists of the graph state $\vert G \rangle$, and $2d+1$-times the measurements of RE~\ref{def:refe1} such that every vertex serve once as a power vertex in RE~\ref{def:refe1}.
Furthermore, the reference experiment contains the Pauli measurements
\begin{align} \label{eq:ref4altx}
\Tilde{M}^{(X)}_{u} &= \Tilde{P}^{(X)}_{u} \prod_{v \in N(u)} P^{(Z)}_{v} \,,\\ \label{eq:ref4altz}
\Tilde{M}^{(Z)}_{u} &= P^{(X)}_{u} \prod_{v \in N(u)} \Tilde{P}^{(Z)}_{v} \,, \\
\label{eq:ref4alt}
M^{(alt)}_{u} &= \Tilde{P}^{(X)}_{u} \prod_{v \in N(u)} \Tilde{P}^{(Z)}_{v} \,,
\end{align} for all $u \in V$.
Overall, the experiment consists of $4\vert V \vert + 1 + 2(2d+1)$
\end{refexp}
As submeasurements of $\Tilde{M}^{(X)}_{u}$ and $\Tilde{M}^{(Z)}_{u}$, we target the inflated generators $f_u$; the submeasurements of $M^{(alt)}_{u}$ are the inflated generators $g^{\prime}_u$ .

\begin{refexp}\label{def:refhex}
Consider a honeycomb lattice $H = (V,E)$, which is a combination of two hexagonal lattices, and we divide the set of all vertices accordingly into two distinct subsets $V_{hex_{1}},V_{hex_{2}}$.

The RE consists of the honeycomb cluster state $\vert H \rangle$ and the measurements 
\begin{align}\label{eq:measurehex1}
M^{(alt)}_{1} &= \bigotimes_{u\in V_{hex_{1}}} X_{u} \bigotimes_{v \in V_{hex_{2}}} Z_{u} \,, \\ \label{eq:measurehex2}
M^{(alt)}_{2} &= \bigotimes_{u\in V_{hex_{1}}} Z_{u} \bigotimes_{v \in V_{hex_{2}}} X_{u} \,.
\end{align}
Furthermore, it contains the measurements of RE~\ref{def:refe2} ($\vert N(v_{c}) \vert = 3$) for the $d$-inflated star graph on a \emph{tripoint star} depicted in Fig.~\ref{fig:honey} around for an arbitrary vertex $v_{c}$.
To embed the star graph and measurements from RE~\ref{def:refe2} into the honeycomb cluster, vertices adjacent to the star measure in the Pauli $Z$ basis whenever a neighboring vertex on the star measures in the Pauli $X$ basis.
Additionally, the reference experiment contains the Pauli measurements $\Tilde{M}^{(X)}_{v_{0}}$ in Eq.\;\eqref{eq:ref4altx} and $\Tilde{M}^{(Z)}_{v_{0}}$ in Eq.\;\eqref{eq:ref4altz} for an arbitrary vertex $v_{0} \in N^{\prime}(v_{c})$ that is at distance $2d+1$ from $v_{c}$ on the tripoint star.
\end{refexp}

As submeasurements, we target the generator elements $g_{v}$ and the inflated generator elements $f_{v}$ on the tripoint star for $v \in V_{hex_{1}}$ and $v \in V_{hex_{2}}$ respectively; and $f_{v_{0}},f_{v_{c}}$ as a submeasurement Eq.\;\eqref{eq:ref4altx} and Eq.\;\eqref{eq:ref4altz}.

\begin{thm}
The set of measurement correlations from the reference experiments~\ref{def:fullcircle} and~\ref{def:refhex}, robustly self-test them respectively in terms of Def.~\ref{def:selfrob}, even if one-way classical communication is permitted up to distance $d$ on the edges of the graph $G$ given in the reference experiment.
\label{theo:entire}
\end{thm}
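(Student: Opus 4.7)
The plan is to reduce both cases to the template of Proposition~\ref{prop:selftestgraphstates}, namely to establish (i) an $\epsilon$-anticommutation relation for a pair of observables on every vertex of the target graph, and (ii) $\epsilon$-stabilizer conditions that form a generating set of the graph's stabilizer group, after which the SWAP-based isometry of Fig.~\ref{fig:swapcirc} applied tensorwise over all vertices produces a state $\delta$-equivalent to $\vert G\rangle$ (respectively $\vert H\rangle$). The only conceptual novelty relative to Theorem~\ref{theo:main} is that we no longer deflate: we self-test the inflated/lattice state itself, so the isometry must be run on every vertex simultaneously.

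For Reference Experiment~\ref{def:fullcircle}, I would proceed as follows. First, view the $(6d+3)$-cycle as the $d$-inflation of a triangle; among the $2d+1$ cyclic shifts of the triangle's power-vertex embedding included in the RE, every cycle vertex serves as a power vertex of RE~\ref{def:refe1} in exactly one shift. Applying Lemma~\ref{lem:ref1234} to each shift yields a pair $\{\mathcal{X}_u,\mathcal{Z}_u\}\vert\Psi\rangle=0$ on every vertex $u$, where the observables are labelled by the local input pattern used in that particular shift. This handles condition (i). For condition (ii), I would read off the stabilizer generator $g_u=X_u Z_{u-1}Z_{u+1}$ of the full cycle as a submeasurement of $M^{(alt)}_u$; the alternating $X/Z$ structure of $\tilde P^{(X)}_u,\tilde P^{(Z)}_u$ is arranged so that in the tensor product all chain-site Paulis in the $d$-neighborhood cancel pairwise, leaving exactly $g_u$. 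The auxiliary $\tilde M^{(X)}_u,\tilde M^{(Z)}_u$ contain the inflated generator $f_u$ as a submeasurement, which is precisely what the rotated RE~\ref{def:refe1} uses to move anticommutation between vertices via Lemma~\ref{lem:indu}, ensuring consistency of the $\mathcal{X}_u,\mathcal{Z}_u$ obtained in different shifts. Plugging these two ingredients into the isometry computation of Appendix~\ref{app:isocom} (specialised to ``no deflation'', i.e. skipping the projectors in Eq.~\eqref{postmeasurepower}) yields $\Phi(\vert\Psi\rangle)=\vert G\rangle\otimes\vert\mathit{junk}\rangle$ in the ideal case.

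Reference Experiment~\ref{def:refhex} is handled by the same template but with a different route to condition (i). The tripoint-star measurements around $v_c$ are precisely those of RE~\ref{def:refe2} (with $\vert N(v_c)\vert=3$), so Lemma~\ref{lem:ref1234} gives anticommutation at $v_c$ and along the three chains of the star. To propagate this to the rest of the honeycomb I would invoke Lemma~\ref{lem:indu} iteratively: connectivity of the honeycomb lattice, together with the stabilizer identities $g_v\vert\Psi\rangle=\vert\Psi\rangle$ available as submeasurements of $M^{(alt)}_1,M^{(alt)}_2$ for $v\in V_{hex_1}$ (resp.\ $V_{hex_2}$) and the inflated generators $f_{v_c},f_{v_0}$ obtained from $\tilde M^{(X)}_{v_0},\tilde M^{(Z)}_{v_0}$, lets one slide a known anticommutation from a vertex to each of its neighbors. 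Because the $Z$-basis measurements on the star's ``non-star'' neighbors decouple those vertices in the standard graph-state rewriting rules, the relevant stabilizers acting within the star coincide with those of the bare tripoint star, so the application of RE~\ref{def:refe2} is legitimate. Condition (ii) is then immediate since $M^{(alt)}_{1,2}$ are products of all generators on one sublattice and hence generate the full stabilizer group when combined with a single additional generator (supplied by $\tilde M^{(X)},\tilde M^{(Z)}$). The isometry again yields equivalence to $\vert H\rangle$.

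For the robustness statement, I would mimic Appendix~\ref{app:robref3}: every equality above is replaced by an $O(\epsilon)$ operator inequality on $\vert\Psi\rangle$, propagated through products using the triangle inequality $\Vert (AB-A'B')\vert\Psi\rangle\Vert\le\Vert(A-A')B\vert\Psi\rangle\Vert+\Vert A'(B-B')\vert\Psi\rangle\Vert$ together with unitarity of the observables, and the isometry bound is assembled as in~\cite{SOS2015,mckague2011self} to give $\delta(\epsilon)=O(\sqrt{\epsilon})$ with a prefactor that scales polynomially in $d$ and in the number of measurements used. The hard part is the propagation step for the honeycomb: anticommutation is initially certified at only $O(d)$ vertices around $v_c$, yet must reach every vertex of an unbounded lattice. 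The concern is that each application of Lemma~\ref{lem:indu} multiplies the error by a constant, so naively propagation gives an error that grows with graph distance. Controlling this requires that the stabilizer conditions supplied by $M^{(alt)}_{1,2}$ hold with error independent of position (which they do, since they are tested at each vertex directly), so that the induction constant does not compound across the lattice; making this bookkeeping precise, and verifying that the ``different input patterns'' at a vertex give observables that $\epsilon$-agree on $\vert\Psi\rangle$ rather than only approximately, is the most delicate part of the argument.
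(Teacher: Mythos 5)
Your overall template (Proposition~\ref{prop:selftestgraphstates}, Lemma~\ref{lem:ref1234} for the seed anticommutation, Lemma~\ref{lem:indu} for propagation, then the SWAP isometry) matches the paper's, but you miss the one step that is genuinely new in this theorem, and your substitute for it would not work. Because communication is allowed, the pair $(\mathcal{X}_u,\mathcal{Z}_u)$ certified by the embedded copies of RE~\ref{def:refe1} (resp.\ RE~\ref{def:refe2}) is tied to input patterns in which every vertex within distance $d$ receives the input corresponding to Pauli $X$; the stabilizer submeasurements $g_u$ of $M^{(alt)}_u$ (resp.\ $M^{(alt)}_{1,2}$), however, arise from inputs in which the $d$-neighborhood alternates between $X$ and $Z$, so they involve a \emph{different} pair of local observables $(\X_u,\Z_u)$. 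As written, your conditions (i) and (ii) are therefore established for two different sets of observables and cannot be fed jointly into the isometry. You flag this at the end as needing to verify that ``different input patterns give observables that $\epsilon$-agree on $\vert\Psi\rangle$'' --- but the paper never proves, and does not need, any such agreement; in general these observables need not agree at all.

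What the paper does instead is derive a fresh anticommutation relation $\{\X_u,\Z_u\}\vert\Psi\rangle=0$, Eq.~\eqref{eq:brandnewAC}, from the known one, using precisely the auxiliary measurements $\Tilde{M}^{(X)}_u,\Tilde{M}^{(Z)}_u$ whose role you misattribute to ``ensuring consistency between shifts''. Their $f_u$-submeasurements $\Tilde{\zeta}^{(X)}_u,\Tilde{\zeta}^{(Z)}_u$ are engineered so that one endpoint carries the calligraphic observable and the other endpoint, at distance $2d+1$, carries the sans-serif one; the Lemma~\ref{lem:indu}-type computation $\zeta^{(X)}_u\zeta^{(Z)}_v\zeta^{(X)}_u\zeta^{(Z)}_v\vert\Psi\rangle=\vert\Psi\rangle$ displayed in~\eqref{eq:oddcirc} then converts $\{\mathcal{X}_u,\mathcal{Z}_u\}\vert\Psi\rangle=0$ into $\{\X_v,\Z_v\}\vert\Psi\rangle=0$, and the isometry is built from $\X_u,\Z_u$. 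The honeycomb case works the same way: the conversion is seeded at $v_c$ via $\Tilde{M}^{(X)}_{v_0},\Tilde{M}^{(Z)}_{v_0}$ and then propagated along tripoint stars using the $f_u$-submeasurements of $M^{(alt)}_{1,2}$ (not the $g_u$, which serve only for condition (ii)). Your robustness concern about propagation is handled in the paper by additive, triangle-inequality accumulation of errors over the graph distance $l$, which is why the bounds in Tab.~\ref{tab:deltas} grow polynomially in $\vert V\vert$; that part of your plan is sound.
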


The detailed proof of Theorem~\ref{theo:entire} is in Appendix~\ref{app:selfsymmfull}; here, we offer an intuitive overview.

Consider RE~\ref{def:fullcircle}, if a physical experiment (PE) simulates $2d+1$ times the RE~\ref{def:refe1}, with each vertex alternatively serving as a power vertex or a chain vertex, Lemma~\ref{lem:ref1234} implies the existence of measurement observables $\mathcal{X}_{u}$ and $\mathcal{Z}_{u}$ for every vertex $u$ in the inflated graph such that $\{ \mathcal{X}_{u}, \mathcal{Z}_{u} \} \vert \Psi \rangle = 0$.
This establishes the anticommutation condition \textit{(i)} of Prop.~\ref{prop:selftestgraphstates}.
To complete self-testing we must also certify the stabilizer elements $X_u \bigotimes_{v\in N(u)} Z_v$, which are not directly enforced by repeating RE~\ref{def:refe1}.
We therefore use the submeasurements in \eqref{eq:ref4alt}

Communication changes which local observables are implemented under these inputs: the RE~\ref{def:refe1} settings that established anticommutation place Pauli-$X$ on neighbors of the power vertex, whereas the stabilizer elements $X_u \bigotimes_{v\in N(u)} Z_v$ require those neighbors to see Pauli-$Z$.
Let $\X_u$ (resp.\ $\Z_u$) denote the observable measured at $u$ when inputs match \eqref{eq:measurxzx} (resp.\ \eqref{eq:measurzxz}).
The vertices within communication distance measure in an alternating pattern according to Pauli operator $X$ or $Z$.
Recall that, in contrast, the local observables $\mathcal{X}_{u}$ and $\mathcal{Z}_{u}$ stem from measurement inputs that correspond to the Pauli tensor product~\eqref{eq:measursxx}.
In contrast, $\mathcal{X}_u,\mathcal{Z}_u$ arise from \eqref{eq:measursxx}, where all vertices within communication distance of $u$ measure according to Pauli operator $X$.

Using Eq.\;\eqref{eq:measrefstab12}, replace identities in the $d$-neighborhood by $Z$ to transform those submeasurements into the alternating patterns \eqref{eq:measurxzx}–\eqref{eq:measurzxz}.
This propagates (see Lemma~\ref{lem:indu}) the anticommutation relations from $\{\mathcal{X}_u,\mathcal{Z}_u\}$ to
\begin{equation}\label{eq:brandnewAC}
\{\X_{u},\Z_{u}\}\vert {\Psi} \rangle = 0\,,
\end{equation}
for $u \in V^{\prime}$.

Consequently, we acquire two pairs of anticommuting observables per vertex, denoted as $(\mathcal{X}_{u},\mathcal{Z}_{u})$ and $(\X_{u},\Z_{u})$.
Notably, unlike the former pair, the latter proves instrumental in verifying the adherence to the stabilizer relations of the inflated graph state.
For that, the reference experiment has to involve submeasurement consisting of $X_{u}\bigotimes_{v\in N(u)}Z_{v}$ and identities everywhere else.
The corresponding measurement would have $X$ and $Z$ in such a way that they mimic the tensor product~\eqref{eq:ref4alt}.
In such a case, from the inputs sent to all vertices in $d$-neighborhood of $u$ and two neighboring vertices, all of them are ensured to measure $\X$ and $\Z$.
The simulation of the stabilizing condition equips us with the relation
\begin{equation}\label{eq:brandnewSTAB}
\X_{u}\bigotimes_{v\in N(u)} \Z_{v}\vert {\Psi} \rangle = \vert{\Psi} \rangle\,,
\end{equation}
for $u \in V^{\prime}$.
Finally, the Eq.\;\eqref{eq:brandnewAC} and Eq.\;\eqref{eq:brandnewSTAB} ensure the existence of a local isometry mapping the physical state to the inflated graph state. 

For RE~\ref{def:refhex}, the argument is analogous.
Reference experiment~\ref{def:refe2} on a tripoint star supplies a initial anticommutation near $v_c$.
Then, using $f_{v_0}$ from \eqref{eq:ref4altx} and $f_{v_c}$ from \eqref{eq:ref4altz}, propagates the initial anticommutation relation (all vertices in communication range receive input corresponding to Pauli-$X$) to $\{\X_{v_c},\Z_{v_c}\}$.
Then, they propagate across the lattice via $f_u$ submeasurements from \eqref{eq:measurehex1}–\eqref{eq:measurehex2}.
The stabilizer condition \textit{(ii)} in Prop.~\ref{prop:selftestgraphstates} follow from the $g_u$ submeasurements of \eqref{eq:measurehex1} -- \eqref{eq:measurehex2}, giving \eqref{eq:brandnewAC}–\eqref{eq:brandnewSTAB} on all sites and hence a local isometry to the honeycomb cluster state.

Note that RE~\ref{def:refhex} does not depend on the communication distance $d$, which applies if the lattice is infinite.
As soon as it is finite, the stabilizer elements change at the boundaries and break the symmetry, which gives the $d$-LHV$^\ast$ model an advantage.
It is however possible to consider a honeycomb lattice of finite size with periodic boundary conditions, i.e., mapped onto a torus.
The size of the torus restricts the communication distance $d$ for which Theorem~\ref{theo:entire} holds.
Lastly, we want to emphasize that the result from Theorem~\ref{theo:main} and Theorem~\ref{theo:entire} are, in general, compatible that is, for a given communication distance $d$, they can be combined to self-test certain parts of the graph entirely while using the deflation method elsewhere.

\begin{figure}
\centering
\begin{tikzpicture}[scale=0.9]
\tikzstyle{power}=[draw=black, very thick, shape=circle]
\tikzstyle{chain}=[draw=black, shape=circle, minimum size=5pt]
\tikzstyle{empty}=[]
\tikzstyle{edge}=[draw=black, line width=1pt]
\tikzstyle{thickedge1}=[dashed, draw=myRed, line width=2pt]
\tikzstyle{thickedge}=[dashed, draw=myCyan, line width=2pt]
\tikzstyle{triskelion}=[dashed, draw=myRed, line width =1.5pt]

\node [style=empty] (141) at (-6.062,6.5) {} ;
\node [style=empty] (150) at (-6.062,7.5) {} ;
\node [style=power] (151) at (-6.928,8.0) {} ;
\node [style=empty] (160) at (-6.928,8.75) {} ;
\node [style=empty] (221) at (-6.062,3.5) {} ;
\node [style=empty] (230) at (-6.062,4.5) {} ;
\node [style=power] (231) at (-6.928,5.0) {} ;
\node [style=power] (240) at (-6.928,6.0) {} ;
\node [style=power] (241) at (-7.794,6.5) {} ;
\node [style=power] (250) at (-7.794,7.5) {} ;
\node [style=power] (251) at (-8.66,8.0) {} ;
\node [style=empty] (260) at (-8.66,8.75) {} ;
\node [style=empty] (311) at (-6.928,2.25) {} ;
\node [style=power] (320) at (-6.928,3.0) {} ;
\node [style=power] (321) at (-7.794,3.5) {} ;
\node [style=power] (330) at (-7.794,4.5) {} ;
\node [style=power] (331) at (-8.66,5.0) {} ;
\node [style=power] (340) at (-8.66,6.0) {$v$} ;
\node [style=power] (341) at (-9.526,6.5) {} ;
\node [style=power] (350) at (-9.526,7.5) {} ;
\node [style=power] (351) at (-10.392,8.0) {} ;
\node [style=empty] (360) at (-10.392,8.75) {} ;
\node [style=empty] (411) at (-8.66,2.25) {} ;
\node [style=power] (420) at (-8.66,3.0) {} ;
\node [style=power] (421) at (-9.526,3.5) {} ;
\node [style=power] (430) at (-9.526,4.5) {} ;
\node [style=power] (431) at (-10.392,5.0) {} ;
\node [style=power] (440) at (-10.392,6.0) {} ;
\node [style=power] (441) at (-11.258,6.5) {} ;
\node [style=power] (450) at (-11.258,7.5) {} ;
\node [style=power] (451) at (-12.124,8.0) {} ;
\node [style=empty] (460) at (-12.1239,8.75) {} ;
\node [style=empty] (511) at (-10.392,2.25) {} ;
\node [style=power] (520) at (-10.392,3.0) {} ;
\node [style=power] (521) at (-11.258,3.5) {} ;
\node [style=power] (530) at (-11.258,4.5) {} ;
\node [style=power] (531) at (-12.124,5.0) {$u$} ;
\node [style=power] (540) at (-12.124,6.0) {} ;
\node [style=power] (541) at (-12.99,6.5) {} ;
\node [style=power] (550) at (-12.99,7.5) {} ;
\node [style=power] (551) at (-13.856,8.0) {} ;
\node [style=empty] (560) at (-13.856,8.75) {} ;
\node [style=empty] (611) at (-12.124,2.25) {} ;
\node [style=power] (620) at (-12.1239,3.0) {} ;
\node [style=power] (621) at (-12.99,3.5) {} ;
\node [style=power] (630) at (-12.9898,4.5) {} ;
\node [style=power] (631) at (-13.856,5.0) {} ;
\node [style=power] (640) at (-13.856,6.0) {} ;
\node [style=empty] (641) at (-14.722,6.5) {} ;
\node [style=empty] (650) at (-14.722,7.5) {} ;
\node [style=empty] (711) at (-13.856,2.25) {} ;
\node [style=power] (720) at (-13.856,3.0) {} ;
\node [style=empty] (721) at (-14.722,3.5) {} ;
\node [style=empty] (730) at (-14.722,4.5) {} ;

\draw [style=thickedge1] (141) to (240);
\draw [style=edge] (150) to (151);
\draw [style=edge] (151) to (250);
\draw [style=edge] (151) to (160);
\draw [style=edge] (221) to (320);
\draw [style=thickedge] (230) to (231);
\draw [style=thickedge] (231) to (330);
\draw [style=edge] (231) to (240);
\draw [style=thickedge1] (240) to (241);
\draw [style=thickedge1] (241) to (340);
\draw [style=edge] (241) to (250);
\draw [style=edge] (250) to (251);
\draw [style=edge] (251) to (350);
\draw [style=edge] (251) to (260);
\draw [style=edge] (311) to (320);
\draw [style=edge] (320) to (321);
\draw [style=edge] (321) to (420);
\draw [style=edge] (321) to (330);
\draw [style=thickedge] (330) to (331);

\draw[myCyan, line width=2pt,dash pattern= on 3pt off 5pt] (331) to (430);
\draw[myRed, line width=2pt,dash pattern= on 3pt off 5pt,dash phase=4pt] (331) to (430);

\draw [style=thickedge1] (331) to (340);
\draw [style=thickedge1] (340) to (341);
\draw [style=edge] (341) to (440);
\draw [style=thickedge1] (341) to (350);
\draw [style=thickedge1] (350) to (351);
\draw [style=edge] (351) to (450);
\draw [style=thickedge1] (351) to (360);
\draw [style=edge] (411) to (420);
\draw [style=edge] (420) to (421);
\draw [style=thickedge1] (421) to (520);
\draw [style=thickedge1] (421) to (430);
\draw [style=thickedge] (430) to (431);
\draw [style=thickedge] (431) to (530);
\draw [style=edge] (431) to (440);
\draw [style=edge] (440) to (441);
\draw [style=edge] (441) to (540);
\draw [style=edge] (441) to (450);
\draw [style=edge] (450) to (451);
\draw [style=edge] (451) to (550);
\draw [style=edge] (451) to (460);
\draw [style=thickedge1] (511) to (520);
\draw [style=edge] (520) to (521);
\draw [style=edge] (521) to (620);
\draw [style=edge] (521) to (530);
\draw [style=thickedge] (530) to (531);
\draw [style=thickedge] (531) to (630);
\draw [style=thickedge] (531) to (540);
\draw [style=thickedge] (540) to (541);
\draw [style=edge] (541) to (640);
\draw [style=thickedge] (541) to (550);
\draw [style=thickedge] (550) to (551);
\draw [style=edge] (551) to (650);
\draw [style=thickedge] (551) to (560);
\draw [style=edge] (611) to (620);
\draw [style=edge] (620) to (621);
\draw [style=thickedge] (621) to (720);
\draw [style=thickedge] (621) to (630);
\draw [style=edge] (630) to (631);
\draw [style=edge] (631) to (730);
\draw [style=edge] (631) to (640);
\draw [style=edge] (640) to (641);
\draw [style=thickedge] (711) to (720);
\draw [style=edge] (720) to (721);

\end{tikzpicture}
\caption{We illustrate an excerpt of the honeycomb cluster. The dashed lines sketch the pattern that defines the $d$-inflated star graph, which we call \emph{tripoint star}. The blue tripoint star is defined around vertex $u \in V_{hex_{1}}$ and the red tripoint star around vertex $v \in V_{hex_{2}}$, where $V_{hex_{1}}$ and $V_{hex_{2}}$ are the two hexagonal lattices that define the honeycomb lattice.}
\label{fig:honey}
\end{figure}

\section{Conclusion}
\label{sec:conc}

In this work we show that it is possible to self-test graph states, even when allowing for communication amongst parties.
Surprisingly, we show that for some graph states this can be done directly, that is, in the standard setting of self-testing, but allowing communication between the devices. 
For our communication scenario this is not possible for all graph states, but we are able to self-test them in a round about way, by preparing larger graph states and measuring them, such that they prepare the target graph state at the same time as self-testing it.

In this work, the allowed communication is defined by the same graph as the state itself, this follows the original works where communication is considered for showing non-classicality \cite{barrett2007modeling,Meyer2023}, which is motivated by the fact that it is natural that neighbors in the graph communicate in order to establish the graph state itself. One could, of course, consider different communication graphs. And indeed states which are not graph states. This would be an interesting open set of problems.

Finally, we note that another motivation for our communication scenario is that it has already found application in proving quantum advantage in computation~\cite{bravyi2018quantum,bravyi2020quantum}, communication~\cite{gall2018quantum} and randomness expansion \cite{coudron2018trading}. Given these, and the ever-increasing use of self testing to device independence in quantum technologies \cite{vsupic2018self}, our results offer perspectives for pushing the applications of self testing to new situations and advantage.

\begin{acknowledgments}
We acknowledge funding from the ANR through the ANR-17-CE24-0035 VanQuTe project and the PEPR integrated project EPiQ ANR-22-PETQ-0007 as well as the HQI initiative (www.hqi.fr) award number ANR-22-PNCQ-0002 as part of Plan France 2030.
\end{acknowledgments}

\bibliographystyle{quantum}
\bibliography{main}

\onecolumngrid
\pagebreak
\appendix

\section{Novel reference experiment for graph states}
\label{app:line}

We present a new reference experiment (RE) using Pauli measurements on graph states whose corresponding graph contains an induced subgraph with three vertices in a line.
Here, we give a proof for an ideal self-test, $\epsilon,\delta\rightarrow 0$ in Def.~\ref{def:selfrob}, and show its robustness, $\delta \sim \mathcal{O}(\sqrt{\epsilon})$ in Appendix~\ref{app:robref3}.
For a set $U$ of vertices, we denote $N(U) = \bigcap_{u \in U} N(u)$ as the intersection of all nearest neighborhoods of the vertices in the subset.

\begin{refexp}[Line of three vertices]
Let $G=(V,E)$ be a graph with an induced subgraph of three vertices $\{ v_{l},v_{c},v_{r}\}$ in a line $(v_{l},v_{c}),(v_{r},v_{c}) \in E$ and $(v_{l},v_{r})\notin E$.
The reference experiment consists of the corresponding graph state $\vert G \rangle$, the Pauli measurements corresponding to the generator elements $g_{u} = X_{u} Z^{N(u)}$ for every vertex $u \in V$, $M_{v} = Y_{v_{c}}\, Y_{v} Z^{N(\{ v_{c},v\})}$ for all $v \in N(v_{c})$, and $M_{v_{c}} = Y_{v_{l}} X_{v_{c}} Y_{v_{r}} Z^{N(\{ v_{l}, v_{c}, v_{r}\})}$.
\label{def:refe0}
\end{refexp}
\begin{thm}
If a physical experiment is \emph{compatible} and $\epsilon$-\emph{simulates} in terms of Def.~\ref{def:simrob} the reference experiment~\ref{def:refe0}, they are $\delta$-\emph{equivalent} in terms of Def.~\ref{def:selfrob}.
\label{theo:ref0}
\end{thm}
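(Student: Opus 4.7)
The plan is to invoke Proposition~\ref{prop:selftestgraphstates}: certify anticommuting observables $(\mathcal{X}_u,\mathcal{Z}_u)$ at every vertex $u$, reproduce the stabilizer relations, and assemble the SWAP-based local isometry of Fig.~\ref{fig:swapcirc}. Write $\mathcal{X}_u,\mathcal{Y}_u,\mathcal{Z}_u$ for the physical observables at vertex $u$ induced by the RE inputs that correspond to the Pauli operators $X_u,Y_u,Z_u$ (coming respectively from the measurement of $g_u$, of $M_v$ or $M_{v_c}$, and of $g_v$ for $v\in N(u)$). Condition~(b) of Proposition~\ref{prop:selftestgraphstates} is immediate from $\epsilon$-simulation of the generator measurements $g_u=X_u Z^{N(u)}$, which yield
\begin{equation*}
\mathcal{X}_u\bigotimes_{v\in N(u)}\mathcal{Z}_v\,\vert\Psi\rangle = \vert\Psi\rangle + \mathcal{O}(\sqrt{\epsilon})
\end{equation*}
for every $u\in V$.

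For condition~(a) I establish anticommutation on the induced line via a GHSZ-like Pauli identity $M_{v_l}\,M_{v_r}\,M_{v_c}=-\,g_{v_c}\,S$, where $S$ is a product of stabilizer generators supported away from $\{v_l,v_c,v_r\}$ that absorbs the $Z$-factors arising from the shared neighborhoods $N(\{v_c,v_l\})$, $N(\{v_c,v_r\})$, $N(\{v_l,v_c,v_r\})$. The identity hinges on $Y=iXZ$ and the hypothesis $(v_l,v_r)\notin E$. Under $\epsilon$-simulation it becomes an operator equation on $\vert\Psi\rangle$, and the stabilizer relations from the first step collapse $S$ to the identity. Reading off vertex $v_c$ leaves
\begin{equation*}
\mathcal{Y}_{v_c}\mathcal{X}_{v_c}\mathcal{Y}_{v_c}\vert\Psi\rangle=-\mathcal{X}_{v_c}\vert\Psi\rangle+\mathcal{O}(\sqrt{\epsilon}),
\end{equation*}
which (using $\mathcal{Y}_{v_c}^2\vert\Psi\rangle\approx\vert\Psi\rangle$, another consequence of the simulation) is equivalent to $\{\mathcal{X}_{v_c},\mathcal{Z}_{v_c}\}\vert\Psi\rangle\approx 0$. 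The analogous identities with the roles permuted give anticommutation at $v_l$ and $v_r$. I then transport the anticommutation across $G$ by the induction scheme of Lemma~\ref{lem:indu}: conjugating with the stabilizer condition of a neighbor transfers it across one edge at a time, and connectedness of $G$ covers every vertex.

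With both conditions discharged, I apply the local SWAP circuit of Fig.~\ref{fig:swapcirc} with $(\mathcal{X}_u,\mathcal{Z}_u)$ as Pauli controls; the computation of Section~\ref{sec:maintheorem} and~\cite{mckague2011self} then yields $\Phi(\vert\Psi\rangle)=\vert\psi\rangle\otimes\vert\mathit{junk}\rangle+\mathcal{O}(\delta)$ together with the analogous bound for $\Phi(\mathcal{A}_{k_u}\vert\Psi\rangle)$, with $\delta(\epsilon)=\mathcal{O}(\sqrt{\epsilon})$ tracked via the Cauchy--Schwarz-type estimates of~\cite{mckague2011self,SOS2015}, each stabilizer substitution or anticommutator invocation contributing an additive $\mathcal{O}(\sqrt{\epsilon})$. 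The main technical obstacle is verifying the Pauli identity $M_{v_l}M_{v_r}M_{v_c}=-g_{v_c}\,S$ for arbitrary ambient graphs: the $Z$-factors on the three shared-neighborhood sets must reassemble into a legitimate product of stabilizers $g_w$, and this is precisely where the hypothesis $(v_l,v_r)\notin E$ is essential, guaranteeing that the pair of $Y_{v_c}$'s from $M_{v_l}M_{v_r}$ combines with the $X_{v_c}$ in $M_{v_c}$ to reconstruct $g_{v_c}$ with the correct sign; once this identity is clean, the remaining measurement-level bookkeeping is routine.
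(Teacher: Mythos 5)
Your overall strategy is the paper's: combine the three nonlocal measurements (together with the generator measurement at $v_c$) into a GHSZ-type product that certifies an anticommutation relation at $v_c$, propagate it outward using the stabilizer relations and McKague's lemma, and finish with a SWAP-type isometry. The paper's chain $\mathcal{M}_{v_c}\mathcal{M}_{v_l}\xi_{v_c}\mathcal{M}_{v_r}\vert\psi\rangle=-\vert\psi\rangle$ is exactly your ``$M_{v_l}M_{v_r}M_{v_c}=-g_{v_c}S$, then collapse the residual with a stabilizer relation'', with $\xi_{v_c}$ playing the role of the collapse. (Two small points: since $M_v=g_{v_c}g_v$ and $M_{v_c}=-g_{v_l}g_{v_c}g_{v_r}$, the ideal identity is $M_{v_l}M_{v_c}M_{v_r}=-g_{v_c}$ exactly, so your $S$ is trivial and the one stabilizer you need to invoke is $g_{v_c}$ itself, supported \emph{at} $v_c$, not away from it; and in the physical experiment the ordering of the factors matters because local observables at the same site need not commute --- the two $\mathcal{Y}_{v_c}$'s must sandwich the $\mathcal{X}_{v_c}$, which your displayed ordering $M_{v_l}M_{v_r}M_{v_c}$ does not achieve.)

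The genuine gap is in which pair of observables you certify as anticommuting, and what that forces on the isometry. The product above yields $\mathcal{X}_{v_c}\mathcal{Y}_{v_c}\mathcal{X}_{v_c}\mathcal{Y}_{v_c}\vert\psi\rangle=-\vert\psi\rangle$, i.e.\ $\{\mathcal{X}_{v_c},\mathcal{Y}_{v_c}\}\vert\psi\rangle=0$; no observable $\mathcal{Z}_{v_c}$ appears anywhere in the argument, so your assertion that this ``is equivalent to $\{\mathcal{X}_{v_c},\mathcal{Z}_{v_c}\}\vert\Psi\rangle\approx 0$'' does not follow. Likewise the propagation step $\mathcal{M}_v\xi_{v_c}\mathcal{M}_v\xi_{v_c}$ certifies $\{\mathcal{Y}_v,\mathcal{Z}_v\}\vert\psi\rangle=0$ on $N(v_c)$, and only from distance two onward does one obtain the standard pair $(\mathcal{X}_w,\mathcal{Z}_w)$. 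Because the certified pairs at $v_c$ and at its neighbours involve $\mathcal{Y}$, the plain circuit of Fig.~\ref{fig:swapcirc} cannot be applied there with $(\mathcal{X}_u,\mathcal{Z}_u)$ as controls: the paper must use the modified circuits of Fig.~\ref{fig:swapcirc2} (with controlled-$Y$ and phase gates) at $v_c$ and at $N(v_c)$, and the isometry then outputs the graph state \emph{locally complemented about $v_c$}, which has to be undone by local unitaries before it matches $\vert G\rangle$ --- see the sign bookkeeping in Eqs.~\eqref{ali:appref0-2}--\eqref{eq:appref0sign}. Your proof omits both steps, and they cannot be omitted in general: for the minimal three-vertex line there is no vertex at distance two from $v_c$ from which to back-propagate an $(\mathcal{X},\mathcal{Z})$ pair, so the $\mathcal{Y}$-based isometry and the local-complementation correction are unavoidable.
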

\begin{proof}[Proof for $\epsilon,\delta = 0$]
The first step is to exploit all the implications from the conditions that a PE is \emph{compatible} and $\epsilon$-\emph{simulates} the reference experiment~\ref{def:refe0}.

We determine the expectation values of the RE's measurements by writing them in terms of the graph state's generator elements.
Given that $M_{v_{c}} =- g_{v_{l}} g_{v_{c}} g_{v_{r}}$ and $M_{v} = g_{v_{c}} g_{v}$, the expectation values are
\(
(-1) \, \mathbb{E}_{\vert \psi \rangle} (M_{v_{c}}) = \mathbb{E}_{\vert \psi \rangle} (M_{v}) = \mathbb{E}_{\vert \psi \rangle} (g_{u} ) = 1 \,,
\)
for all $u \in V$ and $v \in N(v_{c})$.
Since the expectation values are equal to $\pm 1$, the measurements have deterministic outcomes, and we can deduce that $(-1)M_{v_{c}} \vert \psi \rangle = M_{v} \vert \psi \rangle = g_{u} \vert \psi \rangle = \vert \psi \rangle$.

Consider a PE that simulates the RE, with the state $\vert \psi \rangle$ and measurements $\xi_{u}$ for $u \in V$, $\mathcal{M}_{v}$ for $v \in N(v_{c})$, and $\mathcal{M}_{v_{c}}$, where $\xi_{u}$ represents the PE's measurement correlations corresponding to the generator element $g_{u}$ in the RE.
We also use calligraphic font for the observables of the PE in contrast to italic font for the RE.
Then, $\xi_{u} = \mathcal{X}_{u} \mathcal{Z}^{N(u)}$, $\mathcal{M}_{v} = \mathcal{Y}_{v_{c}}\, \mathcal{Y}_{v} \mathcal{Z}^{N(\{ v_{c},v\})}$, and $\mathcal{M}_{v_{c}} = \mathcal{Y}_{v_{l}} \mathcal{X}_{v_{c}} \mathcal{Y}_{v_{r}} \mathcal{Z}^{N(\{ v_{l}, v_{c}, v_{r}\})}$.
Recall that we do not assume anything about the observables of the PE but that they are defined over the same Hilbert space as the ones from the RE.
Since the PE simulates the RE, the measurement outcome is also deterministically equal to $\pm1$, and $O^2= \mathds{1}$ for all occurring local operators $O$.
Furthermore,
\( (-1) \, \mathcal{M}_{v_{c}} \vert \psi \rangle = \mathcal{M}_{u} \vert \psi \rangle = \xi_w \vert \psi \rangle = \vert \psi \rangle \,.\)
so that $\mathcal{X}_{u} \mathcal{Z}^{N(u)} \vert \psi \rangle = \vert \psi \rangle$ and $\mathcal{Y}_{v_{c}}\, \mathcal{Y}_{v} \mathcal{Z}^{N(\{ v_{c},u\})} \vert \psi \rangle = \vert \psi \rangle$, and in particular,
\begin{equation} \begin{array}{lllllll}
& \mathcal{Y}_{v_{l}} & \mathcal{X}_{v_{c}} & \mathcal{Y}_{v_{r}} & \mathcal{Z}^{N(\{ v_{l},v_{c}, v_{r}\})} &\vert \psi \rangle= & -\vert \psi \rangle \,,\\
& \mathcal{Y}_{v_{l}} & \mathcal{Y}_{v_{c}} & \mathcal{Z}_{v_{r}} & \mathcal{Z}^{N(\{ v_{l},v_{c}\})\setminus \{ v_{r} \}} & \vert \psi \rangle= & \phantom{-} \vert \psi \rangle \,,\\
& \mathcal{Z}_{v_{l}} & \mathcal{X}_{v_{c}} & \mathcal{Z}_{v_{r}} & \mathcal{Z}^{N( v_{c}) \setminus \{ v_{l},v_{r} \}} &\vert \psi \rangle= & \phantom{-} \vert \psi \rangle \,,\\
& \mathcal{Z}_{v_{l}} & \mathcal{Y}_{v_{c}} & \mathcal{Y}_{v_{r}} & \mathcal{Z}^{N(\{ v_{c}, v_{r}\}) \setminus \{ v_{l} \}} &\vert \psi \rangle= &\phantom{-} \vert \psi \rangle \,.
\end{array} \nonumber \end{equation}
By applying these relations in sequence, we now prove anticommutation relations in terms of their action on the state $\vert \psi \rangle$ for two observables of each vertex.
For vertex $v_{c}$, \( (-1)\vert \psi \rangle = \mathcal{M}_{v_{c}} \mathcal{M}_{v_{l}} \xi_{v_{c}} \mathcal{M}_{v_{r}} \vert \psi \rangle = \mathcal{X}_{v_{c}} \mathcal{Y}_{v_{c}} \mathcal{X}_{v_{c}} \mathcal{Y}_{v_{c}} \vert \psi \rangle \) implies $\{\mathcal{X}_{v_{c}}, \mathcal{Y}_{v_{c}}\} \vert \psi \rangle =0$.
The calculus can be visualized with the above array of relations by multiplying the observables from top to bottom for every vertex.

With $\mathcal{M}_{v} \xi_{v_{c}} \mathcal{M}_{v} \xi_{v_{c}} \vert \psi \rangle = \vert \psi \rangle$ this anticommutation relation propagates to one for every $v \in N(v_{c})$.
As above, it is easy to evaluate by writing the observables row-wise and multiplying them column-wise, namely

\begin{equation} \begin{array}{lllllll}
& \mathcal{Z}_{v} & \mathcal{X}_{v_{c}} & \mathcal{Z}^{N(v_{c}) \setminus \{ v \}} &\vert \psi \rangle= & \vert \psi \rangle \,,\\
& \mathcal{Y}_{v} & \mathcal{Y}_{v_{c}} & \mathcal{Z}^{N(\{v,v_{c}\})} & \vert \psi \rangle= & \vert \psi \rangle \,,\\
& \mathcal{Z}_{v} & \mathcal{X}_{v_{c}} & \mathcal{Z}^{N(v_{c}) \setminus \{ v \}} &\vert \psi \rangle= & \vert \psi \rangle \,,\\
& \mathcal{Y}_{v} & \mathcal{Y}_{v_{c}} & \mathcal{Z}^{N(\{v,v_{c}\})} & \vert \psi \rangle= & \vert \psi \rangle \,.
\end{array} \nonumber \end{equation}
It follows from $\{\mathcal{X}_{v_{c}}, \mathcal{Y}_{v_{c}}\} \vert \psi \rangle =0$ that $\{\mathcal{Y}_{v}, \mathcal{Z}_{v}\} \vert \psi \rangle =0$.
We write \( \vert \psi \rangle = \xi_w \mathcal{M}_{v} \xi_w \mathcal{M}_{v} \vert \psi \rangle \) in an array,
\begin{equation} \begin{array}{lllllll}
& \mathcal{X}_{w} & \mathcal{Z}_{v} & \mathcal{Z}^{N(w) \setminus \{ v \}} &\vert \psi \rangle= & \vert \psi \rangle \,,\\
& \mathcal{Z}_{w} & \mathcal{Y}_{v} & \mathcal{Y}_{v_{c}} \mathcal{Z}^{N(\{v,v_{c}\})\setminus \{ w \}} & \vert \psi \rangle= & \vert \psi \rangle \,,\\
& \mathcal{X}_{w} & \mathcal{Z}_{v} & \mathcal{Z}^{N(w) \setminus \{ v \}} &\vert \psi \rangle= & \vert \psi \rangle \,,\\
& \mathcal{Z}_{w} & \mathcal{Y}_{v} & \mathcal{Y}_{v_{c}} \mathcal{Z}^{N(\{v,v_{c}\})\setminus \{ w \}} & \vert \psi \rangle= & \vert \psi \rangle \,.
\end{array} \nonumber \end{equation}
to visualize that it implies $\{ \mathcal{X}_w,\mathcal{Z}_w\} \vert \psi \rangle = 0$ for every vertex $w \in N(v)\setminus \{v_{c}\}$ from the previous anticommutation relation.

For all other vertices, one sequentially applies Lemma 2 from~\cite{mckague2011self} that propagates the anticommutation relation from a vertex $w \in N(u)$ with $\{ \mathcal{X}_w,\mathcal{Z}_w \} \vert \psi \rangle = 0$ to any other vertex $u$ such that $\{ \mathcal{X}_{u},\mathcal{Z}_{u} \} \vert \psi \rangle = 0$ using $\xi_w \xi_{u} \xi_w \xi_{u} \vert \psi \rangle = \vert \psi \rangle$.
Again, writing the measurements in an array can help visually verify this.

Note that one can obtain an anticommutation relation $\{\mathcal{X},\mathcal{Z}\} \vert \psi \rangle = 0$ for all vertices in particular for $v \in N(v_{c})$ and $v_{c}$ in exchange for measuring on these vertices in a third Pauli basis if the graph consists of more than three vertices.
Specifically, if a vertex $u \in N(N(v_{c}))$ exists, then we have shown that $\{\mathcal{X}_{u},\mathcal{Z}_{u}\} \vert \psi \rangle = 0$, and using the Lemma 2 in~\cite{mckague2011self}, we can back-propagate the anticommutation relation to these vertices.
In this case, the following isometry is slightly easier to evaluate because it is equivalent to the one used in~\cite{mckague2011self}.

We demonstrate the equivalence between the PE and the RE using all the implications arising from the PE simulating the RE.

The local isometry for Def.~\ref{def:equirob} is $\Phi = \prod_{u\in V} \phi_{u}$, with $\phi_{u}$ defined by one of the circuits in Fig.~\ref{fig:swapcirc} and Fig.~\ref{fig:swapcirc2}.
\begin{figure}
\centering
\begin{quantikz}
\lstick{$\ket{ 0}$}& \gate{H} & \ctrl{1}&\qw & \qw & \gate{H} & \ctrl{1} & \gate{S} &\qw \\
\lstick{$\ket{\psi}$}& \qw & \gate{Z} & \qw & \qw & \qw & \gate{Y} &\qw & \qw\\ 
\lstick{$\ket{ 0}$}& \gate{H} & \ctrl{1} & \ctrl{1}& \gate{S} & \gate{H} & \ctrl{1} & \qw & \qw \\
\lstick{$\ket{\psi}$}& \qw & \gate{Y} & \gate{X} & \qw & \qw & \gate{X} &\qw & \qw \end{quantikz}
\caption{As an alternative to the quantum circuit in Fig.~\ref{fig:swapcirc}, each quantum circuit implements a SWAP gate between an input state $\vert \psi \rangle$ and the state $\vert 0 \rangle$ with the Hadamard gate $H$ and Pauli operators $X,Y,Z$.}
\label{fig:swapcirc2}
\end{figure}
The choice of circuit (i.e., operators) depends on the observables in the reference experiment.
Including the respective ancilla qubits, we apply the third circuit to the vertex $v_{c}$, the second circuit to all neighbor vertices $u \in N(v_{c})$ of the vertex $v_{c}$, and the first circuit to all other vertices.

For brevity, we define the corresponding operators $\mathcal{O}_{u} = \mathcal{X}_{u}$ for $u \in V\setminus N(v_{c})$, $\mathcal{O}_{u} = \mathrm{i} \mathcal{Y}_{u}$ for $u \in N(v_{c})$, and $\Bar{\mathcal{Z}}_{u} := (\mathds{1}_{u} + \mathcal{Z}_{u})/2$ for $u \in V \setminus \{ v_{c}\}$ and $\Bar{\mathcal{Z}}_{v_{c}}: = (\mathds{1}_{v_{c}} + \mathrm{i} \mathcal{Y}_{v_{c}} \mathcal{X}_{v_{c}})/2$.
Using the anticommutation relations, we obtain $\mathcal{O}^{a_{u}}_{u} (\Bar{\mathcal{Z}}_{a_{u}})_{u} \vert \psi \rangle = \Bar{\mathcal{Z}}_{u} \mathcal{O}^{a_{u}}_{u} \vert \psi \rangle$.
As a result, the local isometries' action is
\begin{equation}
\Phi (\vert \mathbf{0} \rangle \otimes \vert \psi \rangle ) = \bigotimes_{u\in V} \sum_{a_{u} = 0,1 } \vert a_{u} \rangle \otimes \mathcal{O}^{a_{u}}_{u} \, (\Bar{\mathcal{Z}}_{a_{u}})_{u} \, \vert \psi \rangle = \sum_{\mathbf{a} \in \mathbb{F}^{\vert V \vert}_{2} } \vert \mathbf{a} \rangle \bigotimes_{u\in V} \mathcal{O}^{a_{u}}_{u} \, (\Bar{\mathcal{Z}}_{a_{u}})_{u} \, \vert \psi \rangle = \sum_{\mathbf{a} \in \mathbb{F}^{\vert V \vert}_{2} } \vert \mathbf{a} \rangle \bigotimes_{u\in V} \Bar{\mathcal{Z}}_{u} \, \mathcal{O}^{a_{u}}_{u} \, \vert \psi \rangle \,. \label{eq:isocirceval}
\end{equation}

For further evaluation, we denote $V^{\prime} = V \setminus ( N (v_{c}) \cup \{ v_{c} \})$.
Additionally, consider some arbitrary but fixed order of the vertices $N(v_{c})$ and $V^{\prime}$ which justify the notation $u < v$ for two ordered vertices $u,v$.
Then,
\begin{align}
\Phi (\vert \mathbf{0} \rangle \otimes \vert \psi \rangle ) &= \sum_{\mathbf{a} \in \mathbb{F}^{\vert V \vert}_{2} } \vert \mathbf{a} \rangle \bigotimes_{u\in V} \Bar{\mathcal{Z}}_{u} \, \mathcal{O}^{a_{u}}_{u} \, \vert \psi \rangle \label{ali:equiref0} \\
&= \frac{1}{2^{\vert V \vert}}\sum_{\mathbf{a} \in \mathbb{F}^{\vert V \vert}_{2} } \vert \mathbf{a} \rangle \otimes \Big[ ( \mathds{1}_{v_{c}} + \mathrm{i} \mathcal{Y}_{v_{c}} \mathcal{X}_{v_{c}} ) \, \mathcal{X}^{a_{v_{c}}}_{v_{c}} \bigotimes_{u\in N(v_{c})} (\mathds{1}_{u} + \mathcal{Z}_{u}) \, (\mathrm{i} \mathcal{Y}_{u})^{a_{u}} \bigotimes_{v\in V^{\prime}} (\mathds{1}_{v} + \mathcal{Z}_{v}) \, \mathcal{X}^{a_{v}}_{v} \, \vert \psi \rangle \Big] \nonumber \\
&= \sum_{\mathbf{a} \in \mathbb{F}^{\vert V \vert}_{2} } \vert \mathbf{a} \rangle \bigotimes_{\substack{u\in N(v_{c})\\ v \in V^{\prime}}} (-1)^{a_{u} a_{v} \delta(v \in N(u))} \hspace{-0.25cm} \prod_{u < w \in N(v_{c})} \hspace{-0.25cm}(-1)^{ a_{u} a_w} \Big[ \Bar{\mathcal{Z}}_{v_{c}} \Bar{\mathcal{Z}}_{u} \Bar{\mathcal{Z}}_{v} \, \mathcal{X}^{a_{v_{c}}}_{v_{c}} \, (\mathrm{i} \mathcal{Y}_{v_{c}})^{a_{u}} \mathcal{X}^{a_{v}}_{v} \, \vert \psi \rangle \Big] \label{ali:appref0-2}\\
&= \sum_{\mathbf{a} \in \mathbb{F}^{\vert V \vert}_{2} } \vert \mathbf{a} \rangle \bigotimes_{\substack{u\in N(v_{c})\\ v \in V^{\prime}}} (-1)^{a_{u} a_{v_{c}} + a_{u} a_{v} \delta(v \in N(u))} \hspace{-0.25cm} \prod_{u < w \in N(v_{c})} \hspace{-0.25cm} (-1)^{ a_{u} a_w} \Big[ \Bar{\mathcal{Z}}_{v_{c}} \Bar{\mathcal{Z}}_{u} \Bar{\mathcal{Z}}_{v} \, \mathcal{X}^{a_{v_{c}}}_{v_{c}} \mathcal{X}_{v_{c}}^{a_{u}} \mathcal{X}^{a_{v}}_{v} \vert \psi \rangle \Big] \label{ali:appref0-3} \\
&= \sum_{\mathbf{a} \in \mathbb{F}^{\vert V \vert}_{2} } \vert \mathbf{a} \rangle \bigotimes_{\substack{u\in N(v_{c})\\ v \in V^{\prime}}} (-1)^{a_{u} a_{v_{c}}} \hspace{-0.25cm}\prod_{\substack{u < w \in N(v_{c})\\v < w}} (-1)^{a_{u} a_w + a_{v} a_w \delta(w \in N(v))} \Big[ \Bar{\mathcal{Z}}_{v_{c}} \Bar{\mathcal{Z}}_{u} \Bar{\mathcal{Z}}_{v} \vert \psi \rangle \Big] \label{ali:appref0-4}\\
&= \sum_{\mathbf{a} \in \mathbb{F}^{\vert V \vert}_{2} } (-1)^{g^{\prime}(\mathbf{a})} \vert \mathbf{a} \rangle \bigotimes_{\substack{u\in N(v_{c})\\ v \in V^{\prime}}} \Big[ \Bar{\mathcal{Z}}_{v_{c}} \Bar{\mathcal{Z}}_{u} \Bar{\mathcal{Z}}_{v} \vert \psi \rangle \Big] \,, \label{ali:appref0-5}
\end{align}
with \(g^{\prime}(\mathbf{a}) = \sum_{u \in N(v_{c})} a_{u} a_{v_{c}} + \sum_{\substack{v\in V^{\prime}\\v < w \in V}} a_{v} a_w \, \delta(w \in N(v)) + \sum_{\substack{u, w \in N(v_{c})\\ u<w}} a_{u} a_w \,.\)

For Eq.\;\eqref{ali:appref0-2} choose an arbitrary but fixed order of the vertices $u \in N(v_{c})$ and apply the following three steps sequentially for every vertex.
First, we use $\mathcal{M}_{u} \vert \psi \rangle = \vert \psi \rangle $ as $\mathcal{Y}_{u}^{a_{u}} \vert \psi \rangle = (\mathcal{Z}^{N(\{ v_{c},u\})} \mathcal{Y}_{v_{c}})^{a_{u}} \vert \psi \rangle$, then the anticommutation relation $\mathcal{Z}^{a_{u}}_{v} \mathcal{Y}^{a_{v}}_{v} \vert \psi \rangle = (-1)^{a_{u} a_{v}} \mathcal{Y}^{a_{v}}_{v} \mathcal{Z}^{a_{u}}_{v} \vert \psi \rangle$ for all remaining $u \in N( v_{c})$ and, third, $(\mathds{1}_{v} + \mathcal{Z}_{v}) \mathcal{Z}^{a_{u}}_{v} = (\mathds{1}_{v} + \mathcal{Z}_{v})$.

For Eq.\;\eqref{ali:appref0-3}, we anticommute $ \mathcal{X}^{a_{v_{c}}}_{v_{c}} $, $\mathcal{Y}_{v_{c}}^{ a_{u}}$, and absorb $(\mathds{1}_{v_{c}} + \mathrm{i} \mathcal{Y}_{v_{c}} \mathcal{X}_{v_{c}}) (\mathrm{i}\mathcal{Y}_{v_{c}})^{a_{u}} = (\mathds{1}_{v_{c}} + \mathrm{i} \mathcal{Y}_{v_{c}} \mathcal{X}_{v_{c}}) \mathcal{X}_{v_{c}}^{ a_{u}}$ for every $u \in N(v_{c})$.

For Eq.\;\eqref{ali:appref0-4}, we apply $\mathcal{X}_{v_{c}}^{a_{u}} \vert \psi \rangle = (\mathcal{Z}^{N(v_{c})})^{a_{u}} \vert \psi \rangle $ and $\mathcal{X}_{v}^{a_{v}} \vert \psi \rangle = (\mathcal{Z}^{N(v)})^{a_{v}} \vert \psi \rangle $ for an arbitrarily fixed order of $v \in V \setminus N(v_{c}) $ from $\xi_{v} \vert \psi \rangle = \vert \psi \rangle$.

As a result, we obtain \[ \Phi (\vert \mathbf{0} \rangle \otimes \vert \psi \rangle )= \vert G^{\prime} \rangle \otimes \vert \mathit{junk} \rangle \,, \hspace{0.25cm} \vert \mathit{junk} \rangle = 2^{\vert V \vert/2 } \bigotimes_{u \in V} \Bar{\mathcal{Z}}_{u} \vert \psi \rangle \,,\hspace{0.25cm}\vert G^{\prime} \rangle = \sum_{\mathbf{a} \in \mathbb{F}^{\vert V \vert}_{2} } (-1)^{g^{\prime}(\mathbf{a})} \vert \mathbf{a} \rangle /2^{\vert V \vert/2} \,.\]
After local complementation around vertex $v_{c}$, the graph state is the one in RE~\ref{def:refe0}, which can be attained by local unitary operations~\cite{hein2004multiparty}, such that
\begin{equation} \vert G \rangle = \frac{1}{2^{\vert V \vert/2 }} \sum_{\mathbf{a} \in \mathbb{F}^{\vert V \vert}_{2} } (-1)^{g(\mathbf{a})} \vert \mathbf{a} \rangle \,, ~~g(\mathbf{a}) = \sum_{u \in N(v_{c})} a_{u} a_{v_{c}} + \sum_{\substack{v\in V^{\prime}\\v < w \in V}} a_{v} a_w \, \delta(w \in N(v)) \,.\label{eq:appref0sign}\end{equation}

It remains to show that the isometry maps the measurements from the PE to the local measurements in the RE.
For any vertex $u$, the observables are proportional to $\mathcal{O}_{u}$, $2 (\Bar{\mathcal{Z}}_{0})_{u} - \mathds{1}_{u}$, or a product of the two.

Applying the isometry to $\mathcal{O}_{u} \vert \psi \rangle$ changes $(\mathcal{O}_{u})^{a_{u}} \rightarrow (\mathcal{O}_{u})^{a_{u} +1} $ in Eq.\;\eqref{ali:equiref0}.
Then, one substitutes $a_{u} \rightarrow a^{\prime}_{u} = a_{u} +1 $ so that $(\mathcal{O}_{u})^{a_{u}} \rightarrow (\mathcal{O}_{u})^{a^{\prime}_{u}}$ and $\vert a_{u} \rangle \rightarrow \vert a^{\prime}_{u} + 1 \rangle$, which is the action of the Pauli operator $X_{u}$.

The isometry acts on $\left(2 \Bar{\mathcal{Z}}_{u} - \mathds{1}_{u} \right) \vert \psi \rangle$ by changing $(\mathcal{O}_{u})^{a_{u}} \rightarrow (\mathcal{O}_{u})^{a_{u}} \left(2 \Bar{\mathcal{Z}}_{u} - \mathds{1}_{u} \right) $ in Eq.\;\eqref{ali:equiref0}.
We anticommute $(\mathcal{O}_{u})^{a_{u}} \left(2 \Bar{\mathcal{Z}}_{u} - \mathds{1}_{u} \right) = \left(2 (\Bar{\mathcal{Z}}_{a_{u}})_{u} - \mathds{1}_{u} \right)(\mathcal{O}_{u})^{a_{u}} $.
Absorbing $\Bar{\mathcal{Z}}_{u} (\Bar{\mathcal{Z}}_{a_{u}})_{u} = \delta_{a_{u},1} \Bar{\mathcal{Z}}_{u}$ leaves a phase $\left(2 \delta_{a_{u},1} - \mathds{1}_{u} \right) = (-1)^{a_{u}}$ such that $\vert a_{u} \rangle \rightarrow (-1)^{a_{u}} \vert a_{u} \rangle$, the action of Pauli operator $Z_{u}$.
The composition of the observables in $\mathcal{O}_{u}$ and $2 \Bar{\mathcal{Z}}_{u} - \mathds{1}_{u} $ matches the corresponding Pauli operators in the reference experiment.

We conclude that the physical experiment and the reference experiment are equivalent.
\end{proof}

\section{Tensor-product representation of Pauli correlations}
\label{app:expect}

We start by introducing some useful notation.
Given a set of vertices $V$, we write $O^{V} = \bigotimes_{v \in V} O_{v}$ for local operators $O_{v}$.
In particular, if $O_{v}$ are Pauli operators, $O^V$ describes a Pauli measurement whose outcome is a string of binary outputs from each vertex $v$.
Furthermore, we label the chain vertices $s_{(u,v)}$ as odd (`o') if $s$ is odd and even (`e') if $s$ is even.
If $s \leq d$, the vertex is on the left (`L'), and if $s > d$, on the right (`R') side of the chain.
Figure~\ref{fig:nota} illustrates the notation.
We denote the tensor products of Pauli $X$ operators on the set of vertices for these labels as
\begin{equation*}
\underset{(u,v)}{ {X_{\text{e},\text{L}}}} = \bigotimes^{\lfloor d/2 \rfloor}_{s=1} X_{2s_{(u,v)}} \,,~~ \underset{(u,v)}{ {X_{\text{e},\text{R}}}} = \bigotimes^{d}_{s= \lceil d/2 \rceil} X_{(2s)_{(u,v)}} \,,~~
\underset{(u,v)}{ {X_{\text{o},\text{L}}}} = \bigotimes^{\lceil d/2 \rceil}_{s=1} X_{(2s-1)_{(u,v)}}\,,~~ \underset{(u,v)}{ {X_{\text{o},\text{R}}}} = \bigotimes^{d}_{s= \lfloor d/2 \rfloor} X_{(2s-1)_{(u,v)}}\,,
\end{equation*}
and \( \underset{(u,v)}{ {X_{\text{o}}}} = \underset{(u,v)}{ {X_{\text{o},\text{L}}}} \underset{(u,v)}{ {X_{\text{o},\text{R}}}}\,,\,\underset{(u,v)}{ {X_{\text{e}}}} = \underset{(u,v)}{ {X_{\text{e},\text{L}}}} \underset{(u,v)}{ {X_{\text{e},\text{R}}}}\,. \)
A tensor product of multiple chains, given a Pauli product $P$ over a single one, is \(\underset{(u,N(u))}{P} = \bigotimes_{v \in N(u)} \underset{(u,v)}{P}\,, \) for a set of $V$.

In the following, we use this notation to write all correlation operators from the reference experiments (REs)~\ref{def:refe1},~\ref{def:refe2},~\ref{def:refe3} in terms of tensor products.
Next to them, we also write the observables that a compatible physical experiment (PE) assigns.
We use calligraphic font for the measurements and observables of the PE, in contrast to italic font for the REs.
We denote the measurements of the PE corresponding to the inflated generator element $f_{u}$ by $\zeta_{u}$.
Recall that we label all measurement observables according to the local measurement setting.
The information gained from the classical communication is accounted for in the superscripts.
To keep these minimal, we only label the chain vertices according to the measurement setting of its nearest power vertex.
This is possible since the measurement settings of the chain vertices are mostly the same, and we specifically outline if they change.
The notation for the tensor product operator is in Tab.~\ref{tab:penotation}

The $d$-inflated generator element is \begin{align}f_{u} &=X_{u} Z^{N(u)} \underset{(u,N(u))}{X_{\text{e}}} \,, \label{eq:infstabpauli} \\
\zeta_{u} &= \mathcal{X}_{u} \mathcal{Z}^{N(u)} \underset{(u,N(u))}{\mathcal{X}^{(X)}_{\text{e},L} \mathcal{X}^{(Z)}_{\text{e},R}}
\label{eq:pe-infstabpauli} \,, \end{align}
which also brings out its resemblance with the generator element of the original graph state $g_{u} = X_{u} Z^{N(u)}$.
\begin{figure}
\centering
\begin{tikzpicture}[scale=1.1]
\begin{scope}

\node[style={circle,draw},minimum size=0.85cm] (u) at (0,0) {};
\node[style={circle,draw},minimum size=0.75cm] (121) at (1,0) {o};
\node[style={circle,draw},minimum size=0.75cm] (122) at (2,0) {e};
\node[style={circle,draw},minimum size=0.75cm] (123) at (3,0) {o};
\node[style={circle,draw},minimum size=0.75cm] (124) at (4,0) {e};
\draw (4,1) node [anchor=north east][inner sep=0.75pt] [align=left] {L};
\draw[dashed] (4.6,-1) -- (4.6,1) ;
\draw (5.2,1) node [anchor=north west][inner sep=0.75pt] [align=left] {R};
\node[style={circle,draw},minimum size=0.75cm] (125) at (5.2,0) {o};
\node[style={circle,draw},minimum size=0.75cm] (126) at (6.2,0) {e};
\node[style={circle,draw},minimum size=0.75cm] (127) at (7.2,0) {o};
\node[style={circle,draw},minimum size=0.75cm] (128) at (8.2,0) {e};
\node[style={circle,draw},minimum size=0.85cm] (v) at (9.2,0) {};

\draw (0,-0.6) node [anchor=north][inner sep=0.75pt] [align=left] {$u$};
\draw (1,-0.5) node [anchor=north][inner sep=0.75pt] [align=left] {$1_{(u,v)}$};
\draw (2,-0.5) node [anchor=north][inner sep=0.75pt] [align=left] {$2_{(u,v)}$};
\draw (3,-0.5) node [anchor=north][inner sep=0.75pt] [align=left] {$3_{(u,v)}$};
\draw (4,-0.5) node [anchor=north][inner sep=0.75pt] [align=left] {$4_{(u,v)}$};
\draw (5.2,-0.5) node [anchor=north][inner sep=0.75pt] [align=left] {$5_{(u,v)}$};
\draw (6.2,-0.5) node [anchor=north][inner sep=0.75pt] [align=left] {$6_{(u,v)}$};
\draw (7.2,-0.5) node [anchor=north][inner sep=0.75pt] [align=left] {$7_{(u,v)}$};
\draw (8.2,-0.5) node [anchor=north][inner sep=0.75pt] [align=left] {$8_{(u,v)}$};
\draw (9.2,-0.6) node [anchor=north][inner sep=0.75pt] [align=left] {$v$};

\draw[-] (u) -- (121); 
\draw[-] (121) -- (122);
\draw[-] (122) -- (123);
\draw[-] (123) -- (124);
\draw[-] (124) -- (125);
\draw[-] (125) -- (126);
\draw[-] (126) -- (127);
\draw[-] (127) -- (128);
\draw[-] (128) -- (v);
\end{scope}
\end{tikzpicture}
\caption{For a $d=4$-inflated graph from two connected vertices $u$ and $v$, the vertices' labels are below, and their characterization as odd (o) or even (e) inside the nodes.
The chain is divided into a left (L) and right (R).}
\label{fig:nota}
\end{figure}
Given a graph $G = (V,E)$ and an induced subgraph $G_U = (U \subset V, E_U = (U \times U) \cap E)$, we define
\begin{align}
S_{u} &= Z^{N(u) \setminus U} \, {X_{\text{e}}}_{(u,N(u) \setminus U)} \,, \\
\mathcal{S}_{u} &= \mathcal{Z}^{N(u) \setminus U} \, \underset{(u,N(u) \setminus U)}{\mathcal{X}^{(X)}_{\text{e,L}}\mathcal{X}^{(Z)}_{\text{e,R}}} \,.
\end{align}
For example, if the induced subgraph consisting of a single vertex $u$, the inflated generator element can be written as $f_{u} = X_{u} S_{u}$.

\begin{table}[]
\centering
\begin{tabular}{Sr|Sr|Sl|Sl}
& operators & measurement on power vertices & measurement on chain vertices \\
\cline{1-4}
\multirow{2}{*}{power vertices}& $\mathcal{Q}_{u}$\, & & \\
\cline{2-4}
& $\Tilde{\mathcal{Q}}_{u}$\, & & chain vertices $N^{\prime}(u)$: Pauli-$Y$ \\
\cline{1-4}
\multirow{6}{*}{chain vertices}& $\mathcal{X}^{(\sigma)}_{u}$ & power vertex $u$: Pauli-$\sigma$ \\
\cline{2-4}
& $\Tilde{\mathcal{X}}^{(\sigma)}_{u}$ & power vertex $u$: Pauli-$\sigma$ & chain vertices $N^{\prime}(u)$: Pauli-$Y$ \\
\cline{2-4}
& $\underset{(u,v)}{\mathcal{X}^{(\sigma_{1})}_{\text{a,L}} \mathcal{X}^{(\sigma_{2})}_{\text{a,R}}}$ & \begin{tabular}[t]{l} power vertex $u$: Pauli-$\sigma_{1}$;\\ power vertex $u$: Pauli-$\sigma_{2}$\end{tabular} & \\
\cline{2-4}
& $\underset{(u,v)}{\Tilde{\mathcal{X}}^{(\sigma_{1})}_{\text{a,L}}\Tilde{\mathcal{X}}^{(\sigma_{2})}_{\text{a,R}}}$ & \begin{tabular}[t]{l} power vertex $u$: Pauli-$\sigma_{1}$; \\ power vertex $v$: Pauli-$\sigma_{2}$ \end{tabular} & chain vertices $N^{\prime}(u)$: Pauli-$Y$ \\
\cline{2-4}
& $\underset{(v_{l},v_{r})}{\Tilde{\mathcal{X}}^{(\sigma,R_{z})}_{\text{a},\text{L}}}$ & power vertex $v_{l}$: Pauli-$\sigma$ & \begin{tabular}[t]{l} chain vertex $v_{m}$: $R = (X+Y)/\sqrt{2}$; \\ chain vertices $N^{\prime}(v_{m})$: Pauli-$Y$ \end{tabular} \\
\cline{2-4}
& $\underset{(v_{l},v_{r})}{\Tilde{\mathcal{X}}^{(\sigma,R_{z})}_{\text{a},\text{R}}}$ & power vertex $v_{r}$: Pauli-$\sigma$ & \begin{tabular}[t]{l} chain vertex $v_{m}$: $R_{z}=(X+Y)/\sqrt{2}$; \\ chain vertices $N^{\prime}(v_{m})$: Pauli-$Y$ \end{tabular}
\end{tabular}
\caption{Overview of operator notation for the physical experiment with bounded communication.
By default, all vertices within communication range measure in the Pauli-$X$ basis.
The table indicates all the cases, where vertices within communication range do not measure in the Pauli-$X$ basis.
Note that $\text{a} = \text{o},\text{e}$.}
\label{tab:penotation}
\end{table}

For the reference experiment~\ref{def:refe1},
\begin{align}
C_{V_{c}} &= X^{V_{c}} \left( \bigotimes_{(u,v) \in E_{c}} \underset{(u,v)}{X_{\text{o}}} \, \underset{(u,v)}{X_{\text{e}}} \right) \, S^{V_{c}} \,, \label{eq:measref12} \\ 
C_{ w_{\phantom{0}}} &= X^{V_{c}\setminus (N(w) \cup \{ w \})} \left( \bigotimes_{(u,v) \in E_{c} \setminus (w,N(w))} \underset{(u,v)}{X_{\text{o}}} \, \underset{(u,v)}{X_{\text{e}}}\, \right) Y^{N(w) \cap V_{c}} \underset{ (w,N(w)\cap V_{c} )}{ {X_{\text{o}}}} \, S^{V_{c}\setminus \{ w \}} \,, \label{eq:measref13}
\end{align}
and, for the PE, taking into account the round of classical communication of the measurement settings,
\begin{align}
\mathcal{C}_{V_{c}} &= \mathcal{X}^{V_{c}} \left( \bigotimes_{(u,v) \in E_{c}} \underset{ (u,v) }{ \mathcal{X}^{(X)}_{\text{o,L}} \mathcal{X}^{(X)}_{\text{o,R}} } \,\underset{ (u,v) }{ \mathcal{X}^{(X)}_{\text{e,L}} \mathcal{X}^{(X)}_{\text{e,R}} } \right) \, \mathcal{S}^{V_{c}} \,, \label{eq:pe-measref12} \\ 
\mathcal{C}^{(X)}_{ w_{\phantom{0}}} &= \mathcal{X}^{V_{c}\setminus (N(v) \cup \{ v \})} \left( \bigotimes_{(u,v) \in E_{c} \setminus (N(N(w)),N(w))} \underset{ (u,v) }{ \mathcal{X}^{(X)}_{\text{o,L}} \mathcal{X}^{(X)}_{\text{o,R}} } \,\underset{ (u,v) }{ \mathcal{X}^{(X)}_{\text{e,L}} \mathcal{X}^{(X)}_{\text{e,R}} }\, \right)\mathcal{Y}^{N(w) \cap V_{c}} \,\underset{ (w,N(w)\cap V_{c} )}{ \mathcal{X}^{(X)}_{\text{o,L}} \, \mathcal{X}^{(Y)}_{\text{o,R}}} \, \mathcal{S}^{V_{c}\setminus \{ w \}} \,, \label{eq:pe1-measref13} \\
\mathcal{C}^{(Z)}_{ w_{\phantom{0}}} &= \mathcal{X}^{V_{c}\setminus (N(v) \cup \{ v \})} \left( \bigotimes_{(u,v) \in E_{c} \setminus (N(N(w)),N(w))} \underset{ (u,v) }{ \mathcal{X}^{(X)}_{\text{o,L}} \mathcal{X}^{(X)}_{\text{o,R}} } \,\underset{ (u,v) }{ \mathcal{X}^{(X)}_{\text{e,L}} \mathcal{X}^{(X)}_{\text{e,R}} }\, \right)\mathcal{Y}^{N(w) \cap V_{c}} \underset{ (w,N(w)\cap V_{c} )}{ \mathcal{X}^{(Z)}_{\text{o,L}} \mathcal{X}^{(Y)}_{\text{o,R}}} \, \mathcal{S}^{V_{c}\setminus \{ w \}} \,, \label{eq:pe2-measref13}
\end{align}
for all $w \in V_{c}$.
The operators are illustrated in Fig.~\ref{fig:exref1} for an exemplary graph.

For the reference experiment~\ref{def:refe2},
\begin{equation}
\Tilde{\zeta}_{v_{c}} = \Tilde{\mathcal{X}}_{v_{c}} \mathcal{Z}^{N(v_{c})} \,\underset{(v_{c},N(v_{c}))}{\Tilde{\mathcal{X}}^{(X)}_{\text{e},L}\Tilde{\mathcal{X}}^{(Z)}_{\text{e},R}} \,,\label{eq:pe-refe25}
\end{equation}
and, if $\vert N(v_{c})\vert >2$,
\begin{align}
C_{v_{i}v_{j}} &= X_{v_{c}}\, X_{v_{i}} X_{v_{j}} \,Z^{N(v_{c}) \setminus \{ v_{i},v_{j} \}} \underset{ (v_{c},v_{i}) }{ {\Tilde{X}_{\text{o}}}} \underset{ (v_{c},v_{j}) }{ {\Tilde{X}_{\text{o}}}} \underset{ (v_{c},N(v_{c})\setminus \{v_{i},v_{j}\}) }{ \Tilde{X}_{\text{e}}} \,S_{v_{i}} S_{v_{j}} \,,\label{eq:refe21} \\
\mathcal{C}_{v_{i}v_{j}} &= \Tilde{\mathcal{X}}_{v_{c}}\, \mathcal{X}_{v_{i}} \mathcal{X}_{v_{j}} \,\mathcal{Z}^{N(v_{c}) \setminus \{ v_{i},v_{j} \}} \, \underset{ (v_{c},v_{i}) }{ \Tilde{\mathcal{X}}^{(X)}_{\text{o,L}} \Tilde{\mathcal{X}}^{(X)}_{\text{o,R}} } \,\underset{ (v_{c},v_{j}) }{ \Tilde{\mathcal{X}}^{(X)}_{\text{o,L}} \Tilde{\mathcal{X}}^{(X)}_{\text{o,R}} } \, \underset{ (v_{c},N(v_{c})\setminus \{v_{i},v_{j}\}) }{ \Tilde{\mathcal{X}}^{(X)}_{\text{e,L}}\Tilde{\mathcal{X}}^{(Z)}_{\text{e,R}}} \,\mathcal{S}_{v_{i}} \mathcal{S}_{v_{j}} \,,\label{eq:pe-refe21}
\end{align}
for $V_{3} = \{ v_{1},v_{2},v_{3} \} \subseteq N(v_{c})$ and $v_{i} \neq v_{j}$.
If $N(v_{c}) = \{ v_{1},v_{2} \}$ such that $\vert N(v_{c}) \vert =2$, for the RE
\begin{align}
\Tilde{f}_{v_{c}} &= \,X_{v_{c}}\, Z_{v_{1}} \,Z_{v_{2}} \phantom{ \underset{ (v_{c},v_{1}) }{ \Tilde{X}_{\text{o}}}} \underset{ (v_{c},v_{1}) }{ {\Tilde{X}_{\text{e}}}} \phantom{\underset{ (v_{c},v_{2}) }{ \Tilde{X}_{\text{o}}}} \underset{ (v_{c},v_{2}) }{ \Tilde{X}_{\text{e}}} \,\phantom{S_{v_{1}}} \phantom{S_{v_{2}}} \,,\label{eq:refe210} \\
C_{v_{c}}^{\phantom{X}} &= X_{v_{c}} X_{v_{1}} X_{v_{2}} \, \underset{(v_{c},v_{1})}{ {\Tilde{X}_{\text{o}}}} \phantom{ \underset{ (v_{c},v_{1}) }{ \Tilde{X}_{\text{e}}}} \underset{(v_{c},v_{2})}{ {\Tilde{X}_{\text{o}}}} \phantom{ \underset{ (v_{c},v_{2}) }{ \Tilde{X}_{\text{e}}}}\, S_{v_{1}} S_{v_{2}}\label{eq:refe23} \,, \\
C_{v_{1}} &= \,Y_{v_{c}}\, X_{v_{1}} \,Z_{v_{2}} \underset{ (v_{c},v_{1}) }{ {\Tilde{X}_{\text{o}}}} \phantom{ \underset{ (v_{c},v_{1}) }{ \Tilde{X}_{\text{o}}}}\phantom{\underset{ (v_{c},v_{2}) }{ \Tilde{X}_{\text{o}}}} \underset{ (v_{c},v_{2}) }{ \Tilde{X}_{\text{e}}} \,S_{v_{1}} \,,\label{eq:refe211} \\
C_{v_{1}} &=\, Y_{v_{c}}\, Z_{v_{1}}\, X_{v_{2}} \phantom{\underset{(v_{c},v_{1})}{\Tilde{X}_{\text{o}}}} \underset{ (v_{c},v_{1}) }{ {\Tilde{X}_{\text{e}}}} \underset{ (v_{c},v_{2}) }{ \Tilde{X}_{\text{o}}} \phantom{\underset{(v_{c},v_{2})}{\Tilde{X}_{\text{e}}}} \,\phantom{S_{v_{1}}} S_{v_{2}} \,,\label{eq:refe212} \\
C_{2} \,&= \,\phantom{Y_{v_{c}}}\, Y_{v_{1}} \, Y_{v_{2}}\,\underset{(v_{c},v_{1})}{\Tilde{X}_{\text{o}}} \underset{(v_{c},v_{1})}{\Tilde{X}_{\text{e}}} \, \underset{(v_{c},v_{2})}{\Tilde{X}_{\text{o}} } \underset{(v_{c},v_{2})}{\Tilde{X}_{\text{e}}} \,S_{v_{1}} S_{v_{2}} \,,\label{eq:refe24}
\end{align}
and, for the PE, taking into account the round of classical communication of the measurement settings,
\begin{align}
\Tilde{\zeta}_{v_{c}} &= \Tilde{\mathcal{X}}_{v_{c}}\, \mathcal{X}_{v_{1}} \mathcal{Z}_{v_{2}} \,
~\phantom{\underset{ (v_{c},v_{1}) }{ \Tilde{\mathcal{X}}^{(X)}_{\text{o,L}} \Tilde{\mathcal{X}}^{(Z)}_{\text{o,R}} }} ~\underset{ (v_{c},v_{1}) }{ \Tilde{\mathcal{X}}^{(X)}_{\text{e,L}} \Tilde{\mathcal{X}}^{(Z)}_{\text{e,R}} } \,
~\phantom{\underset{(v_{c},v_{1})}{ \Tilde{\mathcal{X}}^{(X)}_{\text{o,L}} \Tilde{\mathcal{X}}^{(Z)}_{\text{o,R}}}} ~\underset{ (v_{c},v_{2}) }{ \Tilde{\mathcal{X}}^{(X)}_{\text{e,L}}\Tilde{\mathcal{X}}^{(Z)}_{\text{e,R}}} \,
~\mathcal{S}_{v_{1}} \phantom{\mathcal{S}_{v_{2}}}\,,\label{eq:pe-refe210}\\
\mathcal{C}_{v_{c}}^{\phantom{X}} &= \Tilde{\mathcal{X}}_{v_{c}}\, \mathcal{X}_{v_{1}} \mathcal{X}_{v_{2}} \,
~\underset{(v_{c},v_{1})}{ \Tilde{\mathcal{X}}^{(X)}_{\text{o,L}} \Tilde{\mathcal{X}}^{(X)}_{\text{o,R}}} ~\phantom{\underset{ (v_{c},v_{1}) }{ \Tilde{\mathcal{X}}^{(X)}_{\text{e,L}} \Tilde{\mathcal{X}}^{(X)}_{\text{e,R}} }}\, 
~\underset{(v_{c},v_{2})}{ \Tilde{\mathcal{X}}^{(X)}_{\text{o,L}} \Tilde{\mathcal{X}}^{(X)}_{\text{o,R}}} ~\phantom{\underset{ (v_{c},v_{2}) }{ \Tilde{\mathcal{X}}^{(X)}_{\text{e,L}} \Tilde{\mathcal{X}}^{(X)}_{\text{e,R}} }}\,
~\mathcal{S}_{v_{1}} \mathcal{S}_{v_{2}} \label{eq:pe-refe23} \,, \\
\mathcal{C}_{v_{1}} &= \Tilde{\mathcal{Y}}_{v_{c}}\, \mathcal{X}_{v_{1}} \mathcal{Z}_{v_{2}} \, 
~\underset{ (v_{c},v_{1}) }{ \Tilde{\mathcal{X}}^{(Y)}_{\text{o,L}} \Tilde{\mathcal{X}}^{(X)}_{\text{o,R}} }~ \phantom{\underset{ (v_{c},v_{1}) }{ \Tilde{\mathcal{X}}^{(Y)}_{\text{e,L}} \Tilde{\mathcal{X}}^{(X)}_{\text{e,R}} }} \, 
~\phantom{\underset{ (v_{c},v_{1}) }{ \Tilde{\mathcal{X}}^{(Y)}_{\text{o,L}} \Tilde{\mathcal{X}}^{(Z)}_{\text{o,R}} }} ~\underset{ (v_{c},v_{2}) }{ \Tilde{\mathcal{X}}^{(Y)}_{\text{e,L}}\Tilde{\mathcal{X}}^{(Z)}_{\text{e,R}}} \,
~\mathcal{S}_{v_{1}} \phantom{\mathcal{S}_{v_{2}}}\,,\label{eq:pe-refe211}\\
\mathcal{C}_{v_{2}} &= \Tilde{\mathcal{Y}}_{v_{c}}\, \mathcal{Z}_{v_{1}} \mathcal{X}_{v_{2}} \,
~\phantom{\underset{ (v_{c},v_{1}) }{ \Tilde{\mathcal{X}}^{(Y)}_{\text{o,L}}\Tilde{\mathcal{X}}^{(Z)}_{\text{o,R}}}} ~ \underset{ (v_{c},v_{1}) }{ \Tilde{\mathcal{X}}^{(Y)}_{\text{e,L}}\Tilde{\mathcal{X}}^{(Z)}_{\text{e,R}}} \,
\underset{ (v_{c},v_{2}) }{ \Tilde{\mathcal{X}}^{(Y)}_{\text{o,L}} \Tilde{\mathcal{X}}^{(X)}_{\text{o,R}} } ~\phantom{\underset{ (v_{c},v_{2}) }{ \Tilde{\mathcal{X}}^{(Y)}_{\text{e,L}} \Tilde{\mathcal{X}}^{(X)}_{\text{e,R}} }} \,
~\phantom{\mathcal{S}_{v_{1}}} \mathcal{S}_{v_{2}} \,,\label{eq:pe-refe212}\\
\mathcal{C}^{(X)}_{2} &= \phantom{ \Tilde{\mathcal{Y}}_{v_{c}}}\, \mathcal{Y}_{v_{1}} \mathcal{Y}_{v_{2}} \,
~\underset{ (v_{c},v_{1}) }{ \Tilde{\mathcal{X}}^{(X)}_{\text{o,L}}\Tilde{\mathcal{X}}^{(Y)}_{\text{o,R}}} ~ \underset{ (v_{c},v_{1}) }{ \Tilde{\mathcal{X}}^{(X)}_{\text{e,L}} \Tilde{\mathcal{X}}^{(Y)}_{\text{e,R}}} \,
~ \underset{ (v_{c},v_{2}) }{ \Tilde{\mathcal{X}}^{(X)}_{\text{o,L}} \Tilde{\mathcal{X}}^{(Y)}_{\text{o,R}} } ~ \underset{ (v_{c},v_{2}) }{ \Tilde{\mathcal{X}}^{(X)}_{\text{e,L}} \Tilde{\mathcal{X}}^{(Y)}_{\text{e,R}} } \,
~\mathcal{S}_{v_{1}} \mathcal{S}_{v_{2}} \,,\label{eq:pe-refe24-1}\\
\mathcal{C}^{(Y)}_{2} &= \phantom{ \Tilde{\mathcal{Y}}_{v_{c}}}\, \mathcal{Y}_{v_{1}} \mathcal{Y}_{v_{2}} \,
~\underset{ (v_{c},v_{1}) }{ \Tilde{\mathcal{X}}^{(Y)}_{\text{o,L}}\Tilde{\mathcal{X}}^{(Y)}_{\text{o,R}}} ~ \,\underset{ (v_{c},v_{1}) }{ \Tilde{\mathcal{X}}^{(Y)}_{\text{e,L}} \Tilde{\mathcal{Y}}^{(Y)}_{\text{e,R}}} \,
~\underset{ (v_{c},v_{2}) }{ \Tilde{\mathcal{X}}^{(Y)}_{\text{o,L}} \Tilde{\mathcal{X}}^{(Y)}_{\text{o,R}} } ~ \,\underset{ (v_{c},v_{2}) }{ \Tilde{\mathcal{X}}^{(Y)}_{\text{e,L}} \Tilde{\mathcal{X}}^{(Y)}_{\text{e,R}} } \,
~\mathcal{S}_{v_{1}} \mathcal{S}_{v_{2}} \,.\label{eq:pe-refe24-2}
\end{align}
The operator $\Tilde{X}_{(v_{c},v)}$ for every edge $(v_{c},v) \in E$ is, similar to $X_{(v_{c},v)}$, a tensor product of Pauli $X$ operators but on the nearest neighbors of $v_{c}$ where it has the Pauli $Y$ operator.
The operator $\Tilde{\mathcal{X}}_{v_{c}}$ is the same as $\mathcal{X}_{v_{c}}$ but accounts for the nearest neighbors of $v_{c}$ measuring according to the Pauli $Y$ operator instead of the Pauli $X$ operator.
The Pauli correlations are illustrated in Fig.~\ref{fig:exref3} and Fig.~\ref{fig:exref3p} for exemplary graphs.

For the RE~\ref{def:refe3}, the representation of measurement correlations from Eq.\;\eqref{eq:ref3} in tensor product operators is
\begin{align}
I_{1} &= Z_{v_{l}} \underset{(v_{l},v_{r})}{\Tilde{X}_{\text{o},\text{L}}} (R_z)_{v_m} \underset{(v_{l},v_{r})}{\Tilde{X}_{\text{o},\text{R}}} \, X_{v_{r}}\, S_{v_{r}}\,,\label{ali:pcorr41}\\
I_{2} &= O_{v_{l}} \underset{(v_{l},v_{r})}{\Tilde{X}_{\text{e},\text{L}}} (R_z)_{v_m} \underset{(v_{l},v_{r})}{\Tilde{X}_{\text{e},\text{R}}} \, Z_{v_{r}} \, S_{v_{l}} \,, \label{ali:pcorr42}\\
I_{3} &= O_{v_{l}} \underset{(v_{l},v_{r})}{\Tilde{X}_{\text{e},\text{L}}} (R_z)_{v_m} \underset{(v_{l},v_{r})}{\Tilde{X}_{\text{o},\text{R}}} \, X_{v_{r}}\, S_{v_{l}} S_{v_{r}}\,,\label{ali:pcorr43}\\
I_{4} &= Z_{v_{l}} \underset{(v_{l},v_{r})}{\Tilde{X}_{\text{o},\text{L}}} (R_z)_{v_m} \underset{(v_{l},v_{r})}{\Tilde{X}_{\text{e},\text{R}}} \, Z_{v_{r}} \,,\label{ali:pcorr44}\\
P_{1} &= Z_{v_{l}} \underset{(v_{l},v_{r})}{\Tilde{X}_{\text{o},\text{L}}} ~X_{v_m} ~ ~\underset{(v_{l},v_{r})}{\Tilde{X}_{\text{o},\text{R}}} ~ X_{v_{r}}\, S_{v_{r}}\,,\label{ali:pcorr45}\\
P_{2} &= O_{v_{l}} \underset{(v_{l},v_{r})}{\Tilde{X}_{\text{e},\text{L}}} ~X_{v_m} ~ ~\underset{(v_{l},v_{r})}{\Tilde{X}_{\text{e},\text{R}}} \, Z_{v_{r}} \,S_{v_{l}} \,,\label{ali:pcorr46}
\end{align}
and, for the PE taking into account the round of classical communication of the measurement settings for the measurements $\mathcal{M}$,
\begin{align}
\mathcal{I}_{1} &= \Tilde{\mathcal{Z}}^{(R_{z})}_{v_{l}} \underset{(v_{l},v_{r})}{\Tilde{\mathcal{X}}^{(Z,R_{z})}_{\text{o},\text{L}}}~ (\Tilde{\mathcal{R}}^{(Z)}_z)_{v_m}~ \underset{(v_{l},v_{r})}{\Tilde{\mathcal{X}}^{(X,R_{z})}_{\text{o},\text{R}}} \, \Tilde{\mathcal{X}}_{v_{r}}\, \mathcal{S}_{v_{r}}\,,\label{ali:pcorr41-pe}\\
\mathcal{I}_{2} &= \Tilde{\mathcal{O}}^{(R_{z})}_{v_{l}} \underset{(v_{l},v_{r})}{\Tilde{\mathcal{X}}^{(O,R_{z})}_{\text{e},\text{L}}}~ (\Tilde{\mathcal{R}}^{(O)}_z)_{v_m}~ \underset{(v_{l},v_{r})}{\Tilde{\mathcal{X}}^{(Z,R_{z})}_{\text{e},\text{R}}} \, \Tilde{\mathcal{Z}}_{v_{r}} \, \mathcal{S}_{v_{l}} \,,\label{ali:pcorr42-pe}\\
\mathcal{I}_{3} &= \Tilde{\mathcal{O}}^{(R_{z})}_{v_{l}} \underset{(v_{l},v_{r})}{\Tilde{\mathcal{X}}^{(O,R_{z})}_{\text{e},\text{L}}} ~(\Tilde{\mathcal{R}}^{(O)}_z)_{v_m}~ \underset{(v_{l},v_{r})}{\Tilde{\mathcal{X}}^{(X,R_{z})}_{\text{o},\text{R}}} \, \Tilde{\mathcal{X}}_{v_{r}}\, \mathcal{S}_{v_{l}} \mathcal{S}_{v_{r}}\,,\label{ali:pcorr43-pe}\\
\mathcal{I}_{4} &= \Tilde{\mathcal{Z}}^{(R_{z})}_{v_{l}} \underset{(v_{l},v_{r})}{\Tilde{\mathcal{X}}^{(Z,R_{z})}_{\text{o},\text{L}}}~ (\Tilde{\mathcal{R}}^{(Z)}_z)_{v_m} ~\underset{(v_{l},v_{r})}{\Tilde{\mathcal{X}}^{(Z,R_{z})}_{\text{e},\text{R}}} \, \Tilde{\mathcal{Z}}_{v_{r}} \,,\,,\label{ali:pcorr44-pe}\\
\mathcal{P}_{1} &= \Tilde{\mathcal{Z}}_{v_{l}} \underset{(v_{l},v_{r})}{\Tilde{\mathcal{X}}^{(Z)}_{\text{o},\text{L}}} ~\mathcal{X}^{(Z)}_{v_m} ~ ~\underset{(v_{l},v_{r})}{\Tilde{\mathcal{X}}^{(X)}_{\text{o},\text{R}}} ~ \Tilde{\mathcal{X}}_{v_{r}}\, \mathcal{S}_{v_{r}}\,,\label{ali:pcorr45-pe}\\
\mathcal{P}_{2} &= \Tilde{\mathcal{O}}_{v_{l}} \underset{(v_{l},v_{r})}{\Tilde{\mathcal{X}}^{(O)}_{\text{e},\text{L}}} ~\mathcal{X}^{(O)}_{v_m} ~ ~\underset{(v_{l},v_{r})}{\Tilde{\mathcal{X}}^{(Z)}_{\text{e},\text{R}}} \, \Tilde{\mathcal{Z}}_{v_{r}} \,\mathcal{S}_{v_{l}} \,,\label{ali:pcorr46-pe}
\end{align}
with $v_m := d_{(v_{l},v_{r})}$ and \(R_z = \big( X + Y \big)/\sqrt{2} \,,\)
It is $O_{v_{l}} = Y_{v_{l}}$ if $d=1$, and else $O_{v_{l}} = X_{v_{l}}$.
The operator $\Tilde{X}_{(v_{l},v_{r})}$ is, similar to $X_{(v_{l},v_{r})}$, a tensor product of Pauli $X$ operators but on the chain vertices $(d\pm1)_{(v_{l},v_{r})}$, where it has the Pauli $Y$ operator, and on the chain vertex $v_m$, where it has the Pauli identity.
Any observable $\Tilde{\mathcal{Q}}$, is the same as a corresponding observable $\mathcal{Q}$ but accounts for the nearest neighbors of the vertex $v_m$ measuring according to the Pauli $Y$ operator instead of the Pauli $X$ operator.
The operators $I_{1},\dots,I_{4}$ are illustrated in Fig.~\ref{fig:exref4} for an exemplary graph.

\section{Proof of Lemma~\ref{lem:ref1234} \& Lemma~\ref{lem:indu} }
\label{app:anti}
\begin{proof}[Proof [Lemma~\ref{lem:ref1234}, Reference experiment~\ref{def:refe1}]]
If the physical experiment (PE) simulates the reference experiment (RE)~\ref{def:refe1}, the specified correlations from the measurements fulfill
\begin{align}\label{pe1.1}
\langle \Psi\vert \, \zeta_{u} \, \vert\Psi\rangle &= \phantom{-} 1\,, \qquad \forall u \in V\,,\\ \label{pe1.2}
\langle \Psi\vert \, \mathcal{C}_{V} \,\vert\Psi\rangle &= -1\,, \\ \label{pe1.3}
\langle \Psi\vert\mathcal{C}^{(X)}_{v}\vert\Psi\rangle &= \phantom{-} 1\,, \qquad \forall v \in V_{c} \,, \\ \label{pe1.4}
\langle \Psi\vert\mathcal{C}^{(Z)}_{v}\vert\Psi\rangle &= \phantom{-} 1\,, \qquad \forall v \in V_{c} \,.
\end{align}
Since the norm of all physical measurement observables is at most $1$, Eqs.~\eqref{pe1.1} -- \eqref{pe1.4} imply
\begin{align}\label{pe1.5}
\zeta_{u} \,\vert \Psi \rangle &= \vert\Psi\rangle \,, \qquad \forall u \in V \,, \\ \label{pe1.6}
(-1)\,\mathcal{C}_{V} \, \vert \Psi \rangle &= \vert\Psi\rangle \,,\\ \label{pe1.7}
\mathcal{C}^{(X)}_{v}\, \vert \Psi \rangle &= \vert \Psi \rangle \,, \qquad \forall v \in V_{c} \,,\\ \label{pe1.8}
\mathcal{C}^{(Z)}_{v}\, \vert \Psi \rangle &= \vert \Psi \rangle \,, \qquad \forall v \in V_{c} \,.
\end{align}
Recall that any vertex's observable depends on the measurement observables from all vertices $v \in N_d(u)$. In the RE one power vertex from the odd cycle does not measure any observable when the reference measurement is $C_{v}$. However, as explained in the main text, whenever a vertex is not asked to measure anything in the RE, in the PE it is given an input, but the results are marginalized. Thus, physical measurement $\mathcal{C}_{v}$ has two non-equivalent implementations, one in which the vertex which does not measure anything in the RE is asked to perform the measurement corresponding to $X$ and the other in which it is asked to measure $Z$. We write $\mathcal{C}^{(X)}_{v}$ and $\mathcal{C}^{(Z)}_{v}$ for these two physical implementations, respectively.
Both operators have congruent observables except on the chain vertices with the nearest power vertex $v$, thus
\begin{equation} \label{overlap}
\mathcal{C}^{(Z)}_{v} \, \mathcal{C}^{(X)}_{v} = \underset{(v,N(v)\cap V)}{\mathcal{X}^{(Z)}_{\text{o},\text{L}}} \underset{(v,N(v)\cap V)}{\mathcal{X}^{(X)}_{\text{o},\text{L}}}.
\end{equation}
We denote the vertices of the circular induced subgraph in some order along the cycle $V = \{ v_{1}, v_{2},\dots, v_{\vert V \vert} \}$ with $(v_{i},v_{i \pm 1}) \in E$ and $v_{\vert V \vert +1 } = v_{1}$, and in general $v_{i} = v_{i \% \vert V \vert}$
We get the following chain of relations
\begin{align} \label{ali:circleantieval-one}
\vert \Psi \rangle &= (-1) \, \mathcal{C}_{V} \prod^{\vert V \vert }_{k = 1} \zeta_{v_{2k}} \prod^{\vert V \vert }_{k = 1} \mathcal{C}^Z_{v_{2k}} \mathcal{C}^X_{v_{2k}} \vert \Psi \rangle \\ \nonumber
&= (-1) \left( \bigotimes^{\vert V \vert}_{k=1} \mathcal{X}_{v_k} \underset{(v_{k},v_{k+1})}{\mathcal{X}^{(X)}_{\text{o},\text{L}}} \, \underset{(v_{k},v_{k+1})}{\mathcal{X}^{(X)}_{\text{e},\text{L}}} \underset{(v_{k},v_{k+1})}{\mathcal{X}^{(X)}_{\text{o},\text{R}}} \, \underset{(v_{k},v_{k+1})}{\mathcal{X}^{(X)}_{\text{e},\text{R}}} \mathcal{S}_{v_k} \right)\\ \nonumber
&\phantom{= (-1)}\cdot \left( \bigotimes^{\vert V \vert }_{k = 1} \mathcal{Z}_{v_{2k-1}} \mathcal{X}_{v_{2k}} \mathcal{Z}_{v_{2k+1}} \underset{(v_{2k-1},v_{2k})}{\mathcal{X}^{(Z)}_{\text{o},\text{L}}} \underset{(v_{2k-1},v_{2k})}{\mathcal{X}^{(X)}_{\text{o},\text{R}}} \underset{(v_{2k},v_{2k+1})}{\mathcal{X}^{(X)}_{\text{e},\text{L}}} \underset{(v_{2k},v_{2k+1})}{\mathcal{X}^{(Z)}_{\text{e},\text{R}}} S_{v_{2k}} \right)\\ \label{ali:circleantieval-two}
&\phantom{= (-1)}\cdot \left( \bigotimes^{\vert V \vert}_{k=1} \underset{(v_{2k-1},v_{2k})}{\mathcal{X}^{(Z)}_{\text{e},\text{R}}} \underset{(v_{2k-1},v_{2k})}{\mathcal{X}^{(X)}_{\text{e},\text{R}}} \underset{(v_{2k},v_{2k+1})}{\mathcal{X}^{(Z)}_{\text{o},\text{L}}} \underset{(v_{2k},v_{2k+1})}{\mathcal{X}^{(X)}_{\text{o},\text{L}}} \right) \vert \Psi \rangle \\ \nonumber
&= (-1) \left( \bigotimes^{\vert V \vert}_{k=1} \mathcal{X}_{v_k} \, \mathcal{S}_{v_k} \right) \left( \bigotimes^{\vert V \vert }_{k = 1} \mathcal{Z}_{v_{2k-1}} \mathcal{X}_{v_{2k}} \mathcal{Z}_{v_{2k+1}} \, S_{v_{2k}} \right) \left( \bigotimes^{\vert V \vert}_{k = 1} \underset{(v_{k},v_{k+1})}{\mathcal{X}^{(X)}_{\text{o},\text{L}}} \, \underset{(v_{k},v_{k+1})}{\mathcal{X}^{(X)}_{\text{e},\text{L}}} \underset{(v_{k},v_{k+1})}{\mathcal{X}^{(X)}_{\text{o},\text{R}}} \, \underset{(v_{k},v_{k+1})}{\mathcal{X}^{(X)}_{\text{e},\text{R}}} \right)\\ \label{ali:circleantieval-three}
&\phantom{= (-1)}\cdot \left( \bigotimes^{\vert V \vert }_{k = 1} \underset{(v_{k},v_{k+1})}{\mathcal{X}^{(Z)}_{\text{o},\text{L}}} \underset{(v_{k},v_{k+1})}{\mathcal{X}^{(X)}_{\text{o},\text{R}}} \underset{(v_{k},v_{k+1})}{\mathcal{X}^{(X)}_{\text{e},\text{L}}} \underset{(v_{k},v_{k+1})}{\mathcal{X}^{(Z)}_{\text{e},\text{R}}} \right)\left( \bigotimes^{\vert V \vert}_{k=1} \underset{(v_{k},v_{k+1})}{\mathcal{X}^{(Z)}_{\text{e},\text{R}}} \underset{(v_{k},v_{k+1})}{\mathcal{X}^{(X)}_{\text{e},\text{R}}} \underset{(v_{k},v_{k+1})}{\mathcal{X}^{(Z)}_{\text{o},\text{L}}} \underset{(v_{k},v_{k+1})}{\mathcal{X}^{(X)}_{\text{o},\text{L}}} \right) \vert \Psi \rangle \\
\label{ali:circleantieval-four}&= (-1) \left( \bigotimes^{\vert V \vert }_{k = 1} \mathcal{X}_{v_k} \right) \left( \bigotimes^{\vert V \vert }_{k = 1} \mathcal{Z}_{v_{2k-1}} \mathcal{X}_{v_{2k}} \mathcal{Z}_{v_{2k+1}} \right) \vert \Psi \rangle\\
\label{ali:circleantieval-seven}&= (-1) \mathcal{X}_{v_{1}} \mathcal{Z}_{v_{1}} \mathcal{X}_{v_{1}} \mathcal{Z}_{v_{1}} \vert \Psi \rangle \,.
\end{align}
Line~\eqref{ali:circleantieval-one} was obtained by using Eqs.\,\eqref{pe1.5} -- \eqref{pe1.8} for all $v\in V_{c}$. The following three lines, constituting Eq.\;\eqref{ali:circleantieval-two} are obtained by using Eqs.\,\eqref{eq:pe-infstabpauli},~\eqref{eq:pe-measref12} -- \eqref{eq:pe2-measref13}, and Eq.\;\eqref{overlap}. Line~\eqref{ali:circleantieval-three} was obtained by simply rearranging the terms in the previous equation, taking care about commutation relations. Taking that the physical measurement observables square to the identity gives line~\eqref{ali:circleantieval-four}.
There, we are only left with observables on the power vertices.
We can then use the same arguments present in~\cite{mckague2011self}, precisely Eq.\;(15), on the non-inflated circle to arrive at Eq.\;\eqref{ali:circleantieval-seven}.
We obtained the anticommutation relation between observables on the power vertex $v_{1}$.

A simple visualization of the calculus a circle with three vertices $v_{1},v_{2},v_{3}$ is an array whose rows contain the measurements and the deterministic outcome $\chi = \pm 1$ from $ \mathcal{M} \vert \Psi\rangle = \chi \vert \Psi\rangle$.
Multiplying the local observables column-wise, in reverse order of their application to the state, yields their action on the state.
The Pauli measurements are products of all observables shown (in black and light-blue font).
The stabilizer correlations are the products of the observables in black font with their constraints from the simulation conditions on the left.
The observables in light-blue font are not considered in the calculus of the anticommutation in Eqs.\,\eqref{ali:circleantieval-one} -- \eqref{ali:circleantieval-seven}, but they are considered in the round of classical communication and thus influence the observables within communication distance.
\begin{center}
\resizebox{0.8\textwidth}{!}{%
$\hspace{-0.5cm}\begin{array}{r|clllllllllllllll}
& & \multicolumn{2}{c|}{$($v_{3}$,$v_{1}$)$} & \multicolumn{1}{c|}{v_{1}} & \multicolumn{4}{c|}{$($v_{1}$,$v_{2}$)$} & \multicolumn{1}{c|}{v_{2}} & \multicolumn{4}{c|}{$($v_{2}$,$v_{3}$)$} & \multicolumn{1}{c|}{v_{3}} & \multicolumn{2}{c}{$($v_{3}$,$v_{1}$)$} \\[0.05cm]
\mathcal{M}\,,\,\phantom{-}\chi \,& & \multicolumn{1}{c|}{\text{o},\text{R}} & \multicolumn{1}{c|}{\text{e},\text{R}} & \multicolumn{1}{c|}{} & \multicolumn{1}{c|}{\text{o},\text{L}} & \multicolumn{1}{c|}{\text{e},\text{L}} & \multicolumn{1}{c|}{\text{o},\text{R}} & \multicolumn{1}{c|}{\text{e},\text{R}} & \multicolumn{1}{c|}{} & \multicolumn{1}{c|}{\text{o},\text{L}} & \multicolumn{1}{c|}{\text{e},\text{L}} & \multicolumn{1}{c|}{\text{o},\text{R}} & \multicolumn{1}{c|}{\text{e},\text{R}} & \multicolumn{1}{c|}{} & \multicolumn{1}{c|}{\text{o},\text{L}} & \multicolumn{1}{c}{\text{e},\text{L}} \\
& & & & & & & & & & & & & & & \\
\mathcal{C}_{V_{c}}\,,\,-1& & \mathcal{X}^{(X)} & \mathcal{X}^{(X)} & \,\mathcal{X} \, & \mathcal{X}^{(X)} & \mathcal{X}^{(X)} & \mathcal{X}^{(X)} & \mathcal{X}^{(X)} & \,\mathcal{X} \, & \mathcal{X}^{(X)} & \mathcal{X}^{(X)} & \mathcal{X}^{(X)} & \mathcal{X}^{(X)} & \,\mathcal{X} \, & \mathcal{X}^{(X)} & \mathcal{X}^{(X)} \\
\zeta_{v_{2}}\,,\,\phantom{-}1& & \textcolor{myCyan}{\mathcal{X}^{(Z)}} & \textcolor{myCyan}{\mathcal{X}^{(Z)}} & \,\mathcal{Z} \, & \mathcal{X}^{(Z)} & \textcolor{myCyan}{\mathcal{X}^{(Z)}} & \mathcal{X}^{(X)} & \textcolor{myCyan}{\mathcal{X}^{(X)}} & \,\mathcal{X} \, & \textcolor{myCyan}{\mathcal{X}^{(X)}} & \mathcal{X}^{(X)} & \textcolor{myCyan}{\mathcal{X}^{(Z)}} & \mathcal{X}^{(Z)} & \,\mathcal{Z} \, & \textcolor{myCyan}{\mathcal{X}^{(Z)}} & \textcolor{myCyan}{\mathcal{X}^{(Z)}}\\
\zeta_{v_{1}}\,,\,\phantom{-}1& & \mathcal{X}^{(X)} & \textcolor{myCyan}{\mathcal{X}^{(X)}} & \,\mathcal{X} \, & \textcolor{myCyan}{\mathcal{X}^{(X)}} & \mathcal{X}^{(X)} & \textcolor{myCyan}{\mathcal{X}^{(Z)}} & \mathcal{X}^{(Z)} & \,\mathcal{Z} \, & \textcolor{myCyan}{\mathcal{X}^{(Z)}} & \textcolor{myCyan}{\mathcal{X}^{(Z)}} & \textcolor{myCyan}{\mathcal{X}^{(Z)}} & \textcolor{myCyan}{\mathcal{X}^{(Z)}} & \,\mathcal{Z} \, & \mathcal{X}^{(Z)} & \textcolor{myCyan}{\mathcal{X}^{(Z)}} \\
\zeta_{v_{3}}\,,\,\phantom{-}1& & \textcolor{myCyan}{\mathcal{X}^{(Z)}} & \mathcal{X}^{(Z)} & \,\mathcal{Z} \, & \textcolor{myCyan}{\mathcal{X}^{(Z)}} & \textcolor{myCyan}{\mathcal{X}^{(Z)}} & \textcolor{myCyan}{\mathcal{X}^{(Z)}} & \textcolor{myCyan}{\mathcal{X}^{(Z)}} & \,\mathcal{Z} \, & \mathcal{X}^{(Z)} & \textcolor{myCyan}{\mathcal{X}^{(Z)}} & \mathcal{X}^{(X)} & \textcolor{myCyan}{\mathcal{X}^{(X)}} & \,\mathcal{X} \, & \textcolor{myCyan}{\mathcal{X}^{(X)}} & \mathcal{X}^{(X)}\\[0.1cm] \cdashline{1-17}
& & & & & & & & & & & & & & & & \\[-0.25cm]
\mathcal{C}^{(Z)}_{v_{2}}\,,\,\phantom{-}1& & \mathcal{X}^{(Y)} & \mathcal{X}^{(Y)} & \,\mathcal{Y} \, & \textcolor{myCyan}{\mathcal{X}^{(Y)}} & \mathcal{X}^{(Y)} & \textcolor{myCyan}{\mathcal{X}^{(Z)}} & \mathcal{X}^{(Z)} & \,\textcolor{myCyan}{\mathcal{Z}} \, & \mathcal{X}^{(Z)} & \textcolor{myCyan}{\mathcal{X}^{(Z)}} & \mathcal{X}^{(Y)} & \textcolor{myCyan}{\mathcal{X}^{(Y)}} & \,\mathcal{Y} \, & \mathcal{X}^{(Y)} & \mathcal{X}^{(Y)} \\
\mathcal{C}^{(X)}_{v_{2}}\,,\,\phantom{-}1& & \mathcal{X}^{(Y)} & \mathcal{X}^{(Y)} & \,\mathcal{Y} \, & \textcolor{myCyan}{\mathcal{X}^{(Y)}} & \mathcal{X}^{(Y)} & \textcolor{myCyan}{\mathcal{X}^{(X)}} & \mathcal{X}^{(X)} & \,\textcolor{myCyan}{\mathcal{X}} \, & \mathcal{X}^{(X)} & \textcolor{myCyan}{\mathcal{X}^{(X)}} & \mathcal{X}^{(Y)} & \textcolor{myCyan}{\mathcal{X}^{(Y)}} & \,\mathcal{Y} \, & \mathcal{X}^{(Y)} & \mathcal{X}^{(Y)}\\
\mathcal{C}^{(Z)}_{v_{1}}\,,\,\phantom{-}1& & \textcolor{myCyan}{\mathcal{X}^{(Z)}} & \mathcal{X}^{(Z)} & \,\textcolor{myCyan}{\mathcal{Z}} \, & \mathcal{X}^{(Z)} & \textcolor{myCyan}{\mathcal{X}^{(Z)}} & \mathcal{X}^{(Y)} & \textcolor{myCyan}{\mathcal{X}^{(Y)}} & \,\mathcal{Y} \, & \mathcal{X}^{(Y)} & \mathcal{X}^{(Y)} & \mathcal{X}^{(Y)} & \mathcal{X}^{(Y)} & \,\mathcal{Y} \, & \textcolor{myCyan}{\mathcal{X}^{(Y)}} & \mathcal{X}^{(Y)}\\
\mathcal{C}^{(X)}_{v_{1}}\,,\,\phantom{-}1& & \textcolor{myCyan}{\mathcal{X}^{(X)}} & \mathcal{X}^{(X)} & \,\textcolor{myCyan}{\mathcal{X}} \, & \mathcal{X}^{(X)} & \textcolor{myCyan}{\mathcal{X}^{(X)}} & \mathcal{X}^{(Y)} & \textcolor{myCyan}{\mathcal{X}^{(Y)}} & \,\mathcal{Y} \, & \mathcal{X}^{(Y)} & \mathcal{X}^{(Y)} & \mathcal{X}^{(Y)} & \mathcal{X}^{(Y)} & \,\mathcal{Y} \, & \textcolor{myCyan}{\mathcal{X}^{(Y)}} & \mathcal{X}^{(Y)} \\
\mathcal{C}^{(Z)}_{v_{3}}\,,\,\phantom{-}1& & \mathcal{X}^{(Y)} & \textcolor{myCyan}{\mathcal{X}^{(Y)}} & \,\mathcal{Y} \, & \mathcal{X}^{(Y)} & \mathcal{X}^{(Y)} & \mathcal{X}^{(Y)} & \mathcal{X}^{(Y)} & \,\mathcal{Y} \, & \textcolor{myCyan}{\mathcal{X}^{(Y)}} & \mathcal{X}^{(Y)} & \textcolor{myCyan}{\mathcal{X}^{(Z)}} & \mathcal{X}^{(Z)} & \,\textcolor{myCyan}{\mathcal{Z}} \, & \mathcal{X}^{(Z)} & \textcolor{myCyan}{\mathcal{X}^{(Z)}} \\
\mathcal{C}^{(X)}_{v_{3}}\,,\,\phantom{-}1& & \mathcal{X}^{(Y)} & \textcolor{myCyan}{\mathcal{X}^{(Y)}} & \,\mathcal{Y} \, & \mathcal{X}^{(Y)} & \mathcal{X}^{(Y)} & \mathcal{X}^{(Y)} & \mathcal{X}^{(Y)} & \,\mathcal{Y} \, & \textcolor{myCyan}{\mathcal{X}^{(Y)}} & \mathcal{X}^{(Y)} & \textcolor{myCyan}{\mathcal{X}^{(X)}} & \mathcal{X}^{(X)} & \,\textcolor{myCyan}{\mathcal{X}} \, & \mathcal{X}^{(X)} & \textcolor{myCyan}{\mathcal{X}^{(X)}} 
\end{array}$
}
\end{center}
Note that, as the odd cycle has translation symmetry, by relabeling the vertices in the proof, we can obtain the anticommutation relations for all power vertices belonging to the odd cycle: For all $k = 1,\dots,\vert V \vert$, we can obtain
\begin{equation}
\{ \mathcal{X}_{v_k}, \mathcal{Z}_{v_k} \}\vert \Psi \rangle=0\,.
\end{equation}
By applying Lemma~\ref{lem:indu}, we obtain the same anticommutation relation for all power vertices in the inflated graph.
This is possible, even if we only consider the anticommutation relation on the power vertex $v_{1}$
\end{proof}

\begin{proof}[Proof [Lemma~\ref{lem:ref1234}, Reference experiment~\ref{def:refe2}]]
If the physical experiment (PE) simulates the reference experiment (RE)~\ref{def:refe2} for $\vert N(v_{c})\vert > 2$, the specified correlations from the measurements fulfill
\begin{align}\label{pe2.1}
\langle \Psi\vert \, \zeta_{u} \, \vert\Psi\rangle &= \phantom{-} 1\,, \qquad \forall u \in V\,,\\ \label{pe2.2}
\langle \Psi\vert \, \Tilde{\zeta}_{v_{c}} \, \vert\Psi\rangle &= \phantom{-} 1\,,\\ \label{pe2.3}
\langle \Psi\vert \, \mathcal{C}_{v_{i}v_{j}} \,\vert\Psi\rangle &= -1\,, \qquad \forall \, v_{i}\neq v_{j} \wedge v_{i},v_{j} \in V_{3} = \{ v_{1},v_{2},v_{3} \} \subseteq N(v_{c}) \,.
\end{align}
Since the norm of all physical measurement observables is at most $1$, Eqs.~\eqref{pe2.1} -- \eqref{pe2.3} imply
\begin{align} \label{pe2.4}
\zeta_{u} \,\vert \Psi \rangle &= \vert\Psi\rangle \,, \qquad \forall u \in V \,, \\ \label{pe2.5}
\Tilde{\zeta}_{v_{c}} \,\vert \Psi \rangle &= \vert\Psi\rangle \,,\\ \label{pe2.6}
(-1)\,\mathcal{C}_{v_{i}v_{j}} \, \vert \Psi \rangle &= \vert\Psi\rangle \,, \qquad \forall \, v_{i}\neq v_{j} \wedge v_{i},v_{j} \in V_{3} = \{ v_{1},v_{2},v_{3} \} \subseteq N(v_{c}) \,.
\end{align}
Recall that any vertex's observable depends on the measurement observables from all vertices $v \in N_d(u)$.
As explained in the main text, whenever a vertex is not asked to measure anything in the RE, in the PE it is given an input, but the results are marginalized.

Then, we get the following chain of relations
\begin{align} \label{ali:starantieval-one}
\vert \Psi \rangle &= (-1) \Tilde{\zeta}_{v_{c}} \mathcal{C}_{v_{1}v_{2}} \mathcal{C}_{v_{2}v_{3}} \mathcal{C}_{v_{1}v_{3}} \vert \Psi \rangle\\ \nonumber
&= (-1) \,\Tilde{\mathcal{X}}_{v_{c}} \, \mathcal{Z}_{v_{1}} \mathcal{Z}_{v_{2}} \mathcal{Z}_{v_{3}} \mathcal{Z}^{N(v_{c})\setminus V_{3}} \, \phantom{\underset{ (v_{c},v_{1}) }{ \Tilde{\mathcal{X}}^{(X)}_{\text{o,L}} \Tilde{\mathcal{X}}^{(X)}_{\text{o,R}} } \, \phantom{\underset{ (v_{c},v_{2}) }{ \Tilde{\mathcal{X}}^{(X)}_{\text{o,L}} \Tilde{\mathcal{X}}^{(X)}_{\text{o,R}} }}\,\underset{ (v_{c},v_{3}) }{ \Tilde{\mathcal{X}}^{(X)}_{\text{o,L}} \Tilde{\mathcal{X}}^{(X)}_{\text{o,R}} } } ~~~~~~ \underset{(v_{c},N(v_{c}))}{\Tilde{\mathcal{X}}^{(X)}_{\text{e},L}\Tilde{\mathcal{X}}^{(Z)}_{\text{e},R}} \\ \nonumber
&\phantom{= (-1)}\cdot\Tilde{\mathcal{X}}_{v_{c}}\, \mathcal{X}_{v_{1}} \mathcal{X}_{v_{2}} \mathcal{Z}_{v_{2}} \,\mathcal{Z}^{N(v_{c}) \setminus V_{3} } \, \underset{ (v_{c},v_{1}) }{ \Tilde{\mathcal{X}}^{(X)}_{\text{o,L}} \Tilde{\mathcal{X}}^{(X)}_{\text{o,R}} } \,\underset{ (v_{c},v_{2}) }{ \Tilde{\mathcal{X}}^{(X)}_{\text{o,L}} \Tilde{\mathcal{X}}^{(X)}_{\text{o,R}} } \, \phantom{\underset{ (v_{c},v_{3}) }{ \Tilde{\mathcal{X}}^{(X)}_{\text{o,L}} \Tilde{\mathcal{X}}^{(X)}_{\text{o,R}} }}\, \underset{ (v_{c},N(v_{c})\setminus \{v_{1},v_{2}\}) }{ \Tilde{\mathcal{X}}^{(X)}_{\text{e,L}}\Tilde{\mathcal{X}}^{(Z)}_{\text{e,R}}} \,\mathcal{S}_{v_{1}} \mathcal{S}_{v_{2}} \\ \label{ali:starantieval-two}
&\phantom{= (-1)}\cdot\Tilde{\mathcal{X}}_{v_{c}}\,\mathcal{Z}_{v_{1}} \mathcal{X}_{v_{2}} \mathcal{X}_{v_{3}} \,\mathcal{Z}^{N(v_{c}) \setminus V_{3}} \, \phantom{\underset{ (v_{c},v_{1}) }{ \Tilde{\mathcal{X}}^{(X)}_{\text{o,L}} \Tilde{\mathcal{X}}^{(X)}_{\text{o,R}} }}\, \underset{ (v_{c},v_{2}) }{ \Tilde{\mathcal{X}}^{(X)}_{\text{o,L}} \Tilde{\mathcal{X}}^{(X)}_{\text{o,R}} } \,\underset{ (v_{c},v_{3}) }{ \Tilde{\mathcal{X}}^{(X)}_{\text{o,L}} \Tilde{\mathcal{X}}^{(X)}_{\text{o,R}} } \, \underset{ (v_{c},N(v_{c})\setminus \{v_{2},v_{3}\}) }{ \Tilde{\mathcal{X}}^{(X)}_{\text{e,L}}\Tilde{\mathcal{X}}^{(Z)}_{\text{e,R}}} \,\mathcal{S}_{v_{2}} \mathcal{S}_{v_{3}} \\ \label{ali:starantieval-three}
&\phantom{= (-1)}\cdot \Tilde{\mathcal{X}}_{v_{c}}\, \mathcal{X}_{v_{1}} \mathcal{Z}_{v_{2}} \mathcal{X}_{v_{3}} \,\mathcal{Z}^{N(v_{c}) \setminus V_{3} } \, \underset{ (v_{c},v_{1}) }{ \Tilde{\mathcal{X}}^{(X)}_{\text{o,L}} \Tilde{\mathcal{X}}^{(X)}_{\text{o,R}} } \, \phantom{\underset{ (v_{c},v_{2}) }{ \Tilde{\mathcal{X}}^{(X)}_{\text{o,L}} \Tilde{\mathcal{X}}^{(X)}_{\text{o,R}} }}\,\underset{ (v_{c},v_{3}) }{ \Tilde{\mathcal{X}}^{(X)}_{\text{o,L}} \Tilde{\mathcal{X}}^{(X)}_{\text{o,R}} } \, \underset{ (v_{c},N(v_{c})\setminus \{v_{1},v_{3}\}) }{ \Tilde{\mathcal{X}}^{(X)}_{\text{e,L}}\Tilde{\mathcal{X}}^{(Z)}_{\text{e,R}}} \,\mathcal{S}_{v_{1}} \mathcal{S}_{v_{3}} \vert \Psi \rangle\\ 
&= (-1 ) \mathcal{Z}_{v_{1}} \mathcal{X}_{v_{1}} \mathcal{Z}_{v_{1}} \mathcal{X}_{v_{1}} \vert \Psi \rangle \,.
\end{align}
Line~\eqref{ali:starantieval-one} was obtained by using Eqs.\,\eqref{pe2.4} -- \eqref{pe1.6}.
The following four lines, constituting Eq.\;\eqref{ali:starantieval-two} are obtained by using Eq.\;\eqref{eq:pe-refe25} and Eq.\;\eqref{eq:pe-refe21}.
Line~\eqref{ali:starantieval-three} was obtained by using that the physical measurement observables square to the identity, so that we are only left with observables on the power vertex $v_{1}$.
We obtained the anticommutation relation between observables on the power vertex $v_{1}$.

A simple visualization of the calculation for $d=2$ can be depicted as an array, where each row contains the measurements $\mathcal{M}$ with submeasurements $\mathcal{C}$, along with the deterministic outcome $\chi = \pm 1$ from $ \mathcal{C} \vert \Psi\rangle = \chi \vert \Psi\rangle$:
\begin{center}
\resizebox{0.99\textwidth}{!}{%
$\begin{array}{ll|lllllllllllllllllllll}
\mathcal{C}(\textcolor{myCyan}{\mathcal{M}}) & \phantom{-} \chi & & v_{c} & & \hspace{-0.1cm}1_{(v_{c},v_{1})} & \hspace{-0.1cm}2_{(v_{c},v_{1})} & \hspace{-0.1cm}3_{(v_{c},v_{1})} & \hspace{-0.1cm}4_{(v_{c},v_{1})} & v_{1} & & \hspace{-0.1cm} 1_{(v_{c},v_{2})} & \hspace{-0.1cm}2_{(v_{c},v_{2})} & \hspace{-0.1cm}3_{(v_{c},v_{2})} & \hspace{-0.1cm}4_{(v_{c},v_{2})} & v_{2} & & \hspace{-0.1cm}1_{(v_{c},v_{3})} & \hspace{-0.1cm}2_{(v_{c},v_{3})} & \hspace{-0.1cm}3_{(v_{c},v_{3})} & \hspace{-0.1cm}4_{(v_{c},v_{3})} & v_{3}\\[0.25cm]
\Tilde{\zeta}_{v_{c}}\,,& \phantom{-}1 & & \Tilde{\mathcal{X}} & & \textcolor{myCyan}{\mathcal{Y}^{(X)}} & \Tilde{\mathcal{X}}^{(X)} & \textcolor{myCyan}{\Tilde{\mathcal{X}}^{(Z)}} & \Tilde{\mathcal{X}}^{(Z)} &\, \mathcal{Z} & & \textcolor{myCyan}{\mathcal{Y}^{(X)}} & \Tilde{\mathcal{X}}^{(X)} & \textcolor{myCyan}{\Tilde{\mathcal{X}}^{(Z)}} & \Tilde{\mathcal{X}}^{(Z)} &\, \mathcal{Z} & & \textcolor{myCyan}{\mathcal{Y}^{(X)}} & \Tilde{\mathcal{X}}^{(X)} & \textcolor{myCyan}{\Tilde{\mathcal{X}}^{(Z)}} & \Tilde{\mathcal{X}}^{(Z)} &\, \mathcal{Z} \\
\mathcal{C}_{v_{1},v_{2}}\,,& - 1& & \Tilde{\mathcal{X}} & & \mathcal{Y}^{(X)} & \textcolor{myCyan}{\Tilde{\mathcal{X}}^{(X)}} & \Tilde{\mathcal{X}}^{(X)} & \textcolor{myCyan}{\Tilde{\mathcal{X}}^{(X)}} &\, \mathcal{X} & & \mathcal{Y}^{(X)} & \textcolor{myCyan}{\Tilde{\mathcal{X}}^{(X)}} & \Tilde{\mathcal{X}}^{(X)} & \textcolor{myCyan}{\Tilde{\mathcal{X}}^{(X)}} &\, \mathcal{X} & & \textcolor{myCyan}{\mathcal{Y}^{(X)}} & \Tilde{\mathcal{X}}^{(X)} & \textcolor{myCyan}{\Tilde{\mathcal{X}}^{(Z)}} & \Tilde{\mathcal{X}}^{(Z)} & \, \mathcal{Z} \\
\mathcal{C}_{v_{2},v_{3}}\,,& - 1 & & \Tilde{\mathcal{X}} & & \textcolor{myCyan}{\mathcal{Y}^{(X)}} & \Tilde{\mathcal{X}}^{(X)} & \textcolor{myCyan}{\Tilde{\mathcal{X}}^{(Z)}} & \Tilde{\mathcal{X}}^{(Z)} & \, \mathcal{Z} & & \mathcal{Y}^{(X)} & \textcolor{myCyan}{\Tilde{\mathcal{X}}^{(X)}} & \Tilde{\mathcal{X}}^{(X)} & \textcolor{myCyan}{\Tilde{\mathcal{X}}^{(X)}} &\, \mathcal{X} & & \mathcal{Y}^{(X)} & \textcolor{myCyan}{\Tilde{\mathcal{X}}^{(X)}} & \Tilde{\mathcal{X}}^{(X)} & \textcolor{myCyan}{\Tilde{\mathcal{X}}^{(X)}} &\, \mathcal{X} \\
\mathcal{C}_{v_{1},v_{3}}\,,& - 1 & & \Tilde{\mathcal{X}} & & \mathcal{Y}^{(X)} & \textcolor{myCyan}{\Tilde{\mathcal{X}}^{(X)}} & \Tilde{\mathcal{X}}^{(X)} & \textcolor{myCyan}{\Tilde{\mathcal{X}}^{(X)}} &\, \mathcal{X} & & \textcolor{myCyan}{\mathcal{Y}^{(X)}} & \Tilde{\mathcal{X}}^{(X)} & \textcolor{myCyan}{\Tilde{\mathcal{X}}^{(Z)}} & \Tilde{\mathcal{X}}^{(Z)} & \, \mathcal{Z} & & \mathcal{Y}^{(X)} & \textcolor{myCyan}{\Tilde{\mathcal{X}}^{(X)}} & \Tilde{\mathcal{X}}^{(X)} & \textcolor{myCyan}{\Tilde{\mathcal{X}}^{(X)}} &\, \mathcal{X} \\[0.1cm] \cdashline{1-22}
& & & & & & & & & & & & & & & & & & & & \\[-0.25cm]
\zeta_{v_{c}}\,,& \phantom{-}1 & & \mathcal{X} & & \textcolor{myCyan}{\mathcal{X}^{(X)}} & \mathcal{X}^{(X)} & \textcolor{myCyan}{\mathcal{X}^{(Z)}} & \mathcal{X}^{(Z)} &\, \mathcal{Z} & & \textcolor{myCyan}{\mathcal{X}^{(X)}} & \mathcal{X}^{(X)} & \textcolor{myCyan}{\mathcal{X}^{(Z)}} & \mathcal{X}^{(Z)} &\, \mathcal{Z} & & \textcolor{myCyan}{\mathcal{X}^{(X)}} & \mathcal{X}^{(X)} & \textcolor{myCyan}{\mathcal{X}^{(Z)}} & \mathcal{X}^{(Z)} &\, \mathcal{Z} \\
\zeta_{v_{1}}\,,& \phantom{-} 1& & \mathcal{Z} & & \mathcal{X}^{(X)} & \textcolor{myCyan}{\mathcal{X}^{(X)}} & \mathcal{X}^{(X)} & \textcolor{myCyan}{\mathcal{X}^{(X)}} &\, \mathcal{X} & & \textcolor{myCyan}{\mathcal{X}^{(Z)}} & \textcolor{myCyan}{\mathcal{X}^{(Z)}} & \textcolor{myCyan}{\mathcal{X}^{(X)}} & \textcolor{myCyan}{\mathcal{X}^{(X)}} & \, \textcolor{myCyan}{\mathcal{X}} & & \textcolor{myCyan}{\mathcal{X}^{(Z)}} & \textcolor{myCyan}{\mathcal{X}^{(Z)}} & \textcolor{myCyan}{\mathcal{X}^{(X)}} & \textcolor{myCyan}{\mathcal{X}^{(X)}} & \, \textcolor{myCyan}{\mathcal{X}} \\
\zeta_{v_{c}}\,,& \phantom{-}1 & & \mathcal{X} & & \textcolor{myCyan}{\mathcal{X}^{(X)}} & \mathcal{X}^{(X)} & \textcolor{myCyan}{\mathcal{X}^{(Z)}} & \mathcal{X}^{(Z)} &\, \mathcal{Z} & & \textcolor{myCyan}{\mathcal{X}^{(X)}} & \mathcal{X}^{(X)} & \textcolor{myCyan}{\mathcal{X}^{(Z)}} & \mathcal{X}^{(Z)} &\, \mathcal{Z} & & \textcolor{myCyan}{\mathcal{X}^{(X)}} & \mathcal{X}^{(X)} & \textcolor{myCyan}{\mathcal{X}^{(Z)}} & \mathcal{X}^{(Z)} &\, \mathcal{Z} \\
\zeta_{v_{1}}\,,& \phantom{-} 1& & \mathcal{Z} & & \mathcal{X}^{(X)} & \textcolor{myCyan}{\mathcal{X}^{(X)}} & \mathcal{X}^{(X)} & \textcolor{myCyan}{\mathcal{X}^{(X)}} &\, \mathcal{X} & & \textcolor{myCyan}{\mathcal{X}^{(Z)}} & \textcolor{myCyan}{\mathcal{X}^{(Z)}} & \textcolor{myCyan}{\mathcal{X}^{(X)}} & \textcolor{myCyan}{\mathcal{X}^{(X)}} & \, \textcolor{myCyan}{\mathcal{X}} & & \textcolor{myCyan}{\mathcal{X}^{(Z)}} & \textcolor{myCyan}{\mathcal{X}^{(Z)}} & \textcolor{myCyan}{\mathcal{X}^{(X)}} & \textcolor{myCyan}{\mathcal{X}^{(X)}} & \, \textcolor{myCyan}{\mathcal{X}} &
\end{array}$
}
\end{center}
\noindent The Pauli measurements are represented as products of all observables shown (in black and light-blue font), while the stabilizer submeasurements consist of the products of the observables in black font.
Multiplying these column-wise yields their action on the state.
The observables in the light-blue font are not directly involved in the calculation of the anticommutation in Eq.\;\eqref{ali:starantieval-one}, but they impact the observable within communication distance and confound classical models based on communication.
All measurement observables are labeled according to the local measurement setting.
The information gained from classical communication is accounted for in superscripts.
To keep these minimal, we only label the chain vertices according to the measurement setting of its nearest power vertex since all other chain vertices measure in the same basis.

If the physical experiment (PE) simulates the reference experiment (RE)~\ref{def:refe2} for $N(v_{c}) = \{v_{1},v_{2} \} $ such that $\vert N(v_{c}) \vert =2 $, the specified correlations from the measurements fulfill
\begin{align}\label{pe22.1}
\langle \Psi\vert \, \zeta_{u} \, \vert\Psi\rangle &= \phantom{-} 1\,, \qquad \forall u \in V\,,\\ \label{pe22.2}
\langle \Psi\vert \, \Tilde{\zeta}_{v_{c}} \, \vert\Psi\rangle &= \phantom{-} 1\,,\\ \label{pe22.3}
\langle \Psi\vert \, \mathcal{C}_{v_{c}} \,\vert\Psi\rangle &= -1\,,\\ \label{pe22.4}
\langle \Psi\vert \, \mathcal{C}_{v_{1}} \,\vert\Psi\rangle &= \phantom{-}1\,,\\ \label{pe22.5}
\langle \Psi\vert \, \mathcal{C}_{v_{2}} \,\vert\Psi\rangle &= \phantom{-}1\,,\\ \label{pe22.6}
\langle \Psi\vert \, \mathcal{C}^{(X)}_{2} \,\vert\Psi\rangle &= \phantom{-}1\,.\\ \label{pe22.7}
\langle \Psi\vert \, \mathcal{C}^{(Y)}_{2} \,\vert\Psi\rangle &= \phantom{-}1\,.
\end{align}
Since the norm of all physical measurement observables is at most $1$, Eqs.~\eqref{pe22.1} -- \eqref{pe22.7} imply
\begin{align} \label{pe22.8}
\zeta_{u} \, \vert\Psi\rangle &= \vert\Psi\rangle\,, \qquad \forall u \in V\,,\\ \label{pe22.9}
\Tilde{\zeta}_{v_{c}} \, \vert\Psi\rangle &= \vert\Psi\rangle \,,\\ \label{pe22.10}
(-1)\mathcal{C}_{v_{c}} \,\vert\Psi\rangle &= \vert\Psi\rangle \,,\\ \label{pe22.11}
\mathcal{C}_{v_{1}} \,\vert\Psi\rangle &= \vert\Psi\rangle\,,\\ \label{pe22.12}
\mathcal{C}_{v_{2}} \,\vert\Psi\rangle &= \vert\Psi\rangle\,,\\ \label{pe22.13}
\mathcal{C}^{(X)}_{2} \,\vert\Psi\rangle &= \vert\Psi\rangle \,.\\ \label{pe22.14}
\mathcal{C}^{(Y)}_{2} \,\vert\Psi\rangle &= \vert\Psi\rangle \,.
\end{align}
Recall that any vertex's observable depends on the measurement observables from all vertices $v \in N_d(u)$.
Whenever a vertex is not asked to measure anything in the RE, in the PE it is given an input, but the results are marginalized.
Thus, physical measurement $\mathcal{C}_{v}$ has two non-equivalent implementations, one in which the vertex which does not measure anything in the RE is asked to perform the measurement corresponding to $X$ and the other in which it is asked to measure $Z$. We write $\mathcal{C}^{(X)}_{2}$ and $\mathcal{C}^{(Y)}_{2}$ for these two physical implementations, respectively.
Both operators have congruent observables except on the chain vertices with the nearest power vertex $v_{c}$, thus
\begin{equation} \label{pe2-overlap}
\mathcal{C}^{(Y)}_{v_{c}} \, \mathcal{C}^{(X)}_{v_{c}} = \underset{(v_{c},N(v_{c}))}{\Tilde{\mathcal{X}}^{(Y)}_{\text{o},\text{L}} \Tilde{\mathcal{X}}^{(X)}_{\text{e},\text{L}}} \,.
\end{equation}
Then, we get the following chain of relations
\begin{align} \label{ali:star2antieval-one}
\vert \Psi \rangle &= (-1) \mathcal{C}_{v_{c}} \Tilde{\zeta}_{v_{c}} \mathcal{C}_{v_{1}} \mathcal{C}_{v_{2}} \mathcal{C}^{(Y)}_{2} \mathcal{C}^{(X)}_{2} \vert \Psi \rangle \\ \nonumber
&= (-1) ~\Tilde{\mathcal{X}}_{v_{c}}\, \mathcal{X}_{v_{1}} \mathcal{X}_{v_{2}} \, \underset{(v_{c},v_{1})}{ \Tilde{\mathcal{X}}^{(X)}_{\text{o,L}} \Tilde{\mathcal{X}}^{(X)}_{\text{o,R}}}\, \underset{(v_{c},v_{2})}{ \Tilde{\mathcal{X}}^{(X)}_{\text{o,L}} \Tilde{\mathcal{X}}^{(X)}_{\text{o,R}}}\, \mathcal{S}_{v_{1}} \mathcal{S}_{v_{2}}\\ \nonumber 
&\phantom{= (-1)}\cdot\Tilde{\mathcal{X}}_{v_{c}} \, \mathcal{Z}_{v_{1}} \mathcal{Z}_{v_{2}} \underset{(v_{c},v_{1})}{\Tilde{\mathcal{X}}^{(X)}_{\text{e},L}\Tilde{\mathcal{X}}^{(Z)}_{\text{e},R}} \, \underset{(v_{c},v_{2})}{\Tilde{\mathcal{X}}^{(X)}_{\text{e},L}\Tilde{\mathcal{X}}^{(Z)}_{\text{e},R}} \\ \nonumber
&\phantom{= (-1)}\cdot \Tilde{\mathcal{Y}}_{v_{c}}\, \mathcal{X}_{v_{1}} \mathcal{Z}_{v_{2}} \, \underset{ (v_{c},v_{1}) }{ \Tilde{\mathcal{X}}^{(Y)}_{\text{o,L}} \Tilde{\mathcal{X}}^{(X)}_{\text{o,R}} } \, \underset{ (v_{c},v_{2}) }{ \Tilde{\mathcal{X}}^{(Y)}_{\text{e,L}}\Tilde{\mathcal{X}}^{(Z)}_{\text{e,R}}} \,\mathcal{S}_{v_{1}}\\ \nonumber
&\phantom{= (-1)}\cdot \Tilde{\mathcal{Y}}_{v_{c}}\, \mathcal{Z}_{v_{1}} \mathcal{X}_{v_{2}} \,\underset{ (v_{c},v_{1}) }{ \Tilde{\mathcal{X}}^{(Y)}_{\text{e,L}}\Tilde{\mathcal{X}}^{(Z)}_{\text{e,R}}} \, \underset{ (v_{c},v_{2}) }{ \Tilde{\mathcal{X}}^{(Y)}_{\text{o,L}} \Tilde{\mathcal{X}}^{(X)}_{\text{o,R}} } \, \mathcal{S}_{v_{2}}\\ \label{ali:star2antieval-two}
&\phantom{= (-1)}\cdot \phantom{\Tilde{\mathcal{X}}_{v_{c}}\, \mathcal{X}_{v_{1}} \mathcal{X}_{v_{2}}} \, \underset{(v_{c},v_{1})}{\Tilde{\mathcal{X}}^{(Y)}_{\text{o},\text{L}} \Tilde{\mathcal{X}}^{(X)}_{\text{e},\text{L}}} \, \underset{(v_{c},v_{2})}{\Tilde{\mathcal{X}}^{(Y)}_{\text{o},\text{L}} \Tilde{\mathcal{X}}^{(X)}_{\text{e},\text{L}}}\\ \label{ali:star2antieval-three}
&=(-1 ) \mathcal{Z}_{v_{1}} \mathcal{X}_{v_{1}} \mathcal{Z}_{v_{1}} \mathcal{X}_{v_{1}} \vert \Psi \rangle \,.
\end{align}
Line~\eqref{ali:star2antieval-one} was obtained by using Eqs.\,\eqref{pe22.8} -- \eqref{pe22.14}. The following five lines, constituting Eq.\;\eqref{ali:star2antieval-two} are obtained by using Eqs.\,\eqref{eq:pe-infstabpauli},~\eqref{eq:pe-refe23} -- \eqref{eq:pe-refe24-2}, and Eq.\;\eqref{pe2-overlap}.
Taking that the physical measurement observables square to the identity gives line~\eqref{ali:star2antieval-three}.
We obtained the anticommutation relation between observables on the power vertex $v_{1}$.

The same anticommutation relation for all other power vertices follows from Lemma~\ref{lem:indu}.
\end{proof}

\begin{proof}[Proof [Lemma~\ref{lem:ref1234}, Reference experiment~\ref{def:refe3}]]
If the physical experiment (PE) simulates the reference experiment (RE)~\ref{def:refe3}, the specified correlations from the measurements fulfill
\begin{align}\label{pe3.1}
\langle \Psi\vert \, \zeta_{u} \, \vert\Psi\rangle &= 1\,, \qquad \qquad \qquad \forall u \in V\,,\\ \label{pe3.2}
\langle \Psi\vert \, \mathcal{I}_{1} \, \vert\Psi\rangle &= (-1)^{d+1}\frac{1}{\sqrt{2}}\,,\\ \label{pe3.3}
\langle \Psi\vert \, \mathcal{I}_{2} \,\vert\Psi\rangle &= (-1)^{d}\frac{1}{\sqrt{2}}\,,\\ \label{pe3.4}
\langle \Psi\vert \,\mathcal{I}_{3} \,\vert\Psi\rangle &= \frac{1}{\sqrt{2}}\,,\\ \label{pe3.5}
\langle \Psi\vert \, \mathcal{I}_{4} \,\vert\Psi\rangle &= \frac{1}{\sqrt{2}}\,,\\ \label{pe3.6}
\langle \Psi\vert \, \mathcal{P}_{1} \,\vert\Psi\rangle &= 1\,.\\ \label{pe3.7}
\langle \Psi\vert \, \mathcal{P}_{2} \,\vert\Psi\rangle &= 1\,.
\end{align}
Then, it follows from Eqs.\,\eqref{pe3.2} -- \eqref{pe3.5} that
\begin{equation}
\langle \Psi \vert \left(2\sqrt{2} \mathds{1} + (-1)^d(\mathcal{I}_{1} - \mathcal{I}_{2}) - \mathcal{I}_{3} - \mathcal{I}_{4} \right) \vert \Psi \rangle = 0 \,.\label{eq:appsos1}
\end{equation}
The operator has at least three decompositions into sums of squares
\begin{align}
4 \mathds{1} + \sqrt{2}\left( (-1)^d(\mathcal{I}_{1} - \mathcal{I}_{2}) - \mathcal{I}_{3} - \mathcal{I}_{4} \right) 
&= \left[ \sqrt{2} \mathds{1} + \frac{1}{2}\left( (-1)^d(\mathcal{I}_{1} - \mathcal{I}_{2}) - \mathcal{I}_{3} - \mathcal{I}_{4} \right) \right]^2 \nonumber\\
&\hspace{1.5cm}+ \left[ \frac{1}{2}\left( (-1)^d(\mathcal{I}_{1} + \mathcal{I}_{2}) - \mathcal{I}_{3} + \mathcal{I}_{4} \right) \right]^2 \nonumber \\
&= \left[ \mathcal{A}_{0} - \frac{\mathcal{B}_{0} - (-1)^d \mathcal{B}_{1}}{\sqrt{2}} \right]^2 + \left[ \mathcal{A}_{1} - \frac{ \mathcal{B}_{1} + (-1)^d \mathcal{B}_{0} }{\sqrt{2}} \right]^2 \label{ali:sosanti1}\\
&= \left[ \mathcal{B}_{0} - \frac{\mathcal{A}_{0} - (-1)^d \mathcal{A}_{1}}{\sqrt{2}} \right]^2 + \left[ \mathcal{B}_{1} -\frac{ \mathcal{A}_{1} + (-1)^d \mathcal{A}_{0} }{\sqrt{2}} \right]^2 \label{ali:sosanti2} \,.
\end{align}
with $\mathcal{I}_{1} = \mathcal{A}_{0}\, \mathcal{B}_{1}$, $\mathcal{I}_{2} = \mathcal{A}_{1} \,\mathcal{B}_{0}$, $\mathcal{I}_{3} = \mathcal{A}_{1} \,\mathcal{B}_{1}$, $\mathcal{I}_{4} = \mathcal{A}_{0} \,\mathcal{B}_{0}$, and, using Eqs.\,\eqref{ali:pcorr41-pe} -- \eqref{ali:pcorr44-pe},
\begin{equation*}
\mathcal{A}_{0} := \Tilde{\mathcal{Z}}^{(R_{z})}_{v_{l}} \underset{(v_{l},v_{r})}{\Tilde{\mathcal{X}}^{(Z,R_{z})}_{\text{o},\text{L}}} \mathcal{R}^{(Z)}_{d_{(v_{l},v_{r})}} \,, \hspace{0.5cm} \mathcal{A}_{1} := S_{v_{l}} \Tilde{\mathcal{O}}^{(R_{z})}_{v_{l}} \underset{(v_{l},v_{r})}{\Tilde{\mathcal{X}}^{(O,R_{z})}_{\text{e},\text{L}}} \mathcal{R}^{(O)}_{d_{(v_{l},v_{r})}} \,, \hspace{0.5cm} \mathcal{B}_{0} := \underset{(v_{l},v_{r})}{\Tilde{\mathcal{X}}^{(Z,R_{z})}_{\text{e},\text{R}}} \Tilde{\mathcal{Z}}_{v_{r}} \,, \hspace{0.5cm} \mathcal{B}_{1} := \underset{(v_{l},v_{r})}{\Tilde{\mathcal{X}}^{(X,R_{z})}_{\text{o},\text{R}}} \Tilde{\mathcal{X}}_{v_{r}} \, \mathcal{S}_{v_{r}} \,.
\end{equation*}
The first decomposition holds since $\mathcal{I}_{1} \mathcal{I}_{4} = \mathcal{I}_{3} \mathcal{I}_{2} = \mathcal{B}_{1} \mathcal{B}_{0}$ and $\mathcal{I}_{4} \mathcal{I}_{1} = \mathcal{I}_{2} \mathcal{I}_{3} = \mathcal{B}_{0} \mathcal{B}_{1}$.
From Eq.\;\eqref{eq:appsos1} it is
\begin{align*}
2\sqrt{2} \mathds{1} + (-1)^d(\mathcal{I}_{1} - \mathcal{I}_{2}) - \mathcal{I}_{3} - \mathcal{I}_{4} \vert \Psi \rangle &= 0 \,, \hspace{1cm} (-1)^d(\mathcal{I}_{1} + \mathcal{I}_{2}) - \mathcal{I}_{3} + \mathcal{I}_{4} \vert \Psi \rangle = 0 \,, \\[0.2cm]
\left[ \mathcal{A}_{0} - \frac{\mathcal{B}_{0} - (-1)^d \mathcal{B}_{1}}{\sqrt{2}} \right] \vert \Psi \rangle &= 0 \,, \hspace{1cm}\left[ \mathcal{A}_{1} - \frac{ \mathcal{B}_{1} + (-1)^d \mathcal{B}_{0} }{\sqrt{2}} \right] \vert \Psi \rangle = 0 \,,\\
\left[ \mathcal{B}_{0} - \frac{\mathcal{A}_{0} - (-1)^d \mathcal{A}_{1}}{\sqrt{2}} \right] \vert \Psi \rangle &= 0 \,, \hspace{1cm} \left[ \mathcal{B}_{1} -\frac{ \mathcal{A}_{1} + (-1)^d \mathcal{A}_{0} }{\sqrt{2}} \right] \vert \Psi \rangle = 0 \,.
\end{align*}
Then, one can write
\begin{align}
\mathcal{A}_{0} \mathcal{A}_{1} + \mathcal{A}_{1} \mathcal{A}_{0} &= \mathcal{A}_{0} \left[ \mathcal{A}_{1} - \frac{ \mathcal{B}_{1} + (-1)^d \mathcal{B}_{0} }{\sqrt{2}} \right] + \mathcal{A}_{1} \left[ \mathcal{A}_{0} - \frac{\mathcal{B}_{0} - (-1)^d \mathcal{B}_{1}}{\sqrt{2}} \right] \nonumber\\
&\hspace{2cm}+ \frac{(-1)^d \left[ (-1)^d ( \mathcal{I}_{1} + \mathcal{I}_{2}) - \mathcal{I}_{3} + \mathcal{I}_{4} \right] }{\sqrt{2}} \,, \label{ali:appsos2}\\
\mathcal{B}_{0} \, \mathcal{B}_{1} + \mathcal{B}_{1} \, \mathcal{B}_{0} &= \mathcal{B}_{0} \left[ \mathcal{B}_{1} -\frac{ \mathcal{A}_{1} + (-1)^d \mathcal{A}_{0} }{\sqrt{2}} \right] + \mathcal{B}_{1} \left[\mathcal{B}_{0} - \frac{\mathcal{A}_{0} - (-1)^d \mathcal{A}_{1}}{\sqrt{2}} \right] \nonumber\\
&\hspace{2cm}+ \frac{(-1)^d \left[ (-1)^d ( \mathcal{I}_{1} + \mathcal{I}_{2}) - \mathcal{I}_{3} + \mathcal{I}_{4} \right] }{\sqrt{2}} \label{ali:appsos3}
\end{align} 
As a result, we obtain \[ \mathcal{A}_{0} \mathcal{A}_{1} + \mathcal{A}_{} \mathcal{A}_{0} \vert \Psi \rangle = 0\,,~\mathcal{B}_{0} \, \mathcal{B}_{1} + \mathcal{B}_{1} \, \mathcal{B}_{0} \vert \Psi \rangle = 0 \,,\]
which imply the anticommutation relations \[\{ \Tilde{\mathcal{O}}_{v_{l}} \mathcal{R}^{(O)}_{d_{(v_{l},v_{r})}},\Tilde{\mathcal{Z}}_{v_{l}} \mathcal{R}^{(Z)}_{d_{(v_{l},v_{r})}} \} \vert \Psi \rangle = 0 \,,~\{ \Tilde{\mathcal{X}}_{v_{r}},\Tilde{\mathcal{Z}}_{v_{r}} \} \vert \Psi \rangle = 0 \,,\]
respectively.
Furthermore, we can obtain the anticommutation relation $\{ \mathcal{O}_{v_{l}} \mathcal{X}^{(O)}_{d_{(v_{l},v_{r})}},\mathcal{Z}_{v_{l}} \mathcal{X}^{(Z)}_{d_{(v_{l},v_{r})}} \} \vert \Psi \rangle = 0$ from the previous and $\mathcal{P}_{1} \, \mathcal{P}_{2} \, \mathcal{P}_{1} \, \mathcal{P}_{2}\vert \Psi \rangle = \vert \Psi \rangle$.

Together with the above anticommutation relations, we obtain $\{ \mathcal{X}_{v}, \mathcal{Z}_{v} \} \vert \Psi \rangle =0$ from
\begin{equation} \label{eq:propagation-pe3}
\zeta_{v} \, \mathcal{P}_{2} \, \zeta_{v} \, \mathcal{P}_{2} \,\vert \Psi \rangle = \vert \Psi \rangle \,,~\text{for }v \in N(v_{l})\,,\qquad \zeta_{v} \, \mathcal{P}_{1} \, \zeta_{v} \, \mathcal{P}_{1} \,\vert \Psi \rangle = \vert \Psi \rangle \,, ~\text{for }v \in N(v_{r}) \,.
\end{equation}
All other anticommutation relations follow from Lemma~\ref{lem:indu}.
\end{proof}

\begin{lem}\label{lem:indu}
Consider a physical experiment with the state $\vert \Psi \rangle $ and correlations
\begin{align}
\zeta_{u} &= \mathcal{X}_{u} \underset{(u,N(u))}{\Tilde{\mathcal{X}}^{(X)}_{\text{e,L}}} \underset{(u,N(u))}{\Tilde{\mathcal{X}}^{(Z)}_{\text{o,R}}} \Tilde{\mathcal{Z}}^{N(u)} \,,\label{eq:pinste-1}\\
\Tilde{\zeta}_{v} &= \Tilde{\mathcal{X}}_{v} \underset{(v,N(v))}{\Tilde{\mathcal{X}}^{(X)}_{\text{e,L}}} \underset{(v,N(v))}{\Tilde{\mathcal{X}}^{(Z)}_{\text{o,R}}} \mathcal{Z}^{N(v)}\label{eq:pinste-2} \,,
\end{align}
such that
\begin{equation}
\langle\Psi \vert \zeta_{u} \vert \Psi \rangle = \langle\Psi \vert \Tilde{\zeta}_{v} \vert \Psi \rangle = 1 \,,
\end{equation}
for $u \in N(v), v \in N(u)$.

If $ \{ \mathcal{X}_{u} , \mathcal{Z}_{u} \} \vert \Psi \rangle = 0$, then it also holds that $ \{ \Tilde{\mathcal{X}}_{v}, \Tilde{\mathcal{Z}}_{v} \} \vert \Psi \rangle = 0$.
\end{lem}
Note that in most applications of Lemma~\ref{lem:indu}, $\Tilde{\zeta}_{v} = \zeta_{v}$.
In general, the correlations $\zeta_{v},\Tilde{\zeta}_{v}$ can stem from different measurements, if the difference is not apparent on the marginalized observables.
In particular, it suffices that the marginalized observables do not overlap but on the power vertices that are of interest for the anticommutation relation.
\begin{proof}
Given the requirements, it is $\zeta_{u} \vert \Psi \rangle = \zeta_{v} \vert \Psi \rangle = \vert \Psi \rangle$ and $ \mathcal{Z}_{u} \mathcal{X}_{u} \mathcal{Z}_{u} \mathcal{X}_{u} \vert \Psi \rangle = (-1) \vert \Psi \rangle$.
Then, we evaluate
\begin{equation*}
\vert \Psi \rangle = \Tilde{\zeta}_{v} \, \zeta_{u} \, \Tilde{\zeta}_{v} \, \zeta_{u}\, \vert \Psi \rangle = (\mathcal{Z}_{u} \mathcal{X}_{u} \mathcal{Z}_{u} \mathcal{X}_{u}) (\Tilde{\mathcal{X}}_{v} \Tilde{\mathcal{Z}}_{v} \Tilde{\mathcal{X}}_{v} \Tilde{\mathcal{Z}}_{v}) \vert \Psi \rangle =(-1)\Tilde{\mathcal{X}}_{v} \Tilde{\mathcal{Z}}_{v} \Tilde{\mathcal{X}}_{v} \Tilde{\mathcal{Z}}_{v} \vert \Psi \rangle \,.
\end{equation*}
We use Eq.\;\eqref{eq:pinste-1} and Eq.\;\eqref{eq:pinste-2}, and that the observables of the physical measurement square to identity.
It follows that \(
\{\Tilde{\mathcal{X}}_{v},\Tilde{\mathcal{Z}}_{v} \} \vert \Psi \rangle = 0 \).
\end{proof}

\section{Self-testing symmetric graph states against communication}\label{app:selfsymmfull}

For the standard self-testing proof, we require observable of the physical experiment to reproduce the graph's stabilizer conditions, specifically that of the generator elements, and anticommutation relations for observables that are involved in the products of observables that reproduce these conditions, specifically corresponding to the Pauli $X$ and $Z$ observables.
In a scenario with bounded classical communication, 
The local observable corresponding to the Pauli $X$ operator must also include the information that its nearest neighbors measure in the Pauli $Z$ basis, and vice versa.

We present the proof for the ideal case $\epsilon=\delta=0$.
For a robust version of the equivalence, we refer to the one of RE~\ref{def:refe1} and RE~\ref{def:refe2} in Tab.~\ref{tab:deltas}.

\begin{proof}[Proof][$\epsilon=\delta=0$]
Given a physical experiment with state $\vert \Psi \rangle$ and measurement observables that are compatible and simulates the RE~\ref{def:fullcircle}.
The RE include RE~\ref{def:refe1} such that Lemma~\ref{lem:ref1234} applies, i.e., $\{ \mathcal{X}_{u},\mathcal{Z}_{u} \} \vert \Psi \rangle =0$ for all $u \in V^{\prime}$.
The observables rely on the vertices within communication distance $d$ to perform measurements in the Pauli $X$ basis.

We denote the tensor product of observables $\Tilde{\zeta}^{(X)}_{u}$ and $\Tilde{\zeta}^{(Z)}_{u}$ correspond to the Pauli submeasurements $f_{u}$ of the measurements $\Tilde{M}^{(X)}_{u}$ in Eq.\;\eqref{eq:measurxzx} and $\Tilde{M}^{(Z)}_{u}$ in Eq.\;\eqref{eq:measurzxz}, respectively.
From the simulation condition,
\begin{equation}
\langle \Psi \vert \Tilde{\zeta}^{(X)}_{u} \vert \Psi \rangle = \langle \Psi \vert \Tilde{\zeta}^{(Z)}_{u} \vert \Psi \rangle =1 \,,
\end{equation}
and, therefore, $\Tilde{\zeta}^{(X)}_{u} \vert \Psi \rangle = \Tilde{\zeta}^{(Z)}_{u} \vert \Psi \rangle = \vert \Psi \rangle$ for all $u \in V^{\prime}$.

We evaluate $\zeta^{(X)}_{u} \zeta^{(Z)}_{v} \zeta^{(X)}_{u} \zeta^{(Z)}_{v} \vert \Psi \rangle = \vert \Psi \rangle$ for any two neighboring vertices with distance $dist(u,v)=2d+1$.
Namely, we multiply the local observables column wise
\begin{equation} \begin{array}{ccccccccccccc}
\zeta^{(X)}_{u} & = \mathcal{X}_{u}& & \underset{(u,v)}{\Tilde{\mathcal{X}}^{(X)}_{\text{e},\text{L}}} & &\underset{(u,v)}{\Tilde{\mathcal{X}}^{(Z)}_{\text{e},\text{R}}} & \Z_{v} & \\
\zeta^{(Z)}_{v} & = \mathcal{Z}_{u} &\underset{(u,v)}{\Tilde{\mathcal{X}}^{(Z)}_{\text{o},\text{L}}} & & \underset{(u,v)}{\Tilde{\mathcal{X}}^{(X)}_{\text{o},\text{R}}} & & \X_{v} & \,,\\
\zeta^{(X)}_{u} & = \mathcal{X}_{u}& & \underset{(u,v)}{\Tilde{\mathcal{X}}^{(X)}_{\text{o},\text{L}}} & &\underset{(u,v)}{\Tilde{\mathcal{X}}^{(Z)}_{\text{o},\text{R}}} & \Z_{v} & \\
\zeta^{(Z)}_{v} & = \mathcal{Z}_{u} &\underset{(u,v)}{\Tilde{\mathcal{X}}^{(Z)}_{\text{o},\text{L}}} & & \underset{(u,v)}{\Tilde{\mathcal{X}}^{(X)}_{\text{o},\text{R}}} & & \X_{v} & \,,
\end{array} \label{eq:oddcirc}\end{equation}
similar to Lemma~\ref{lem:indu}, where $\X_{v},\Z_{v}$ are observables whose neighboring vertices up to distance $d$ measure the observables $X$ and $Z$ in an alternating pattern.
As a result, we obtain Eq.\;\eqref{eq:brandnewAC}
\begin{equation}
\{ \X_{v},\Z_{v} \} \vert \Psi \rangle =0 \label{eq:anticommalt}
\end{equation}
from $\{ \mathcal{X}_{u},\mathcal{Z}_{u} \} \vert \Psi \rangle =0 $.
The submeasurements $g_{u}$ from settings corresponding to $M_{u}^{(alt)}$ in Eq.\;\eqref{eq:ref4alt} lead to Eq.\;\eqref{eq:brandnewSTAB} for the product of observables $\xi_{u}$,
\begin{equation}
\xi_{u} \vert \Psi \rangle = \X_{u} \bigotimes_{v \in N(u)} \Z_{v} \vert \Psi \rangle = \vert \Psi \rangle\,.
\end{equation}
As a result, the physical state reproduces the stabilizer conditions of the inflated circular graph state.

Given a physical experiment with state $\vert \Psi \rangle$ and measurement observables that are compatible and simulates the RE~\ref{def:refhex}.
The RE contain RE~\ref{def:refe2} for a tripoint star graph around vertex $v_{c}$ in the honeycomb lattice, such that Lemma~\ref{lem:ref1234} applies with an anticommutation relation $\{ \mathcal{X}_{u}, \mathcal{Z}_{u}\} \vert \Psi \rangle = 0$ for the power vertices of the inflated star graph $v_{c}$ and $v \in N^{\prime}(v_{c})$, which are the vertices in the corners of the tripoint star.
However, these observables rely on the vertices on the tripoint star within communication distance $d$ to perform measurements in the Pauli $X$ basis.

We denote the tensor product of observables $\Tilde{\zeta}^{(X)}_{v_{0}}$ corresponds to the submeasurement $f_{v_{0}}$ from measurements $\Tilde{M}^{(X)}_{v_{0}}$ in Eq.\;\eqref{eq:ref4altx}, and $\Tilde{\zeta}^{(Z)}_{v_{c}}$ corresponds to $f_{v_{c}}$ from $\Tilde{M}^{(Z)}_{v_{0}}$ in Eq.\;\eqref{eq:ref4altz}.
Additionally, the tensor product of observables $\Tilde{\zeta}^{(alt)}_{u}$ corresponds to the submeasurement $f_{u}$ from measurements $M_{1}^{(alt)}$ in Eq.\;\eqref{eq:measurehex1} if $u \in V_{hex_{1}}$ or from $M_{2}^{(alt)}$ in Eq.\;\eqref{eq:measurehex2} if $u \in V_{hex_{2}}$.
From the simulation condition,
\begin{equation}
\langle \Psi \vert \Tilde{\zeta}^{(X)}_{v_{0}} \vert \Psi \rangle = \langle \Psi \vert \Tilde{\zeta}^{(Z)}_{v_{c}} \vert \Psi \rangle = \langle \Psi \vert \Tilde{\zeta}^{(alt)}_{u} \vert \Psi \rangle = 1\,,
\end{equation}
and, therefore, $\Tilde{\zeta}^{(X)}_{v_{0}} \vert \Psi \rangle = \Tilde{\zeta}^{(Z)}_{v_{c}} \vert \Psi \rangle = \Tilde{\zeta}^{(alt)}_{u} \vert \Psi \rangle = \vert \Psi \rangle$.

First, we evaluate $\zeta^{(X)}_{v_{0}} \zeta^{(Z)}_{v_{c}} \zeta^{(X)}_{v_{0}} \zeta^{(Z)}_{v_{c}} \vert \Psi \rangle = \vert \Psi \rangle$ for any two neighboring vertices with distance $dist(u,v)=2d+1$ in the same way as in the proof of Lemma~\ref{lem:indu}.
As a result, \( \{ \X_{v_{c}},\Z_{v_{c}} \} \vert \Psi \rangle =0 \) from $\{ \mathcal{X}_{v_{0}},\mathcal{Z}_{v_{0}} \} \vert \Psi \rangle =0 $, for observables $\X,\Z$ that correspond to measurement settings where the neighboring vertices up to distance $d$ measure the observables $X$ and $Z$ in an alternating pattern.

We propagate the anticommutation relation~\eqref{eq:anticommalt} to every vertex of the honeycomb lattice starting with $v_{c}$ and consecutively evaluating $\Tilde{\zeta}^{(alt)}_{u} \Tilde{\zeta}^{(alt)}_{v} \Tilde{\zeta}^{(alt)}_{u} \Tilde{\zeta}^{(alt)}_{v} \vert \Psi \rangle = \vert \Psi \rangle$ in the same way as in the proof of Lemma~\ref{lem:indu}.
The vertex $u$ is at the center of a tripoint star and vertex $v$ at distance $dist(u,v)=2d+1$ in one of the corners of the tripoint star. 
Then, it is \(\{ \X_{v},\Z_{v} \} \vert \Psi \rangle =0 \) from $\{ \mathcal{X}_{u},\mathcal{Z}_{u} \} \vert \Psi \rangle =0 $.
The submeasurements $g_{u}$ from settings corresponding to $M^{(alt)}_{hex_{1}}$ in Eq.\;\eqref{eq:measurehex1} if $u \in V_{hex_{1}}$ or~$M^{(alt)}_{hex_{2}}$ in Eq.\;\eqref{eq:measurehex2} if $u \in V_{hex_{2}}$ lead to Eq.\;\eqref{eq:brandnewSTAB} for the product of observables $\xi_{u}$,
\begin{equation}
\xi_{u} \vert \Psi \rangle = \X_{u} \bigotimes_{v \in N(u)} \Z_{v} \vert \Psi \rangle = \vert \Psi \rangle\,.
\end{equation}
As a result, the physical state reproduces the stabilizer conditions of the inflated circular graph state.

We show the equivalence in terms of Def.~\ref{def:equirob} between the PE and the RE with the local isometry is $\Phi = \prod_{v \in V^{\prime}(V_{hex})} \phi_{v}$ with $\phi_{v}$ characterized by the first circuit in Fig.~\ref{fig:swapcirc} using the observables $\X_{u}$ and $\Z_{u}$.
The evaluation of the isometry is the same as in Appendix~\ref{app:line}, but we use $\xi_{u} \vert \Psi \rangle = \X_{u} (\Z)^{N(u)} \vert \Psi \rangle =\vert \Psi \rangle$ and the anticommutation relation in Eq.\;\eqref{eq:anticommalt}.
The tensor product of observables $\xi_{u}$ corresponds to the measurement submeasurements $g_{u}$ from $M^{(X)}_{u}$ for RE~\ref{def:fullcircle} and from $M^{(alt)}$ for RE~\ref{def:refhex}.
\end{proof}

\section{Inflated Isomorphism}
\label{app:isocom}
Here, we prove Theorem~\ref{theo:main} for the ideal case, $\epsilon = 0$ and $\delta = 0$. In Appendix~\ref{app:robref3}, we extend the proof to robust equivalence by trailing the proof for the ideal case and bounding all expressions that result from the accordance of the measurement correlations.

Consider a physical experiment (PE) that is compatible and simulate one of the reference experiments (REs)~\ref{def:refe1},\ref{def:refe2} and~\ref{def:refe3}, such that Lemma~\ref{lem:ref1234} applies, and it is $\zeta_{u} \vert \Psi \rangle = \vert \Psi \rangle$ for all $u \in V$.

Measuring the chain vertices in the measurement bases corresponding to observable $\mathcal{X}$ with outcome $\mathbf{x} = (x_{v})_{v\in V^{\prime} \setminus V}$, transforms the state
\begin{equation}
\vert \Psi \rangle \longrightarrow \vert \Psi^{\prime} \rangle = \Pi_{\mathcal{X}}^{(\mathbf{x})} \vert \Psi \rangle = \bigotimes_{v \in V^{\prime}\setminus V} \left( \mathds{1}_{v} + (-1)^{x_{v}} \mathcal{X}_{v} \right) \vert \Psi \rangle \,.
\end{equation}

The isometry that maps the subsequent state (on the power vertices) to the graph state corresponding to the graph in the RE is a product of local isometries for each power vertex.
The local isometries are described by the first circuit in Fig.~\ref{fig:swapcirc}.
We employ the standard isometry inspired on the SWAP gate in Fig.~\ref{fig:swapcirc}, which was proposed in~\cite{mayers2003self}, used in~\cite{mckague2011self} and in Section~\ref{app:line}.

The isometry acts locally on the post-measurement state and one ancilla qubit for every power vertex,
\begin{equation} \Phi \left( \, \vert \mathbf{0} \rangle \otimes \Pi_{\mathcal{X}}^{(\mathbf{x})} \vert \Psi \rangle \, \right) = \prod_{v \in V} \phi_{v} \hspace{-0.05cm} \left( \vert \mathbf{0} \rangle \otimes \Pi_{\mathcal{X}}^{(\mathbf{x})} \vert \Psi \rangle \right) \,. \label{eq:isogen}
\end{equation} 
We denote $(x_{\text{e}})_{v} := \sum^d_{s=1} x_{2s_{(v,N(v))}}$ as the sum of measurement outcomes of every second chain vertex seen from the power vertex $v$ up to the next power vertices.
Denoting $Z_\mathbf{a}^{\mathbf{x}_{\text{e}}} = \bigotimes_{u \in V} Z^{(x_{\text{e}})_{u}}_{a_{u}}$, it is
\begin{align}
\Phi \left(\vert \mathbf{0} \rangle \otimes \Pi_{\mathcal{X}}^{(\mathbf{x})} \vert \Psi \rangle\right) &= \frac{1}{2^{\vert V \vert}} \sum_{\mathbf{a}\in \mathbb{F}^{\vert V \vert}_{2}} \vert \mathbf{a} \rangle \otimes \Pi_{\mathcal{X}}^{(\mathbf{x})} \bigotimes_{u \in V} \left( \mathds{1}_{u} + \mathcal{Z}_{u} \right) \mathcal{X}_{u}^{a_{u}} \vert \Psi \rangle \label{eq:isoinf}\\
&= \frac{1}{2^{\vert V \vert}} \sum_{\mathbf{a}\in \mathbb{F}^{\vert V \vert}_{2}} (-1)^{g(\mathbf{a})} \vert \mathbf{a} \rangle \otimes \Pi_{\mathcal{X}}^{(\mathbf{x})}\hspace{-0.05cm} \bigotimes_{u \in V} \underset{(u,N(u))}{\left(\mathcal{X}^{(X)}_{\text{e},\text{L}} \mathcal{X}^{(Z)}_{\text{e},\text{R}}\right)^{a_{u}}}\left( \mathds{1}_{u} + \mathcal{Z}_{u} \right) \vert \Psi \rangle \label{eq:isoinfh1} \\
&= \frac{1}{2^{\vert V \vert}}\sum_{\mathbf{a}\in \mathbb{F}^{\vert V \vert}_{2}} (-1)^{g(\mathbf{a}) + \mathbf{x}_{\text{e}}\mathbf{a} } \vert \mathbf{a} \rangle \otimes \Pi_{\mathcal{X}}^{(\mathbf{x})} \bigotimes_{u \in V} \left( \mathds{1}_{u} + \mathcal{Z}_{u} \right) \vert \Psi \rangle \label{eq:isoinfh2}\\ 
&= \vert G^{(\mathbf{x})} \rangle \otimes \vert \mathit{junk} \rangle \label{eq:isoinfh3} \,, 
\end{align}
with the graph state's signature \[g(\mathbf{a}) := \sum_{(u,v) \in E} a_{u}a_{v} \,. \]
Line~\eqref{eq:isoinf} was obtained by evaluating the action of the circuit $\phi_{v}$, which has been done in Eq.\;\eqref{eq:isocirceval} in Appendix~\ref{app:line}.

To arrive at Eq.\;\eqref{eq:isoinfh1}, consider an arbitrary but fixed order of vertices and sequentially perform the following three steps. 
First, from $\zeta_{u} \vert \Psi \rangle = \vert \Psi \rangle$ and Eq.\;\eqref{eq:pe-infstabpauli}, apply \[\mathcal{X}^{a_{u}}_{u} \vert \Psi \rangle = \bigg( \underset{(u,N(u))}{\left(\mathcal{X}^{(X)}_{\text{e},\text{L}}\mathcal{X}^{(Z)}_{\text{e},\text{R}}\right)} Z^{N(u)} \bigg)^{a_{u}} \vert \Psi \rangle \,.\] 
With the anticommutations relations, permute $\mathcal{Z}^{N(u)}$ from right to left. Depending on the order of vertices, if $u < v \in N(u)$, $(\mathcal{Z}_{v})^{a_{u}}$ anticommutes with $(\mathcal{X}_{v})^{a_{v}}$, leading to a phase $(-1)^{a_{u} a_{v}}$ which is acquired if and only if two vertices share an edge.
Then, absorb $\left( \mathds{1}_{u} + \mathcal{Z}_{u} \right) ( \mathcal{Z}_{v} )^{a_{u}} = \left( \mathds{1}_{u} + \mathcal{Z}_{u} \right)$.

For Eq.\;\eqref{eq:isoinfh2}, use $\Pi^{(x)}_{\mathcal{X}} \mathcal{X} = (-1)^x \Pi^{(x)}_{\mathcal{X}}$ for the operators on the chain vertices.
To arrive at Eq.\;\eqref{eq:isoinfh3}, define the locally rotated graph state
\begin{equation}
\vert G^{(\mathbf{x})} \rangle = \frac{1}{2^{\vert V \vert/2}}\sum_{\mathbf{a}\in \mathbb{F}^{\vert V \vert}_{2}} (-1)^{g(\mathbf{a}) + \mathbf{x}_{\text{e}}\mathbf{a} } \vert \mathbf{a} \rangle \,,
\end{equation}
and a residual state
\begin{equation}
\vert junk \rangle = 2^{\vert V \vert/2} \Pi_{\mathcal{X}}^{(\mathbf{x})} \bigotimes_{u \in V} \left( \mathds{1}_{u} + \mathcal{Z}_{u} \right) \vert \Psi \rangle \,.
\end{equation}

There are three caveats regarding RE~\ref{def:refe3}. First, for odd $d$, the correction operation $Z^{(x_{\text{e}})_{v_{l}}}_{a_{v_{l}}}$ on vertex $v_{l}$ depends additionally on the measurement outcomes of chain vertex $d_{(v_{l}.v_{r})}$, i.e., $(x_{\text{e}})_{v_{l}} \rightarrow (x_{\text{e}})_{v_{l}} + x_{d_{(v_{l},v_{r})}}$. 
Second, $\mathcal{X}_{v_{l}}\mathcal{Z}_{v_{l}} \vert \Psi \rangle = (-1)\mathcal{Q} \, \mathcal{Z}_{v_{l}} \mathcal{X}_{v_{l}} \vert \Psi \rangle $ with $\mathcal{Q} := \left(\mathcal{X}^{(Z)} \mathcal{X}^{(O)} \mathcal{X}^{(Z)} \mathcal{X}^{(O)}\right)_{d_{(v_{l},v_{r})}}$.
Throughout the above considerations, $\mathcal{Q}$ occurs whenever one uses the anticommutation relation.
However, it can immediately be absorbed in the measurement projectors since $\Pi_{\mathcal{X}}^{(\mathbf{x})} \mathcal{Q} = \Pi_{\mathcal{X}}^{(\mathbf{x})}$ and therefore does not affect the result. Third, for $d=1$, we replace the observable $\mathcal{X}_{v_{l}}$ by $\mathcal{Y}_{v_{l}}$ in the isomorphism.

For any measurement of the PE $\mathcal{M}_k$, the action under the isometry is $\Phi^{(\mathbf{x})} \left( \vert 0 \rangle \otimes \Pi_{\mathcal{X}}^{(\mathbf{x})} \mathcal{M}_k \vert \Psi \rangle \right)$. Since all measurements can be composed into products of observables $\mathcal{Z}$ and $\mathcal{X}$ on the power vertices and operators $\mathcal{X}$ on the chain vertices, we consider these separately. 
For some power vertex $v$ and $\Phi^{(\mathbf{x})} \left( \vert 0 \rangle \otimes \Pi_{\mathcal{X}}^{(\mathbf{x})} \mathcal{X}_{v} \vert \Psi^{\prime} \rangle \right)$, insert $X_{v}$ to the right of Eq.\;\eqref{eq:isoinf} to see that it shifts $a_{v} \rightarrow (a_{v} + 1) \bmod 2$. 
This is exactly the action of a Pauli $X$ operator on $\vert a_{v} \rangle$. For $\Phi^{(\mathbf{x})} \left( \vert 0 \rangle \otimes \Pi_{\mathcal{X}}^{(\mathbf{x})} \mathcal{X}_{v} \vert \Psi^{\prime} \rangle \right)$, inserting $Z_{v}$ to the right of Eq.\;\eqref{eq:isoinf}, and anticommute it with $\mathcal{X}_{v}^{a_{v}}$ for a sign factor $(-1)^{a_{v}}$, which is exactly the action of a Pauli $Z$ operator on $\vert a_{v} \rangle$. 
For any chain vertex $v$, $\Pi_{\mathcal{X}}^{(\mathbf{x})} \mathcal{X}_{v} = (-1)^{x_{v}} \Pi_{\mathcal{X}}^{(\mathbf{x})}$ leads to a global phase, undetectable by any measurement. 
By linearity, it is then \( \Phi^{(\mathbf{x})} \left( \vert 0 \rangle \otimes \Pi_{\mathcal{X}}^{(\mathbf{x})} \mathcal{M}_k \vert \Psi \rangle \right) = M_k \vert G \rangle \otimes \vert \mathit{junk}\rangle \,,\) for any measurement setting $M_k$ reduced to its action on the power vertices. This concludes our proof of Theorem~\ref{theo:main}.

In Appendix~\ref{app:robref3}, we extend Theorem~\ref{theo:main} to a robust self-testing procedure with techniques present in~\cite{mckague2011self} and in~\cite{SOS2015}. The bounds on the isometry for all discussed REs are in Tab.~\ref{tab:deltas}. They scale with $\sqrt{\epsilon}$ where $\epsilon$ is the deviation of the measurement results from the ideal ones, as in Def.~\ref{def:selfrob}. The bounds on the self-testing procedure with inflated graph states allowing bounded classical communication are slightly worse than for the non-inflated graph states, which is mainly due to the higher number of measurements required to derive the anticommutation relations.

\section{Self-testing robustness }
\label{app:robref3}
\begin{table}
\begin{tabular}{Sl|Sc|Sl}
Reference experiment & $c$ & Robustness \\ \cline{1-3}
RE~\ref{def:refe0} & $4$ & $~\delta= \sqrt{\epsilon}\, \vert V \vert (2 \vert V \vert^2 + 6 \vert V \vert + 6)$ \\
RE~\ref{def:refe1} & $3m+1$ & $~\delta= \sqrt{\epsilon}\, \vert V \vert (\frac{7}{2} \vert V \vert^2 + \frac{15}{2} \vert V \vert +2) $ \\
RE~\ref{def:refe2} & &\\ 
$~\vert N(v_{c}) \vert = 2$ & $6$ & $~\delta= \sqrt{\epsilon}\, \vert V \vert ( 2 \vert V \vert^2 + 7 \vert V \vert + 8 )$ \\ 
$~\vert N(v_{c}) \vert > 2$ & $4 + 4\vert N(v_{c}) \vert$ & $~\delta= \sqrt{\epsilon}\, \vert V \vert ( 2 \vert V \vert^2 + 6 \vert V \vert + 6 )$ \\ 
RE~\ref{def:refe3} & $~\sqrt{8 \sqrt{2}}(1 + \sqrt{2})~$ & $~\delta= \sqrt{\epsilon}\, \vert V \vert[ 2 \vert V \vert^2+4 \vert V \vert +1 + (2 + \vert V \vert)\sqrt{2\sqrt{2}}(1+\sqrt{2}) ]$ \\\cline{1-3}
Def.\,3 in~\cite{mckague2011self} & $m+1$ &$~\delta= \sqrt{\epsilon}\, \vert V \vert (\frac{5}{2} \vert V \vert^2 + \frac{11}{2} \vert V \vert +2)$ \\
Def.\,4 in~\cite{mckague2011self} & $-$ &$~\delta= \sqrt{\epsilon}\, \vert V \vert(2 \vert V \vert^2 + 4 \vert V \vert + 1)+ \vert V \vert \sqrt[4]{\epsilon} ( \frac{13}{2} \vert V \vert + 13)$ 
\end{tabular}
\caption{Table of bounds on robustness $\delta$ in Def.~\ref{def:selfrob} from Eq.\;\eqref{eq:robgenb}, parameter $c$, with $\vert E \vert \leq \vert V \vert^2$ and $l\leq \vert V \vert$.
Note that $m$ is the number of vertices in the circle in RE~\ref{def:refe1}.}
\label{tab:deltas}
\end{table}
We show that our self-testing proofs from Theorem~\ref{theo:main}, Theorem~\ref{theo:entire}, and Theorem~\ref{theo:ref0} are robust in terms of Def.~\ref{def:selfrob}.
For this purpose, we mainly use the triangle inequality and the following inequalities.

From $\langle \Psi \vert \mathcal{M} \vert \Psi \rangle > 1 - \epsilon $, it is $\vert \Psi \rangle - \mathcal{M} \vert \Psi \rangle \Vert < \sqrt{2\epsilon}$, and it is $\vert \Psi \rangle - \mathcal{M}_{1} \mathcal{M}_{2} \vert \Psi \rangle \Vert < \alpha + \beta$ for $\vert \Psi \rangle - \mathcal{M}_{1} \vert \Psi \rangle \Vert$, $\vert \Psi \rangle - \mathcal{M}_{2} \vert \Psi \rangle \Vert$ and if $\Vert \mathcal{M}_{1} \Vert_\infty = 1$.

We first bound the anticommutation relations from Lemma~\ref{lem:ref1234} and Lemma~\ref{lem:indu} by counting the number of measurements involved in obtaining the anticommutation relation on the first vertex, and then adding the number of times Lemma~\ref{lem:indu} has to be applied, which is the distance along the edges between the first vertex and any other vertex in the graph.
The only exceptions are the vertices $v_{l},v_{r}$ in RE~\ref{def:refe3}, where we use techniques from~\cite{SOS2015}.
Consider Eq.\;\eqref{eq:robdef0} for the observables in the physical experiment $\mathcal{I}_i$ simulating the $I_i$ from Eqs.\,\eqref{ali:pcorr41} -- \eqref{ali:pcorr44} for $i=1,2,3,4$.
Explicitly, it is $ \vert \sqrt{2} \langle \Psi \vert \mathcal{I}_{1} \vert \Psi\rangle - (-1)^d\vert < \sqrt{2} \epsilon$, $\vert \sqrt{2} \langle \Psi \vert \mathcal{I}_{2} \vert \Psi\rangle + (-1)^d \vert < \sqrt{2}\epsilon$ and $\vert \sqrt{2} \langle \Psi \vert \mathcal{I}_i \vert \Psi\rangle - 1 \vert < \sqrt{2} \epsilon$ for $i = 3,4$.
It follows that $(-1)^d \sqrt{2} \langle \Psi \vert \mathcal{I}_{1} \vert \Psi\rangle < -1 + \sqrt{2}\epsilon$, $(-1)^{d} \sqrt{2} \langle \Psi \vert \mathcal{I}_{2} \vert \Psi\rangle > 1 - \sqrt{2}\epsilon$ and $\sqrt{2} \langle \Psi \vert \mathcal{I}_i \vert \Psi\rangle > 1 - \sqrt{2} \epsilon$ for $i=3,4$.
As a result, \[ \langle \Psi \vert 4 \mathds{1} + \sqrt{2} \left( (-1)^d (\mathcal{I}_{1} - \mathcal{I}_{2}) - \mathcal{I}_{3} - \mathcal{I}_{4} \right) \vert \Psi \rangle < 4 \sqrt{2} \,\epsilon \,.\]
Due to the operator's sums of squares decompositions, every squared element is equally bounded by 
\begin{align*} 4 \sqrt{2} \, \epsilon > & \, \langle\Psi \vert \left[ \frac{1}{2}\left( (-1)^d(\mathcal{I}_{1} + \mathcal{I}_{2}) - \mathcal{I}_{3} + \mathcal{I}_{4} \right) \right]^2 \vert \Psi \rangle = \left\Vert \left[\frac{1}{2} \left( (-1)^d(\mathcal{I}_{1} + \mathcal{I}_{2}) - \mathcal{I}_{3} + \mathcal{I}_{4} \right) \right]\vert \Psi \rangle \right\Vert^2_{1}\,.\end{align*}
It follows from Eq.~\ref{ali:sosanti1} and Eq.~\ref{ali:sosanti2} that
\begin{align*} 4 \sqrt{2} \, \epsilon &> \langle\Psi \vert \left[ \mathcal{A}_{0} - \frac{\mathcal{B}_{0} - (-1)^d \mathcal{B}_{1}}{\sqrt{2}} \right]^2 \vert \Psi \rangle = \left\Vert \left[\mathcal{A}_{0} - \frac{\mathcal{B}_{0} - (-1)^d \mathcal{B}_{1}}{\sqrt{2}} \right]\vert \Psi \rangle \right\Vert^2_{1}\,, \\
4 \sqrt{2} \epsilon &> \langle \Psi \vert \left[ \mathcal{A}_{1} - \frac{ \mathcal{B}_{1} + (-1)^d \mathcal{B}_{0} }{\sqrt{2}} \right]^2 \vert \Psi \rangle = \left\Vert \left[ \mathcal{A}_{1} - \frac{ \mathcal{B}_{1} + (-1)^d \mathcal{B}_{0} }{\sqrt{2}} \right] \vert \Psi \rangle \right\Vert^2_{1} \,,\\
4 \sqrt{2} \epsilon &> \langle \Psi \vert \left[ \frac{\mathcal{A}_{0} - (-1)^d \mathcal{A}_{1}}{\sqrt{2}} - \mathcal{B}_{0} \right]^2 \vert \Psi \rangle = \left\Vert \left[ \frac{\mathcal{A}_{0} - (-1)^d \mathcal{A}_{1}}{\sqrt{2}} - \mathcal{B}_{0}\right] \vert \Psi \rangle \right\Vert^2_{1} \,,\\
4 \sqrt{2} \epsilon &> \langle \Psi \vert \left[\frac{ \mathcal{A}_{1} + (-1)^d \mathcal{A}_{0} }{\sqrt{2}} - \mathcal{B}_{1} \right]^2 \vert \Psi \rangle = \left\Vert \left[ \frac{ \mathcal{A}_{1} + (-1)^d \mathcal{A}_{0} }{\sqrt{2}} - \mathcal{B}_{1} \right] \vert \Psi \rangle \right\Vert^2_{1} \,.\end{align*}
To bound the anticommutation relations $ \left\Vert \left\{ \mathcal{O}_{v_{l}} \mathcal{R}^{(O)}_{d_{(v_{l},v_{r})}},\mathcal{Z}_{v_{l}} \mathcal{R}^{(Z)}_{d_{(v_{l},v_{r})}} \right\} \vert \Psi \rangle \right\Vert_{1}$ and $ \left\Vert \{ \mathcal{X}_{v_{r}},\mathcal{Z}_{v_{r}} \} \vert \Psi \rangle \right\Vert_{1}$, we use Eq.\;\eqref{ali:appsos2} and Eq.\;\eqref{ali:appsos3}, respectively, together with the triangle inequality.
As a result, it is
\begin{align}
\left\Vert \left\{ \mathcal{O}_{v_{l}} \mathcal{R}^{(O)}_{d_{(v_{l},v_{r})}},\mathcal{Z}_{v_{l}} \mathcal{R}^{(Z)}_{d_{(v_{l},v_{r})}} \right\} \vert \Psi \rangle \right\Vert_{1} &< \sqrt{8 \sqrt{2}}(1 + \sqrt{2}) \sqrt{\epsilon} \,, \\
\left\Vert \{ \mathcal{X}_{v_{r}},\mathcal{Z}_{v_{r}} \} \vert \Psi \rangle \right\Vert_{1} &< \sqrt{8 \sqrt{2}}(1 + \sqrt{2}) \sqrt{\epsilon} \,.
\end{align}

With the bound on the anticommutation relations, we can bound the isometry using the triangle inequality whenever the anticommutation relations or relations directly from the measurements are used with more details in~\cite{mckague2011self}.
Note that, when evaluating the isometry on the measurements applied to the state, we do not use additional anticommutation relations; therefore, the bounds of Eq.\;\eqref{ali:robdef1} and Eq.\;\eqref{ali:robdef2} are the same.
Then, a general bound is
\begin{equation}
\delta = \frac{\sqrt{\epsilon}}{2} \Big( (2 \vert V \vert + \vert E \vert ) \, ( c +4l) \Big) \,,
\label{eq:robgenb}
\end{equation}
with the number of vertices $\vert V \vert $ and number of edges $\vert E \vert$ in the (inflated) graph.
The parameter $c_{1}$ depends on the number of measurements needed to establish the first anticommutation relation, and $l$ is the largest distance along the edges from this vertex to any other.
The parameter $c_{2}$ is the number of measurements involved in the mapping of $\mathcal{X}_{u} (\mathcal{Y}_{u})$ to $\mathcal{Z}^{N(u)}$.
For the different reference experiments, the parameters and bounds on $\delta$ are shown in Fig.~\ref{tab:deltas}.

The bounds on $\delta$ are not optimal and can be improved, especially if more is known about the structure of the graph.
Specifically, for larger graphs, it might be useful to combine several reference experiments when required subgraphs occur multiple times in the graph.

\end{document}